\documentclass[11pt,a4paper]{article}
\usepackage[lmargin=1.0in,rmargin=1.0in,bottom=1.0in,top=1.0in,twoside=False]{geometry}

\usepackage{inconsolata}
\usepackage{libertine}

\usepackage{amssymb,amsmath,amsthm}
\usepackage{mathtools}
\usepackage{graphicx}%
\usepackage{enumerate}
\usepackage{tikz}
\usetikzlibrary{shapes}

\usepackage[T1]{fontenc}

\usepackage{enumitem}

\usepackage{microtype}
\usepackage{amsfonts}
\usepackage{comment}
\usepackage{mathrsfs}
\usepackage{todonotes}

\definecolor{blue}{rgb}{0.1,0.2,0.5}
\definecolor{brown}{rgb}{0.6,0.6,0.2}
\usepackage[ocgcolorlinks, linkcolor={blue}, citecolor={brown}]{hyperref}
\usepackage{cleveref}

\usepackage{enumerate}
\usepackage{latexsym}

\newtheorem{lemma}{Lemma}[section]
\newtheorem{claim}{Claim}[section]
\newtheorem{theorem}[lemma]{Theorem}
\newtheorem{corollary}[lemma]{Corollary}
\newtheorem{question}{Question}

\newcommand{\partition}{\mathcal{C}}
\newcommand{\Oh}{\mathcal{O}}
\newcommand{\eps}{\varepsilon}

\renewcommand{\leq}{\leqslant}
\renewcommand{\geq}{\geqslant}
\renewcommand{\le}{\leqslant}

\renewcommand{\setminus}{-}

\newcommand{\Ll}{\mathcal{L}}
\newcommand{\wh}[1]{\widehat{#1}}
\newcommand{\Aa}{\mathcal{A}}
\newcommand{\Dd}{\mathcal{D}}
\newcommand{\Ff}{\mathcal{F}}

\newcommand{\polylog}{\mathrm{polylog}}
\newcommand{\poly}{\mathrm{poly}}
\newcommand{\Prob}{\mathbb{P}}
\newcommand{\Exp}{\mathbb{E}}
\newcommand{\ExpDist}{\mathrm{Exp}}

\newcommand{\len}{\mathrm{length}}

\newcommand{\R}{\mathbb{R}}
\newcommand{\dd}{\mathrm{d}}

\newcommand{\dist}{\mathrm{dist}}

\begin{document}

\author{
\and
Vincent Cohen-Addad\thanks{Google Research, France (\texttt{cohenaddad@google.com})}
\and
Hung Le\thanks{University of Massachusetts Amherst, USA (\texttt{hungle@cs.umass.edu})}
\and
Marcin Pilipczuk\thanks{Institute of Informatics, University of Warsaw, Poland and IT University of Copenhagen, Denmark (\texttt{marcin.pilipczuk@mimuw.edu.pl})}
\and
Michał Pilipczuk\thanks{Institute of Informatics, University of Warsaw, Poland (\texttt{michal.pilipczuk@mimuw.edu.pl})}
}
\title{Planar and Minor-Free Metrics Embed into Metrics of Polylogarithmic Treewidth with Expected Multiplicative Distortion Arbitrarily Close to 1
  \thanks{
 This work is 
a part of projects CUTACOMBS (Ma. P.) and BOBR (Mi. P.) that have received funding from the European Research Council (ERC) 
under the European Union's Horizon 2020 research and innovation programme (grant agreements No.~714704 and~948057, respectively).
Hung Le is supported by the NSF CAREER Award No. CCF-2237288 and an NSF Grant No. CCF-2121952.
Ma. P. is also part of BARC, supported by the VILLUM Foundation grant 16582.}
}

\date{}

\maketitle

\begin{abstract}
We prove that there is a randomized polynomial-time algorithm that given an edge-weighted graph $G$ excluding a fixed-minor $Q$ on $n$ vertices and an accuracy parameter $\eps>0$, constructs an edge-weighted graph~$H$ and an embedding $\eta\colon V(G)\to V(H)$ with the following properties:
\begin{itemize}
 \item For any constant size $Q$, the treewidth of $H$ is polynomial in $\eps^{-1}$, $\log n$, and the logarithm of the stretch of the distance metric in $G$.
 \item The expected multiplicative distortion is $(1+\eps)$: for every pair of vertices $u,v$ of $G$, we have $\dist_H(\eta(u),\eta(v))\geq \dist_G(u,v)$ always and $\Exp[\dist_H(\eta(u),\eta(v))]\leq (1+\eps)\dist_G(u,v)$.
\end{itemize}
Our embedding is the first to achieve polylogarithmic treewidth of the host graph and comes close to the lower bound by Carroll and Goel, who showed that any embedding of a planar graph with $\Oh(1)$ expected
distortion requires the host graph to have treewidth $\Omega(\log n)$. 
It also provides a unified framework for obtaining randomized quasi-polynomial-time approximation schemes for a variety of problems including network design, clustering or routing problems, in minor-free metrics where the optimization goal is the sum of selected distances. Applications include the capacitated vehicle routing problem, and capacitated clustering problems.
\end{abstract}

\section{Introduction}

Tree metrics are among the easiest metrics that an algorithm designer could have to deal with:   a vast number of NP-hard problems in tree metrics can be solved in polynomial time, some even in linear time. Unfortunately, most real-world problems take place in much more complicated metrics.
Hence, the question of how well one can embed a complicated metric into a simpler one, such as for example a tree metric, has naturally emerged in the 90s.
A sequence of works~\cite{Bartal96,Bartal98,FRT04} culminated in showing that any $n$-point metric stochastically embeds into a tree metric such that for
any pair of points, their distance is in expectation preserved up to a multiplicative factor, the \emph{distortion}, of $\Oh(\log n)$; and that such an embedding can be computed
in polynomial time.
This result has a number of algorithmic applications and became a part of the standard toolbox for algorithm design.

Bartal~\cite{Bartal98} showed that the $\Oh(\log n)$ bound on the distortion is the best one can hope for in the worst-case, for arbitrary metric spaces.
To improve upon this bound, one would need to focus on input metrics that exhibit special structure, such as for example shortest-paths metrics in planar and minor-free graphs.
Planar and minor-free metrics are well-studied and have appeared in many practical applications, in particular network design, routing or clustering applications.
In the early 2000s, researchers have thus asked: Could we improve the distortion when the input metric is planar (and the host metrics are trees)?
Caroll and Goel~\cite{CG04} showed that unfortunately, the $\Omega(\log n)$ lower bound also holds when the input metric is planar.

While tree metrics are topologically simple, the ultimate goal of algorithmic designers is to solve algorithmic problems efficiently. In this respect, metrics
induced by graphs of bounded treewith are almost equally powerful while permitting much more complicated structures. Specifically, one can solve an extremely broad
class of problems on graphs of bounded treewidth using dynamic programming on tree decompositions; see~\cite[Chapter~7]{platypus}.
The next question in line was thus whether one can improve over the $\Oh(\log n)$ distortion bound for embedding planar metrics into trees if we instead aim
to embed planar
metrics into bounded treewidth graphs, hence bypassing the lower bound of~\cite{CG04}.

\begin{question}\label{quest-main} Could we embed planar graphs into low-treewidth graphs with small distortion, ideally constant or $(1+\eps)$ for any fixed $\eps \in (0,1)$?
\end{question}

By low treewidth, we mean a constant or even polylogarithmic treewidth -- which would be enough to obtain quasi-polynomial approximation schemes for a large number of NP-hard problems. As we will point out later, constant treewidth is impossible if one insists on $o(\log n)$ distortion.

However, research on \Cref{quest-main} has been riddled with negative results. In particular, any deteministic embedding of planar graphs with  a worst-case constant distortion must have treewidth $\Omega(\sqrt{n})$ of the host graph, which is a nearly trivial bound, since every $n$-vertex planar graph has treewidth $\Oh(\sqrt{n})$~\cite{CG04}. Any randomized embedding of planar graphs into \emph{constant treewidth graphs} must  have expected distortion $\Omega(\log n)$~\cite{CG04,CJVL08}. And any randomized embedding that has  an expected \emph{constant distortion $c$} for \emph{any given constant $c \geq 1$} must have treewidth $\Omega(\log n/c)$~\cite{CG04}.  Given these lower bounds, it seems hopeless to achieve any significant progress on \Cref{quest-main}. On a positive note, some progress has been made recently on embeddings with \emph{additive distortion}, where the input metrics were further restricted to have stretch $2^{\ell}$ for some $\ell > 0$\footnote{The {\em{stretch}} is the ratio between
the maximum pairwise distance and the minimum pairwise distance in a metric.}, and the distortion is additive $+\eps 2^{\ell}$~\cite{EKM13,FKS19,CAFKL20,FL22}. Additive distortion is a much weaker guarantee than multiplicative distortion and is not truly comparable to the $\Oh(\log n)$ multiplicative distortion obtained for general metrics\footnote{If one sets $\eps \approx 2^{-\ell}\log n$, one gets multiplicative distortion $\Oh(\log n)$,
but then the treewidth is proportional to $2^{\ell}/\log n$,
which is way too large in most cases. We are rather concerned with graphs of treewidth polynomial in $\ell$.}. These weaknesses severely limit the kind of algorithmic applications of embeddings with additive distortion.

On the other hand, the aforementioned lower bounds still leave room for  a  slim hope: they do not rule out an embedding with expected distortion $(1+\eps)$ and treewidth $\Oh_{\eps}(\log n)$. The lower bound of treewidth $\Omega(\log n/c)$  mentioned earlier, while holding for any constant distortion $c$, does not exclude the possibility of an embedding with the treewidth of host graph of $\Oh(\log n/\eps)$ for the expected distortion~$(1+\eps)$. Nevertheless, it has remained unclear how to obtain an embedding with any constant distortion,  say, of an $n$-vertex planar graphs into graphs with treewidth significantly better than the trivial bound $\Oh(\sqrt{n})$.  %
We note that the discussion so far applies equally to weighted and unweighted planar graphs; in other words, having the same weight on every edge does not seem  to make the problem easier . The lower bound of Caroll and Goel~\cite{CG04} holds for the unweighted planar grid.

In this work, we show that an embedding with polylogarithmic treewidth is possible for an expected  distortion $(1+\eps)$. Our result is the first positive  progress on \Cref{quest-main}.

 \begin{theorem}\label{thm:main}
 	For every fixed graph $K$ there exists a randomized polynomial-time algorithm
 	that, given an edge-weighted $K$-minor-free graph $G$ and an accuracy parameter $\eps>0$,
 	constructs a probabilistic metric embedding of $G$ with expected distortion $(1+\eps)$
 	into a graph of treedepth\footnote{The {\em{treedepth}} of a graph $G$ is the minimum possible depth (number of vertices on a root-to-leaf path) of an
elimination forest of $G$. The treedepth of a graph is an upper bound on its treewidth.}
 	\[ \Oh_K\left((\ell+\ln n)^6/\eps\cdot \ln^2 n\cdot (\ln n+\ln \ell+\ln (1/\eps))^5\right), \] 
 	where $n$ is the vertex count of $G$ and $\ell$ is the logarithm of the stretch of the metric induced by $G$.
 \end{theorem}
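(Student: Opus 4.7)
My plan is to build the host graph $H$ through a scale-based hierarchical decomposition combined with recursive shortest-path separators for $K$-minor-free graphs, using the resulting \emph{portal} structure both as an explicit elimination forest of polylogarithmic depth and, via randomization over scales, as the source of expected $(1+\eps)$ distortion.

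After normalizing distances to the range $[1, 2^\ell]$, I would set up geometric scales $2^0, 2^1, \ldots, 2^{\Oh(\ell+\log n)}$ and at each scale $2^i$ apply a Klein--Plotkin--Rao-style probabilistic padded decomposition tailored to $K$-minor-free graphs. This produces clusters of diameter $\Oh(2^i)$ in which a fixed pair $u,v$ at distance $d$ is split at scale $i$ only with probability $\Oh((d/2^i)\cdot \polylog(n))$; this padding bound is precisely what converts additive-per-scale error into an overall multiplicative factor. Within each scale-$i$ cluster I would apply a $K$-minor-free shortest-path separator recursively (in the spirit of Abraham--Gavoille and Thorup), obtaining a balanced tree of separators of depth $\Oh(\log n)$. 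On each selected shortest path, whose length is $\Oh(2^i)$, I would place \emph{portals} at spacing $\eps\cdot 2^i/\polylog(n)$, yielding $\Oh(\polylog(n)/\eps)$ portals per separator. The edges of $H$ record shortest-path distances between portals on the same separator together with shortest-path distances from each vertex $v$ to the portals of every separator enclosing it in the recursion. The elimination forest is then assembled top-down by concatenating portal sets level by level, outer scale first and outer separator first, with the original vertices of $G$ placed at the leaves. Multiplying the number of scales, the separator recursion depth, and the per-separator portal count reproduces the polylogarithmic expression in the theorem.

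The main obstacle is the distortion analysis, and in particular showing that error accumulated across scales really is $(1+\eps)$-multiplicative rather than merely a bounded additive fraction of the diameter. For a fixed pair $u,v$ with $\dist_G(u,v) = d$, the expected error contributed by scale $2^i$ is the product of the cutting probability $\Oh((d/2^i)\cdot\polylog(n))$ and the portal-snapping error $\Oh(\eps\cdot 2^i/\polylog(n))$, which telescopes to $\Oh(\eps\cdot d)$ across all $\Oh(\ell+\log n)$ scales and $\Oh(\log n)$ separator levels, provided that one balances the $\polylog$ factors correctly when choosing the portal spacing and that the dominance-lower bound $\dist_H(\eta(u),\eta(v)) \geq \dist_G(u,v)$ is preserved by the portal snapping. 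A second, more combinatorial difficulty is to show that the nested portal sets coming from different scales and different separator levels really do fit into a single elimination forest of the claimed depth, as opposed to merely a tree decomposition of that width: this forces the portal set of each inner separator to lie strictly below (in the elimination forest) the portal sets of all enclosing separators, and hence requires a careful ordering of scales and separator levels along every root-to-leaf branch so that the vertices of $G$ appear only as the final leaves.
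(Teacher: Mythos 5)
There is a genuine gap in the distortion analysis, and it sits exactly at the point that distinguishes this problem from the previously known \emph{additive}-distortion embeddings. Your expected-error accounting multiplies the probability that $u,v$ are split by the \emph{random} padded decomposition at scale $2^i$ (which is indeed $\Oh((d/2^i)\polylog n)$) by the portal-snapping error $\Oh(\eps 2^i/\polylog n)$. But in your construction a pair can also be separated by the \emph{deterministic} shortest-path-separator recursion inside its scale-$i$ cluster, and this event has no small probability attached to it: a pair lying near a chosen separator is split with constant probability at some level of that recursion, regardless of $d$. For such a pair the only routes in $H$ go through portals of separators enclosing both endpoints, so the detour is of order the portal spacing $\eps 2^i/\polylog n$, which for $d\ll 2^i$ is not $\Oh(\eps d)$ --- it is an additive error proportional to the scale. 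One cannot rescue this by rerouting through lower scales either: to make the portal sets fit into an elimination forest of the claimed depth (the very issue you flag at the end), the scale-$(i-1)$ clusters must refine the pieces of the scale-$i$ separator recursion, so once a close pair is cut by an in-cluster separator it never again shares lower-scale structure. This is precisely why the prior portal-on-shortest-path-separator constructions (the works cited in the introduction) only achieve additive distortion $+\eps 2^{\ell}$, and why the paper states that net/portal-based adaptations of Talwar's scheme appear to fail for planar metrics; also, the per-pair lower bound $\dist_H\ge\dist_G$ is the easy part, so the portal density cannot be increased to fix this without blowing up the depth.

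The paper's proof avoids dense portals altogether and injects the randomness in a different place: from the hierarchical clustering it extracts a \emph{cut packing} of $\xi\approx\polylog(n)/\eps$ pairwise non-conflicting balanced cuts made of clusters (this is the technical heart, via the $(a,b,c)$-contraction-sequence treewidth bound for apex-minor-free graphs, Lemmas~\ref{lem:tw} and~\ref{lem:emb:crucial}, lifted to general $K$-minor-free graphs through the Robertson--Seymour structure theorem), picks \emph{one} cut uniformly at random, and designates just one portal per cluster of the chosen cut. A fixed edge then suffers a detour of order the cluster diameter, but only with probability $\Oh(1/\xi)$ times the clustering cut probability, and the refinement of used clusters in the chain ensures each pair is charged only $\Oh(\ell)$ times; the $(1+\eps)$ guarantee comes from this $1/\xi$ factor, not from portal spacing. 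Your proposal is missing an analogue of this mechanism (many disjoint candidate cuts to randomize over), and without it the separator-recursion separations make the expected distortion additive in the scale rather than multiplicative in the distance.
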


In particular, for planar graphs --- which exclude $K_5$ as a minor --- with edge weights in $[1,n^c]$ for any constant $c$, our treewidth bound in  \Cref{thm:main} is $\polylog(n)$, as $\ell = \log(n^c) = \Oh(\log n)$. In this work, we do not attempt to minimize the degree of the polylogarithmic factor bounding the treewidth. 

\paragraph*{Applications.}
\Cref{thm:main} is a part of a broader research agenda, aiming to   design approximation schemes for optimization problems in structured  metrics. Notable examples of structured metrics include Euclidean and doubling metrics. 
For Euclidean metrics, tremendous progress was immediately made and quasi-polynomial-time approximation schemes
(QPTASs) and PTASs for other geometric problems (including $k$-median) were presented in a series of
subsequent works~\cite{AroraRR98,Arora97}. The work of Arora and Mitchell, which uses the notion of
quad-tree decompositions to break the Euclidean metric instances into subinstances of the problem with limited
interactions, quickly became a framework that solved a variety of problems. Talwar~\cite{Tal04} took one step further by showing that any doubling metric (a generalization of Euclidean metric that does not necessitate an embedding
of the points) embeds with $(1+\eps)$ expected distortion into a graph of treewidth at most $\Oh_{d,\eps}(\ell)$, where $\ell$ is the logarithm of the
stretch.
This immediately led to (1) QPTASs for a variety of problems for which the objective is the sum of the lengths of a
multiset of edges, and for which a QPTAS is known for bounded treewidth graphs; and (2) a QPTAS for TSP beyond Euclidean
spaces.

Unfortunately, the picture for shortest-path metrics of graphs such as metrics induced by
the shortest paths of planar graphs or graphs excluding a fixed minor
proved more challenging to address. Indeed, obtaining PTASes, or even QPTASes for TSP, or other network design problems
such as Steiner tree or $k$-MST, or clustering problems has been an important challenge that has been partially resolved
during the last 25 years. An illustrating example is that no PTAS for the $k$-MST problem was known until
last year~\cite{CA22} and no QPTAS is known for the vehicle routing problem (unless the capacities are assumed to be
absolute constant~\cite{CAFKL20,FKS19}), nor for non-uniform facility location in minor-free metrics.
The non-uniform facility location problem is indeed an illustrative example: The question of generalizing results for planar graphs
to more general families of graphs, such as graph excluding a fixed minor,
has been an important research agenda. While Grigni~\cite{Grigni00} gave a QPTAS for TSP for minor-free graphs, it was only recently improved
to EPTAS by Borradaile et al.~\cite{BLW17}. These were recently improved (and extended to the subset
version\footnote{In the subset version, the problem is to visit  given subset of the input vertices.})
by Le~\cite{Le20} and Cohen-Addad et al.~\cite{CAFKL20}. The recent result of Cohen-Addad~\cite{CA22} is a step toward
a framework that generalizes planar results on several network design problems (e.g. Steiner forest) to minor-free metrics.
Unfortunately, the approach fails for more sophisticated objectives such as vehicle routing or clustering (facility location or
$k$-median).

The gap between planar (and more generally minor-free) metrics on the one hand and Euclidean and doubling metrics on the other hand is
mainly due to the embedding to treewidth-$\Oh_{d,\eps}(\ell)$ graphs of Talwar~\cite{Tal04} which provided QPTASs for most metric optimization problems
(except for $k$-means or $k$-center where the objective is to minimize a sum of squared distances or a maximum distance and for which
different techniques were used~\cite{CAKM19,CAFS21,FriggstadRS19,FKS19}). %
Our \Cref{thm:main} overcomes the fundamental embedding bottleneck. As corollaries, we obtain QPTASes for  several problems  where  prior techniques were failing.

The first problem is the Capacitated Vehicle Routing Problem (CVRP) in planar metrics. In this problem, we are given: (i) a set of clients represented by a subset of points in the metric, and  each client has a demand, (ii) a vehicle with capacity $Q$, and (iii) a special point $r$
called a depot. A feasible solution of the CVRP is a collection of tours starting from  and ending at the  depot $r$ where on each tour  the vehicle can collect demand from each client such that the total demand collected is at most the capacity $Q$ and over all tours, the demand of every client is collected.   The goal is to find a feasible collection of tours  with minimum cost.  There are three different variants of CVRP: \emph{unit demands} (where the demand of every client is $1$), \emph{splittable demands} (where the demand of each client can be collected in more than one tour), and \emph{unsplittable demands} (where the demand of each client can only be collected in one tour). There has been a tremendous amount of attention recently on this problem (e.g.~\cite{CAFKL20,FKS19}); Yet all the techniques developed only address the special case where  the capacity is $o(\log \log n)$. This stands in stark contrast with the work of  Das and Mathieu~\cite{DasM10}
which gave a QPTAS for the unrestricted problem in Euclidean spaces of fixed dimension in 2010.

\begin{corollary}\label{cor:CVRP}
	For any $1/2 > \eps>0$, there exists a randomized $(1+\epsilon)$-approximation scheme
	for the unit demand Capacitated Vehicle Routing in minor-free metrics with	running time $n^{\eps^{-\Oh(1)}\log^{\Oh(1)} n }$.  
\end{corollary}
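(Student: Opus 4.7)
The plan is to reduce the CVRP instance via Theorem~\ref{thm:main} to an instance on a graph of polylogarithmic treewidth, and then solve it by dynamic programming. First, I would preprocess the metric to bound its stretch. Let $D$ be the largest depot-to-client distance; then $2D\leq \mathrm{OPT}\leq 2nD$, the upper bound coming from serving each client on its own tour, and, restricting attention to distances among clients and the depot, every such distance is at most $2D$. Every CVRP solution uses at most $2n$ edges in total across its tours, since each tour contributes $(\text{clients})+1$ edges and these sum to $n+(\text{\#tours})\leq 2n$. Hence rounding every distance up to the nearest multiple of $\eps D/n$ perturbs every solution cost by at most $\Oh(\eps)\cdot\mathrm{OPT}$. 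We may thus assume the stretch of the metric is $\Oh(n/\eps)$, so $\ell=\Oh(\log(n/\eps))$.

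Next I would invoke Theorem~\ref{thm:main} with accuracy $\eps/C$ for a suitably large constant $C$. This produces, in randomized polynomial time, a graph $H$ of treewidth $t=\polylog(n)/\eps^{\Oh(1)}$ together with a non-contracting embedding $\eta\colon V(G)\to V(H)$ satisfying $\Exp[\dist_H(\eta(u),\eta(v))]\leq(1+\eps/C)\,\dist_G(u,v)$. Lifting an optimal CVRP solution of $G$ to $H$ via shortest $H$-paths between consecutive visited clients yields, in expectation, a CVRP solution of $H$ of cost at most $(1+\eps/C)\,\mathrm{OPT}(G)$. Conversely, any CVRP solution of $H$ projects to a CVRP solution of $G$ of no larger cost: replace each $H$-subpath between consecutive visited clients by the shortest $G$-path between them, which by non-contraction is no longer.

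Finally, I would solve CVRP optimally on $H$ by a standard bag-by-bag DP on a tree decomposition of width $t$. If $Q\geq n$ the instance is trivial, so assume $Q\leq n$. A DP state for a bag $B$ records (i)~a partition of a subset of $B$ describing how partial tours traverse $B$ (Bell-number many, at most $t^{\Oh(t)}$), and (ii)~for each partial tour component, the number of clients collected so far and a flag indicating whether the depot is in that component (contributing $(2(n+1))^{\Oh(t)}$). The total number of states per bag is thus $n^{\Oh(t)}$, so the DP runs in time $n^{\Oh(t)}=n^{\eps^{-\Oh(1)}\log^{\Oh(1)} n}$, matching the claimed bound. Projecting the output back to $G$ by the rule of the previous paragraph yields a CVRP solution of expected cost at most $(1+\eps)\,\mathrm{OPT}(G)$, as required.

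The main technical obstacle will be carefully engineering the DP in the third step: one must track capacity accounting across partial tours when bags are joined, allow Steiner vertices of $H$ (vertices not corresponding to any client of $G$) to be traversed freely without consuming capacity, and ensure that partial tours are sealed off only at the depot. While these are standard extensions of the classical connected-subgraph DP on tree decompositions, some care is needed because CVRP tours in $H$ are walks rather than simple paths; this is handled by encoding each tour as its ordered sequence of client visits, with shortest $H$-paths between consecutive visits.
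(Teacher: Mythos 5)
Your reduction framework (normalize the weights to bound the stretch, embed via Theorem~\ref{thm:main}, solve on the low-treewidth host $H$, and project back using non-contraction) is the same as the paper's. The gap is in the third step: the claim that unit-demand CVRP can be solved \emph{optimally} on a treewidth-$t$ graph by a DP with $n^{\Oh(t)}$ states is not correct, and the proposed state space does not capture the problem. Your state for a bag $B$ is a partition of a subset of $B$ with one capacity counter per block, which implicitly assumes that the partial tours crossing $B$ occupy disjoint sets of bag vertices and that only $\Oh(t)$ tours are ``open'' at a bag. Neither holds: tours in CVRP are walks that may share vertices (all of them share the depot, and they may freely pass through Steiner and client vertices), and a separator of size $t+1$ can be crossed by $\Omega(n/Q)$ distinct tours, each of which must carry its own remaining-capacity information so that no tour exceeds $Q$. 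A faithful exact DP would therefore have to record a multiset of up to $\Omega(n/Q)$ capacity values per bag, which is quasi-exponentially many configurations, not $n^{\Oh(t)}$. This is not a presentational issue: unit-demand CVRP is NP-hard already on trees (treewidth $1$), so an exact $n^{\Oh(t)}$ algorithm of the kind you describe would imply $\mathrm{P}=\mathrm{NP}$.

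This difficulty is precisely why the paper does not solve the host instance exactly but instead invokes the QPTAS of Jayaprakash and Salavatipour for bounded-treewidth graphs (Theorem~\ref{thm:JS23}), which tracks tour capacities only approximately (bucketing the capacity usage of the tours crossing a bag) and returns a $(1+\eps)$-approximate solution in time $n^{\Oh(k^2\log^3 n/\eps)}$ for treewidth $k$. Plugging in $k=\polylog(n)/\eps^{\Oh(1)}$ gives the claimed running time, and the approximation analysis then has to account for the compounding of two $(1+\eps)$ factors (the expected distortion of the embedding and the approximation on $H$) plus the normalization error, rather than the single factor your write-up assumes. If you replace your exact DP by an appeal to such a bounded-treewidth QPTAS and adjust the error accounting accordingly, the rest of your argument goes through and matches the paper's proof.
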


The proof follows from Theorem~\ref{thm:main} and the QPTAS of Jayaprakash and Salavatipour~\cite{10.1145/3582500} for bounded treewidth graphs. We refer the readers to \Cref{sec:alg-ap} for more details.

We also obtain the first QPTAS for the classic Facility Location problem in minor-free metrics. Here, the input consists of a metric space, two sets of points $C, F$, and a cost function
$f\colon F \mapsto \mathbb{R}$. The goal is to identify a set $S \subseteq F$ so as to minimize $\sum_{c \in C} \min_{q \in S} \dist(c, q)  + \sum_{q \in S} f(q)$.
While a QPTAS for the problem on Euclidean metrics of fixed dimension has been known since the late 90s~\cite{AroraRR98} and a PTAS later on~\cite{S0097539702404055,CAFS21},
the first PTAS for planar metrics only dates back to 2019~\cite{CPP19} (see also~\cite{CAKM19,Cohen-AddadPP19} for the case with uniform costs).
Yet, an approximation scheme in minor-free metrics has eluded the community until now.

\begin{corollary}\label{cor:FLP}
	For any $1/2 > \eps>0$, there exists a randomized  $(1+\epsilon)$-approximation scheme
	for the Facility Location problem in
	minor-free metrics with
	running time $n^{\eps^{-\Oh(1)}\log^{\Oh(1)} n }$.
\end{corollary}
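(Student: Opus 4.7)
The plan is to proceed in exact parallel to the proof of Corollary~\ref{cor:CVRP}: I would first reduce the Facility Location instance to one on a host graph of polylogarithmic treewidth via Theorem~\ref{thm:main}, then solve the transported instance by a dynamic program over an elimination forest, and finally pull the obtained solution back through the embedding. The $(1+\eps)$ guarantee then follows by linearity of expectation applied to the sum-of-distances form of the objective, combined with the one-sided distortion guarantee.

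Concretely, given an instance with underlying minor-free graph $G$, client set $C$, facility set $F$, and opening cost $f$, I would first preprocess to ensure polynomial stretch: using any constant-factor polynomial-time approximation $\wh{\mathrm{OPT}}$ of the optimum (available for metric Facility Location), I discard all facilities of opening cost above $\wh{\mathrm{OPT}}/\eps$ and all clients at distance more than $\wh{\mathrm{OPT}}/\eps$ from the closest surviving facility, handling each discard separately at a $(1+\Oh(\eps))$ cost, and snap the remaining numerical values to a grid so that the ratio of the largest to the smallest relevant distance is $\poly(n/\eps)$. This makes $\ell=\Oh(\log(n/\eps))$, and applying Theorem~\ref{thm:main} with accuracy parameter $\eps/3$ yields, in randomized polynomial time, an embedding $\eta\colon V(G)\to V(H)$ into a host graph $H$ of treedepth, and hence treewidth, $t=\eps^{-\Oh(1)}\log^{\Oh(1)} n$.

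Next, transport the instance to $H$ by taking $\eta(C)$ as clients and $\eta(F)$ as facilities with inherited opening costs, and solve Facility Location on $H$ exactly by a standard elimination-forest dynamic program: at each node the state records which vertices on the current root-to-node path are opened facilities together with, for each such path vertex, the identity of the open facility serving it; since the path has length at most $t$ and the identity ranges over $V(H)$, the state space per node is $n^{\Oh(t)}$, giving total running time $n^{\Oh(t)}=n^{\eps^{-\Oh(1)}\log^{\Oh(1)} n}$. Let $\wh S\subseteq F$ be the solution returned after pulling back through $\eta^{-1}$. The guarantee $\dist_H(\eta(u),\eta(v))\geq\dist_G(u,v)$ implies $\mathrm{cost}_G(\wh S)\leq \mathrm{cost}_H(\eta(\wh S))$, while for the $G$-optimum $S^\star$, linearity of expectation yields $\Exp[\mathrm{cost}_H(\eta(S^\star))]\leq (1+\eps/3)\,\mathrm{cost}_G(S^\star)$; chaining with the optimality of the DP on $H$ gives $\Exp[\mathrm{cost}_G(\wh S)]\leq (1+\eps/3)\,\mathrm{OPT}_G$, which by Markov's inequality and $\Oh(\log n)$-fold repetition boosts to a $(1+\eps)$-approximation with high probability.

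The main obstacle is not the reduction itself but ensuring that the bounded-treewidth dynamic program for Facility Location fits within the desired running time. Unlike set-covering problems where bag states are Boolean, here the state must track, for each path vertex, the \emph{identity} of its assigned open facility across bag boundaries, since a client's nearest open facility may lie arbitrarily far in the tree decomposition; this is precisely what forces the $n^{\Oh(t)}$ rather than $2^{\Oh(t)}\poly(n)$ dependence, and one must check that even with $t=\polylog(n)$ the resulting bound $n^{\eps^{-\Oh(1)}\log^{\Oh(1)} n}$ still matches the target. The rest of the argument is a routine distortion calculation that would closely parallel the CVRP case.
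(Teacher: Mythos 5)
Your proposal follows essentially the same route as the paper: normalize the instance to polynomial stretch using a crude approximation, embed via Theorem~\ref{thm:main} into a host graph of treewidth $\eps^{-\Oh(1)}\log^{\Oh(1)}n$, solve Facility Location there by the standard $n^{\Oh(\mathrm{treewidth})}$ dynamic program, lift back using the non-contracting property, and bound the expected cost by linearity of expectation exactly as in the CVRP analysis. Your extra details (constant-factor approximation instead of the paper's $\Oh(\log n)$ tree-embedding bound for normalization, the explicit elimination-forest DP state, and the Markov-plus-repetition boosting) are all fine and consistent with what the paper asserts.
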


In the Capacitated $k$-Median problem, we have two sets of points $C, F$ and a capacity constraint $\mu \colon C \mapsto \mathbb{N}$. The goal is to find a set of $k$ centers from $F$ and assignments of the  points of $C$ to the centers such that each center $i \in F$ is assigned at most $\mu(i)$ points, and the total distance from each point to its assigned center is minimum. The  Capacitated Facility Location problem is similar, but now each point  $i$ has an associated opening cost $f(i)$ (for opening a facility at that point) similar to the Facility Location problem above, and the goal is to open a set of facilities $F$  and assignment of other points to each facility so that: (a) each facility in $i \in S$ is assigned at most $\mu(i)$ points, and (b) the total cost $\sum_{i\in S}f(i) + \sum_{p}d(p, S(p))$ is minimum, where $S(p)$ is the facility to which $p$ is assigned, and $d(\cdot, \cdot)$ is the distance function of the input metric. Again, these problems have been
known to admits a QPTAS for Euclidean metrics of fixed dimension~\cite{Cohen-Addad20}, but nothing better than the general metric $\Oh(\log k)$-approximation algorithm was known for
planar (and so minor-free) metrics.

\begin{corollary}\label{cor:clustering}
	For any $1/2 > \eps>0$, there exists a randomized  $(1+\epsilon)$-approximation scheme
	for Capacitated $k$-Median and Capacitated Facility Location in
	minor-free metrics with
	running time $n^{\eps^{-\Oh(1)}\log^{\Oh(1)} n }$.
\end{corollary}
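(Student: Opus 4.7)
The plan is to reduce both clustering problems to their counterparts on a host graph of polylogarithmic treewidth via Theorem~\ref{thm:main}, and then invoke a QPTAS tailored to bounded-treewidth graphs. Given an instance of Capacitated $k$-Median or Capacitated Facility Location on a $K$-minor-free graph $G$ with accuracy $\eps$, I would first apply Theorem~\ref{thm:main} with parameter $\eps/3$ to obtain a random embedding $\eta\colon V(G)\to V(H)$, where $H$ has treedepth (and hence treewidth) $\polylog(n)/\poly(\eps^{-1})$. Then I transfer the instance to $H$: the sets $C$ and $F$ become $\eta(C)$ and $\eta(F)$, while the opening costs $f(\cdot)$, the capacities $\mu(\cdot)$, and the client demands are copied through $\eta$ unchanged. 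Note that $\eta$ is necessarily injective, as otherwise the no-contraction property would fail.

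Next, I invoke a QPTAS on $H$ to obtain a $(1+\eps/3)$-approximate solution $S_H$ to the transferred instance, and pull $S_H$ back to $G$ via $\eta^{-1}$: open the preimages of the selected facilities and keep the induced client-to-facility assignment. The approximation analysis is the standard two-step argument. The no-contraction property $\dist_H(\eta(u),\eta(v))\ge \dist_G(u,v)$ gives $\mathrm{cost}_G(\eta^{-1}(S_H))\le \mathrm{cost}_H(S_H)$; whereas the $(1+\eps/3)$ expected distortion implies that any optimal $G$-solution remains feasible after transfer with expected cost at most $(1+\eps/3)\cdot\mathrm{OPT}_G$, and hence $\Exp[\mathrm{OPT}_H]\le (1+\eps/3)\cdot\mathrm{OPT}_G$. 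Composing the two inequalities yields $\Exp[\mathrm{cost}_G(\eta^{-1}(S_H))]\le (1+\eps/3)^2\cdot\mathrm{OPT}_G\le (1+\eps)\cdot\mathrm{OPT}_G$. With a treewidth-DP running in time $n^{\Oh(\mathrm{tw})}$, the overall running time is $n^{\eps^{-\Oh(1)}\log^{\Oh(1)} n}$, as claimed.

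The main obstacle is furnishing the QPTAS on bounded-treewidth graphs for the capacitated variants; unlike their uncapacitated counterparts, a naive tree-decomposition DP would have to remember the residual capacity of every facility whose bag is currently active, and this blows up the state space when capacities are large. My plan to address this is to first rescale and round opening costs, capacities, and client demands to the nearest power of $1+\Theta(\eps)$, losing only a $(1+\Oh(\eps))$ factor in the objective and leaving $\Oh(\eps^{-1}\log n)$ distinct values for each quantity; then, using a bag-wise portal/summary abstraction in which each bag of the decomposition of $H$ remembers, in discretized form, the aggregated demand routed to each of its open facilities, the DP has $(\eps^{-1}\log n)^{\Oh(\mathrm{tw})}$ states per bag. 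This matches the running time in the statement, and the detailed verification is deferred to \Cref{sec:alg-ap}.
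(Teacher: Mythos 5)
There is a genuine gap: you apply Theorem~\ref{thm:main} directly to the input and assert that the host graph has treedepth $\polylog(n)\cdot\poly(1/\eps)$, but the bound in Theorem~\ref{thm:main} is polynomial in $\ell$, the logarithm of the \emph{stretch} of the input metric, which for arbitrary edge weights can be polynomial in $n$ or worse. Without a preprocessing step your running time is $n^{\poly(\ell,\log n,1/\eps)}$, not $n^{\eps^{-\Oh(1)}\log^{\Oh(1)}n}$. The paper's proof handles this explicitly (Step~1 of the framework in Section~\ref{sec:alg-ap}): it first computes a crude $\Oh(\log n)$-approximation of the optimum via the tree embedding of~\cite{FRT04}, then rounds up all edges shorter than roughly $\eps\cdot\mathrm{(approx)}/\poly(n)$ and discards edges longer than the approximate value, so that after rescaling all weights lie in $[1,n^{\Oh(1)}]$ and hence $\ell=\Oh(\log n)$; the additive error incurred by this normalization is only $\eps\cdot\mathrm{OPT}$ and must be (and is) accounted for in the final bound. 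Your proposal is silent on all of this, and it is precisely the point the paper flags with ``if we skip the normalization in Step~1, our algorithm will have a factor of $2^{\poly(\ell)}$ in the running time.''

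A secondary weakness is the bounded-treewidth subroutine, which you defer. The paper does not need a QPTAS there: for Capacitated $k$-Median and Capacitated Facility Location it uses an \emph{exact} dynamic program in time $n^{\Oh(k)}$ on treewidth~$k$, which is already quasi-polynomial once $k=\polylog(n)\cdot\poly(1/\eps)$ (the natural DP tracks, per bag vertex, the integral amount of client demand crossing the bag, so there are only $n^{\Oh(k)}$ states; clients have unit demand in the problem as defined). Your alternative---rounding opening costs, capacities, and demands to powers of $1+\Theta(\eps)$---is both unnecessary and risky for capacitated problems: rounding capacities down can destroy feasibility and rounding them up yields solutions that violate the capacities, so the claimed $(1+\Oh(\eps))$ loss does not follow without further argument. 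Your distortion accounting (no contraction in one direction, expected $(1+\eps/3)$ transfer of the optimum in the other, composed by linearity of expectation) matches the paper's CVRP-style analysis and is fine.
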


The algorithms in \Cref{cor:clustering} follows the same line of the algorithm for the CVRP: embed the input metric to a low-treewidth graph, and solve the low-treewidth instances in quasi-polynomial time.   

In Euclidean/doubling metrics, many problems were first known to admit QPTASes and then were subsequently improved to PTASes with additional ideas tailoring for each problem.  Following the same pattern, we hope that additional ideas will be developed to improve our QPTASes to PTASes.

\paragraph*{Our techniques.}
For simplicity of exposition, we assume that $\ell = \Oh(\log n)$.
A basic idea in the embedding of Talwar~\cite{Tal04} for metrics of doubling dimension is a randomized ball carving: given a graph $C$ of diameter $D$ in a doubling metric, one can stochastically partition $C$ (using balls of random radii) into a set of clusters, say $\mathcal{P}$, of diameter at most $D/2$ such that the probability of cutting every two points $x$ and $y$ into two different clusters 
is bounded by $\Oh(\frac{\dist(u,v)}{D})$.
Then we construct an $(\eps D/\log n)$-net $N$ of $C$ and designated it as \emph{portals}: in the output host graph, we connect every $x\in N$ to every other point of $C$ by an edge of length equal to the distance of the endpoints in $G$,
and recurse on each cluster in $\mathcal{P}$. Due to the bounded doubling dimension assumption,
the size of $N$ is bounded polynomially in $\log n/\eps$, which finally results in polylogarithmic treewidth. 
For every edge $uv$ in $G$, each of the $\Oh(\log n)$ levels of recursion incurs an expected distortion of
$\Oh(\frac{\eps}{\log n})\dist(u,v)$: the edge $uv$ is cut with probability $\Oh(\frac{\dist(u,v)}{D})$ 
and, thanks to the properties of the chosen net $N$, the detour in the case of cutting $uv$ is bounded by
$\Oh(\frac{\eps}{\log n} D)$. 

One could try to apply Talwar's technique for planar metrics. For the stochastic decomposition into clusters,
the decomposition of  Klein, Plotkin, and Rao~\cite{KPR93} provides the same probability of cutting any two points into two different clusters. However, the step of choosing an $(\eps D/\log n)$-net  $N$ is problematic: we now can no longer bound the size of $N$. In fact, even in as simple graphs as stars it may happen that $|N| = n$. Thus, obtaining any non-trivial treewidth bound following this approach seems very~difficult. On the other hand, stars have already pretty good treewidth.

We thus aims at designing an approach that implicitly distinguishes grid-like planar graphs, to which Talwar's decomposition could be applied, from
planar metrics that are already ``tree-like shaped''.
As obtaining a small $r$-net for $r \ll D$ seems impossible in planar graphs, we aim at lowering the probability that $uv$ is cut
in a way that incurs a detour $\Omega(D)$. 
Consider the following example. Assume that we applied the aforementioned clustering scheme to a graph $C$ and we obtained a partition
$\mathcal{P}$ into clusters that are arranged in a ``grid-like'' fashion: the quotient graph $C/\mathcal{P}$
(defined as a graph with the vertex set $\mathcal{P}$ and two elements $A,B \in \mathcal{P}$ adjacent if $G$ contains an edge between $A$ and $B$)
is a grid. Then, instead of cutting out \emph{all} clusters of $\mathcal{P}$, we would like to do the following: 
choose randomly one of the middle $\xi$ columns of the grid $C/\mathcal{P}$ (for a parameter $\xi$ to be chosen later), carve out the clusters on that column
as subgraphs for recursion, and also recurse on the remaining left and right parts of the grid.
Instead of a net, we designate as portals only one vertex per cluster from the chosen column of the grid. 
In the recursion, the recursive calls for clusters on the chosen column have multiplicatively smaller diameter, while
(with enough columns in the entire grid) the left and right parts have at most $0.75|V(C)|$ vertices;
this gives $\Oh(\ell \log n) = \Oh(\log^2 n)$
bound
on the recursion depth.
For the detour bound, note that now an edge $uv$ has probability $\Oh(\frac{\dist(u,v)}{D})$ of being chosen on 
a boundary of the cluster \emph{times} the probability~$\frac{1}{\xi}$ that this cluster is actually 
in the chosen column. Since an edge that is actually cut experiences a detour of~$\Oh(D)$,
the expected additive distortion due to the detours imposed by the entire process is thus $\Oh(\log^2 n \cdot \frac{1}{\xi} \cdot \frac{\dist(u,v)}{D} D) \sim \Oh(\frac{\log^2 n}{\xi}\dist(u,v))$, and choosing
$\xi$ to be $\log^2 n/\eps$ (which may lead to $\Oh(\xi \log^2 n) = \Oh(\log^4 n/\eps)$ final treewidth bound on the output embedding) gives the desired bound. %

The main argument of our approach is a proof that this set of ``$\xi$ columns to randomly choose from'' can always be found, 
even if $C/\mathcal{P}$ does not look like a grid. 
To this end, we need another property of the clustering algorithm: with high probability,
 the diameter of the quotient graph $C/\mathcal{P}$ is bounded by a polylogarithmic function of $n$, which we henceforth denote
 by $\rho$. 
Then, the treewidth of $C/\mathcal{P}$ is bounded by $\Oh(\rho)$, which gives us the first ``column'': a set $\Aa_1$ of $\Oh(\rho)$
of clusters, whose deletion leaves connected components of multiplicatively smaller size. We henceforth refer to
such a set of clusters as a \emph{balanced cut}.

We need more balanced cuts to choose from and we would like the balanced cuts to be mostly disjoint in the sense of
not containing the same edge of $G$ on their boundary many times.
However, there is a number of star-like examples where this is not possible, i.e., there is a cluster contained
in every balanced cut. We proceed as follows: after picking the first balanced cut $\Aa_1$ as in the previous paragraph,
we refine te partition $\mathcal{P}$ by, for every cluster $C_1 \in \Aa_1$, 
replacing $C_1$ with the result of applying again the clustering scheme to $C_1$ and diameter $D/2$. 
Then, we find a balanced cut $\Aa_2$ in the obtained partition $\mathcal{P}_2$, and continue to $\mathcal{P}_3$
by again replacing the clusters used by $\Aa_2$ with a result of applying the clustering scheme to them. 
The replacement process stop at single-vertex clusters, where cutting would not cause any detour. 

Now, every edge $uv$ is contained in the boundary of clusters larger than single vertices in $\Oh(\ell) = \Oh(\log n)$
balanced cuts $\Aa_i$. Hence, if we produce $\Omega(\frac{\log^3 n}{\eps})$ balanced cuts, the expected detour
analysis will still give the desired result. But there is a problem: as we expand the partitions $\mathcal{P}_i$,
the diameter and hence the bound on the treewidth of $C/\mathcal{P}_i$ grows, so our guarantee on the size of $\Aa_i$ detoriates
as $i$ increases.

This is the moment where the topological assumption on the input graph finally comes into play.
Naively, the difference between the diameters of $C/\mathcal{P}_{i+1}$ and $C/\mathcal{P}_i$ can be as large
as $\Omega(|\Aa_i| \cdot \rho)$, as every cluster of $\Aa_i$ is replaced with a set of clusters whose quotient
graph has diameter $\rho$. However, if we assume that the given graph excludes some fixed apex graph\footnote{An {\em{apex graph}} is a graph that can be made planar by removing one vertex.} as a minor, then we can prove that after $k$ steps, the diameter of $C/\mathcal{P}_k$ is only
$\Oh(\sqrt{\sum_{i=1}^{k-1} |\Aa_i|}\cdot \ell \rho)$. The key graph theoretic property of apex-minor-free graphs used here is the following: to cover an apex-minor-free (unweighted) graph of treewidth $\geq t$ with balls of radius $d$, one has to pick at least $\Omega((t/d)^2)$ such balls.
This proves that if one continues the process of finding and unraveling cuts for $\xi$ steps, where $\xi$ is any large polylogarithm one wishes for, 
all created balanced cuts will have sizes bounded by another polylogarithmic bound $\tau$ depending only on $\xi$. 
This, in turn, gives the final polylogarithmic bound on the treewidth of the obtained embedding.

This concludes the overview of the proof of Theorem~\ref{thm:main} under the assumption that the input graph excludes a fixed apex graph as a minor; this in particular includes planar graphs, which exclude $K_5$.
To extend this to the setting when any fixed graph $K$ is excluded, we start by applying the Robertson-Seymour structure
theorem~\cite{gm16} that essentially states that the input graph $G$ admits a tree decomposition where the torso
of every bag consists of a graph coming from a fixed apex-minor-free graph class, with a constant number of vertices added to it.
We then carefully apply our technique to the apex-minor-free part of the centroid bag of that decomposition.

\paragraph*{A bit more on related works.}
 There is a long line of work on embedding planar and minor-free metrics into Euclidean and normed spaces.  Rao~\cite{Rao99} constructed an embedding of planar metrics into Euclidean spaces with distortion $\Oh(\sqrt{\log n})$.  For any $\ell_p$ space with $p\geq 1$, Rao obtained an embedding with $\Oh((\log n)^{1/p})$ distortion.  Rao's results were extended to  minor-free metrics~\cite{AGGNT19,KLMN04}; the distortion is $\Oh(h^{1-1/p}(\log n)^{1/p})$ where $h$ is the size of the excluded minor. Better embedding results were known for graphs of small pathwidth~\cite{AFGN22,LS13}. For the Euclidean case,  Newman and Rabinovich~\cite{NR02} provided a matching lower bound for Rao's embedding, showing that distortion $\Oh(\sqrt{\log n})$ achieved by Rao is the best possible.  
 
 Of special interest is the case of $\ell_1$. The famous $\ell_1$ embedding conjecture~\cite{GNRS04} states that planar metrics  are embeddable into $\ell_1$ with $\Oh(1)$  distortion. The conjecture is  believed to hold for general minor-free metrics~\cite{GNRS04,LS09}. While the conjecture remains open even for planar metrics, progress has been made   for various special cases: $k$-outerplanar graphs~\cite{GNRS04}, graphs with a small face cover~\cite{KLR19,Filtser20}, or when every demand pair has a face containing both endpoints~\cite{Kumar2022}. We refer readers to~\cite{AFGN22} and  references therein for more embedding results.

 \paragraph*{Organization.}
We present our clustering tool in Section~\ref{sec:clustering}.
Sections~\ref{sec:tw} and~\ref{sec:embed} prove Theorem~\ref{thm:main} for the case of graph classes excluding a fixed apex graph as a minor.
Then, in Section~\ref{sec:minor}, we discuss how to extend the proof to graphs excluding an arbitrary fixed minor.
Section~\ref{sec:alg-ap} provides details on the aforementioned applications of Theorem~\ref{thm:main}.

\newcommand{\bag}{\mathsf{bag}}

\section{Preliminaries}

For a real $t$, we sometimes write $\exp(t)\coloneqq e^t$, where $e$ is the base of the natural logarithm. We use the Iverson bracket notation: for a boolean assertion $\psi$, $[\psi]$ is $1$ if $\psi$ holds and $0$ otherwise.

\paragraph*{Graphs.} We use standard graph notation. Most of graphs considered in this paper are edge-weighted, which means that with every edge $e$ we associate a positive weight $\len(e)$, interpreted as the length of $e$. Hence, for an edge-weighted graph $G$ we can consider the {\em{distance metric}} induced in $G$: for vertices $u,v$, $\dist_G(u,v)$ is the smallest possible length of a path connecting $u$ and $v$, or $+\infty$ if such a path does not exist, where the {\em{length}} of a path $P$, denoted $\len(P)$, is the sum of the weights of edges of $P$. The {\em{stretch}} of this metric is the ratio between the maximum finite distance and the minimum distance between two distinct vertices in $G$.

The notions of diameter and radius in an edge-weighted graph $G$ are defined with respect to the distance metric induced in $G$. When speaking about distances in an unweighted graph, or about its diameter or radius, we mean the standard graph-theoretic notions, where every edge is considered to have length $1$. 

For a graph $G$ and a partition $\partition$ of the vertex set of $G$, we define the {\em{quotient graph}} $G/\partition$ as the unweighted graph on vertex set $\partition$ where parts $C,C'\in \partition$ are adjacent if in $G$ there is an edge with one endpoint in $C$ and the other in $C'$. Note that while $G$ can be edge-weighted, we always consider $G/\partition$ to be an unweighted graph.

\paragraph*{Decompositions.}
A {\em{tree decomposition}} of a graph $G$ is a tree $T$ together with a function $\bag(\cdot)$ that with every node $x$ of $T$ associates its bag $\bag(x)\subseteq V(G)$ so that (i) for every vertex $u$ of $G$, the nodes of $T$ whose bags contain $u$ form a connected subtree of $T$, and (ii) whenever $u$ and $v$ are adjacent in $G$, there is a node in $T$ whose bag contains both $u$ and $v$. The {\em{width}} of a tree decomposition is the maximum size of a bag minus $1$, and the {\em{treewidth}} of a graph $G$ is the minimum possible width of a tree decomposition of $G$.

An {\em{elimination forest}} of a graph $G$ is a rooted forest $F$ with the same vertex set as $G$ satisfying the following property: whenever $u$ and $v$ are adjacent in $G$, either $u$ is an ancestor of $v$ in $F$ or vice versa. The {\em{treedepth}} of a graph $G$ is the minimum possible depth (number of vertices on a root-to-leaf path) of an elimination forest of $G$. 

It is easy to see that the treewidth of a graph is never larger than the treedepth minus $1$: to construct a tree decomposition $(T,\bag)$ of width $d-1$ from an elimination forest $F$ of depth $d$, take $T$ to be~$F$ with all roots connected into a path, and for every vertex $u$ set $\bag(u)$ to be the set of ancestors of $u$ in~$F$. On the other hand, it is well-known that the treedepth of an $n$-vertex graph is never larger than $\log_2 n$ times its treewidth plus $1$~\cite[Corollary~6.1]{sparsity}. Thus, from the point of view of constructing metric embeddings into graphs of polylogarithmic treewidth, we can equivalently consider embeddings into graphs of polylogarithmic treedepth instead.

We use the standard minor order for graphs. A graph $K$ is an {\em{apex graph}} if there is a vertex $u$ such that $K-u$ is planar. The following lemma, which follows from the previous literature, is the sole argument in which we exploit the topological simplicity of the considered graphs.

\begin{lemma}\label{lem:tw:balls}
For every apex graph $K$ there is a positive integer $\alpha_K$ such that the following holds. Suppose $G$ is a $K$-minor-free (unweighted) graph and $Z$ is a subset of vertices of $G$ such that every vertex of $G$ is at distance at most $d$ from a vertex of $Z$. Then the treewidth of $G$ is upper bounded by $\alpha_K d \sqrt{|Z|}$.
\end{lemma}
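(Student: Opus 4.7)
The plan is to combine the Alon--Seymour--Thomas treewidth bound for minor-free graphs with Eppstein's theorem that apex-minor-free classes have linear local treewidth, and to glue the two via a standard product-style tree-decomposition lifting.

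First I would partition $V(G)$ by running BFS from $Z$: assign every vertex to the element of $Z$ closest to it (breaking ties arbitrarily) and let $V_z$ denote the set assigned to $z \in Z$. Each $V_z$ is connected and has a BFS tree rooted at $z$ of depth at most $d$, so in particular $V_z$ is contained in the ball $B_d(z)$ of $G$. Let $\pi = \{V_z : z \in Z\}$ and consider the quotient graph $G' := G/\pi$. Since $G'$ is a minor of $G$, it is still $K$-minor-free, and $|V(G')| = |Z|$. By the Alon--Seymour--Thomas theorem in its form for $K$-minor-free classes, $\mathrm{tw}(G') \leq c_K \sqrt{|Z|}$; this is what ultimately gives the $\sqrt{|Z|}$ factor in the conclusion.

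For the $d$ factor I would bound the treewidth of each induced subgraph $G[V_z]$. Because $V_z \subseteq B_d(z)$ and $G$ is $K$-minor-free with $K$ apex, Eppstein's linear local treewidth theorem yields $\mathrm{tw}(G[V_z]) \leq c'_K \cdot d$. This is the only place where the \emph{apex} hypothesis on $K$ is used, and it is essentially unavoidable since linear local treewidth is precisely the property that characterises apex-minor-free classes. I would then invoke the standard inflation lemma: given a tree decomposition of $G'$ of width $t'$ together with, for every $z \in Z$, a tree decomposition of $G[V_z]$ of width at most $t$, one can construct a tree decomposition of $G$ of width at most $(t'+1)(t+1)-1$. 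Substituting $t' = c_K \sqrt{|Z|}$ and $t = c'_K d$ yields the desired bound $\mathrm{tw}(G) \leq \alpha_K d \sqrt{|Z|}$ for some $\alpha_K$ depending only on $K$.

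The main obstacle is making the inflation lemma precise. One refines each node $x$ of the tree decomposition of $G'$ into a subtree indexed by configurations that pick one node in the inner tree decomposition of each branch set $V_z$ with $z \in \beta(x)$, and defines the bag at such a configuration to be the union of the corresponding inner bags. Cross-branch edges $uv$ with $u \in V_{z_1}$, $v \in V_{z_2}$, $z_1 \neq z_2$ are covered because $z_1 z_2 \in E(G')$ forces $z_1, z_2$ to coexist in some bag of the outer decomposition, and inside that bag's expansion a configuration can be chosen that contains both $u$ and $v$. The finicky part is engineering the configuration tree so that the bag-connectivity axiom holds simultaneously for every vertex of $G$, both within a single branch set and across adjacent ones, but this is a routine construction already present in the literature.
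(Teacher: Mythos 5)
The fatal step is the ``inflation lemma''. As you state it --- a tree decomposition of the quotient $G'=G/\pi$ of width $t'$ plus tree decompositions of width $t$ of each part yields a tree decomposition of $G$ of width $(t'+1)(t+1)-1$ --- it is simply false, and no configuration-tree construction can repair it, because the inner decompositions of two adjacent parts carry no information about where the cross edges attach. Concretely, let $G$ consist of two paths $u_1\ldots u_n$ and $v_1\ldots v_n$ together with all edges $u_iv_j$, and let $\pi$ have the two paths as its parts. Both parts are connected and induce graphs of treewidth $1$, the quotient is $K_2$ (treewidth $1$), yet $\mathrm{tw}(G)\geq n$ since $G$ contains $K_{n,n}$. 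So the claimed bound fails arbitrarily badly; the ``finicky'' connectivity issue you defer is not a technicality but the precise point where the statement breaks. Note that your argument invokes the apex-minor-free hypothesis only \emph{inside} each Voronoi cell (via linear local treewidth) and treats the gluing as a purely combinatorial fact about quotients; but the entire content of the lemma is controlling how the radius-$d$ cells interact globally, and that is exactly where the excluded minor must be used again. (A smaller issue: linear local treewidth of apex-minor-free classes is due to Demaine and Hajiaghayi; Eppstein's theorem gives bounded, not necessarily linear, local treewidth. That is a citation matter, not a mathematical one.)

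For comparison, the paper's proof is contrapositive and bidimensional: if $\mathrm{tw}(G)=t$ then, by the result of Fomin, Golovach and Thilikos for apex-minor-free classes (Demaine et al.\ in the planar case), $G$ contains a triangulated $s\times s$ grid as a contraction with $s\in\Omega_K(t)$; a distance-$d$ dominating set does not shrink under contraction and must have size $\Omega((s/d)^2)$ in such a grid, whence $|Z|=\Omega_K((t/d)^2)$, i.e.\ $t=\Oh_K(d\sqrt{|Z|})$. If you wanted to salvage your ``quotient plus local'' route, you would need a statement of the form ``for apex-minor-free $G$ and a partition into connected parts of radius at most $d$, $\mathrm{tw}(G)$ is bounded in terms of $d$ and $\mathrm{tw}(G/\pi)$'' --- which is itself a nontrivial bidimensionality-type theorem (closely related to Lemma 4.1 of the paper, which is in fact \emph{derived from} the present lemma), not a routine decomposition-lifting argument. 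So the Voronoi partition, the Alon--Seymour--Thomas bound for the quotient, and the local treewidth bound for the cells are all fine individually, but the step that combines them is missing, and it is the heart of the proof.
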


Lemma~\ref{lem:tw:balls} was proved explicitly in the case of planar graphs by Demaine et al.~\cite[Theorem~3.3]{DemaineFHT05}. In their proof, they use only two properties of planar graphs: (a) if a planar graph $G$ has treewidth $t$, then $G$ contains the $s\times s$ grid as a minor for $s\in \Omega(t)$, and (b) if a planar graph $G$ contains the $s\times s$ grid as a minor, then every {\em{distance-$d$ dominating set}} in $G$ --- a set $Z$ such that every vertex of $G$ is at distance at most $d$ from $Z$ --- has size $\Omega((s/d)^2)$ (see~\cite[Lemma~3.1]{DemaineFHT05}). To lift this proof to the case of $K$-minor-free graphs for a fixed apex graph $K$, one can use the result of Fomin et al.~\cite[Theorem~3]{FominGT11}: If a $K$-minor-free graph $G$ has treewidth $t$, then $G$ contains the {\em{triangulated $s\times s$ grid}} $\Gamma_s$ as a contraction, for $s\in \Omega_K(t)$ (see~\cite{FominGT11} for a definition of $\Gamma_s$). It can be easily seen that the minimum size of a distance-$d$ dominating set in a graph does not increase under edge contractions, and is of the order $\Omega((s/d)^2)$ in~$\Gamma_s$. Therefore, every distance-$d$ dominating set in a $K$-minor-free graph of treewidth $t$ must have size $\Omega_K((t/d)^2)$; this readily implies the statement of Lemma~\ref{lem:tw:balls}.

\paragraph*{Metric embeddings.} A {\em{metric embedding}} of an edge-weighted graph $G$ consists of the {\em{host graph}} --- an edge-weighted graph $H$ --- and the {\em{embedding}} $\eta\colon V(G)\to V(H)$ satisfying $\dist_H(\eta(u),\eta(v))\geq \dist_G(u,v)$ for all $u,v\in V(G)$. The embedding $\eta$ has {\em{(multiplicative) distortion}} $c$ if $\dist_H(\eta(u),\eta(v))\leq c \cdot \dist_G(u,v)$ for all $u,v\in V(G)$. It will be often the case that a metric embedding is constructed by a randomized procedure; then we can say that $\eta$ has {\em{expected (multiplicative) distortion}}~$c$ if $\Exp\ \dist_H(\eta(u),\eta(v))\leq c \cdot \dist_G(u,v)$ for all $u,v\in V(G)$.

\section{Partitioning tools}\label{sec:clustering}

In this section we introduce our probabilistic partitioning tools, which are heavily inspired by the work on {\em{padded decompositions}}~\cite{AbrahamBN06,FRT04}. We start by presenting a procedure that partitions a given graph into clusters of significantly smaller diameter so that for any path $Q$, the expected number of edges of $Q$ cut by the boundaries of the clusters is proportional to the length of $Q$. We then use our partitioning procedure in a hierarchical fashion to create a multi-level decomposition of a given~graph.

\subsection{Probabilistic preliminaries}

We need a few standard tools from the probability theory. 
We will use the standard Chernoff bound, stated below, and its simple
corollary, proven for completeness in Appendix~\ref{app:boring}.

\begin{theorem}[Multiplicative Chernoff bound]\label{thm:MultChernoff}
Let $X_1,\ldots,X_n$ be independent random variables with values in $\{0,1\}$.
Let $X = \sum_{i=1}^n X_i$, let $\mu = \Exp[X]$, and let $0\le  \delta$ be a real.
Then, 
  \[ \Prob \left(X \geq \mu (1 + \delta)\right) \leq \exp(-\delta^2 \mu/(2+\delta) ). \]
\end{theorem}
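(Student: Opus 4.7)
The plan is to follow the classical Chernoff--Hoeffding argument based on the exponential moment method, sometimes called Bernstein's trick. The quantity to bound is a sum of independent bounded random variables, and the standard route proceeds in three stages: (i) exponentiate and apply Markov's inequality, (ii) use independence to factorize the moment generating function and bound each factor, and (iii) optimize the free parameter and simplify the resulting exponent into the form $-\delta^2\mu/(2+\delta)$.

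Concretely, I would first fix an arbitrary $t>0$ and observe that, by monotonicity and Markov's inequality,
\[ \Prob[X \ge \mu(1+\delta)] \;=\; \Prob\!\left[e^{tX}\ge e^{t\mu(1+\delta)}\right] \;\le\; \frac{\Exp[e^{tX}]}{e^{t\mu(1+\delta)}}. \]
Independence of the $X_i$ yields $\Exp[e^{tX}]=\prod_{i=1}^n \Exp[e^{tX_i}]$. For each Bernoulli variable with $p_i = \Prob[X_i=1]$ I would compute $\Exp[e^{tX_i}] = 1 + p_i(e^t-1)$ and upper bound this by $\exp(p_i(e^t-1))$ using $1+x\le e^x$, so that $\Exp[e^{tX}]\le \exp((e^t-1)\mu)$. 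Substituting back gives
\[ \Prob[X \ge \mu(1+\delta)] \;\le\; \exp\bigl(\mu(e^t-1) - t\mu(1+\delta)\bigr). \]

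Next I would choose the parameter $t$ to minimize the exponent. Differentiating in $t$ yields the optimal choice $t=\ln(1+\delta)$ (valid since $\delta\ge 0$), producing the classical form
\[ \Prob[X \ge \mu(1+\delta)] \;\le\; \left(\frac{e^\delta}{(1+\delta)^{1+\delta}}\right)^{\!\mu} \;=\; \exp\!\bigl(-\mu\bigl((1+\delta)\ln(1+\delta) - \delta\bigr)\bigr). \]
It then remains to verify the single-variable inequality
\[ (1+\delta)\ln(1+\delta) - \delta \;\ge\; \frac{\delta^2}{2+\delta} \qquad \text{for all } \delta\ge 0, \]
which yields exactly the claimed bound after exponentiation.

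The only real work, and the step I expect to be most tedious, is the last scalar inequality; all the probabilistic content is routine. I would handle it by defining $g(\delta) := (1+\delta)\ln(1+\delta) - \delta - \delta^2/(2+\delta)$, checking $g(0)=0$, and showing $g'(\delta)\ge 0$ on $[0,\infty)$ (this reduces, after clearing denominators, to an elementary inequality in $\ln(1+\delta)$ that can be verified by a second derivative computation, or alternatively deduced from the well-known Taylor comparison $\ln(1+\delta)\ge \frac{2\delta}{2+\delta}$). This calculus lemma is the sole non-mechanical ingredient; given the paper's remark that the boring parts may be deferred, it is natural to simply invoke this standard estimate (or refer the reader to Appendix~\ref{app:boring}) rather than grind through it in the main text.
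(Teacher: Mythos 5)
Your proposal is correct, and it is the canonical exponential-moment (Chernoff) argument; the paper itself does not prove this theorem at all — it is quoted as the standard multiplicative Chernoff bound, and Appendix~\ref{app:boring} only proves its consequence Lemma~\ref{lem:hit-process}, taking the theorem as a black box. Your steps all check out: the MGF bound $\Exp[e^{tX}]\le \exp((e^t-1)\mu)$ handles non-identical $p_i$ correctly, the choice $t=\ln(1+\delta)$ is valid (the case $\delta=0$ being trivial), and the scalar inequality $(1+\delta)\ln(1+\delta)-\delta\ge \delta^2/(2+\delta)$ follows immediately from $\ln(1+\delta)\ge \frac{2\delta}{2+\delta}$, which is a one-line derivative check since $\frac{1}{1+\delta}-\frac{4}{(2+\delta)^2}=\frac{\delta^2}{(1+\delta)(2+\delta)^2}\ge 0$. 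The only small inaccuracy is your suggestion to ``refer the reader to Appendix~\ref{app:boring}'' for this calculus estimate: that appendix does not contain it, so you would have to include the two-line verification yourself, exactly as you sketch.
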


\begin{lemma}\label{lem:hit-process}
Let $k \geq 1$ be an integer and let $X_1,X_2,\ldots$ be independent random variables with values $\{-1,1\}$ such that
for every $i \geq 1$, $\Prob(X_i = 1) \leq \frac{1}{8}$. 
Let $Z$ be a random viariable defined as the minimum $i$ such that $k + \sum_{j=1}^i X_j < 0$. 
Then 
\[ \Prob\left(Z > 2k\right) \leq \exp\left(-\frac{k}{12}\right).\]
\end{lemma}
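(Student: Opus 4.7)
The plan is to reduce the statement to a single application of the multiplicative Chernoff bound (Theorem~\ref{thm:MultChernoff}). First I would rewrite the event $\{Z > 2k\}$: since $Z$ is the first time the walk $k + \sum_{j=1}^{i} X_j$ drops below $0$, having $Z > 2k$ implies in particular that $k + \sum_{j=1}^{2k} X_j \geq 0$, equivalently $\sum_{j=1}^{2k} X_j \geq -k$. Switching to $\{0,1\}$-valued indicators via $Y_j := [X_j = 1]$, so that $X_j = 2Y_j - 1$, this last inequality rewrites as
\[ \sum_{j=1}^{2k} Y_j \;\geq\; \frac{k}{2}. \]
It therefore suffices to upper bound the probability of this upper-tail event by $\exp(-k/12)$.

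Next, since Theorem~\ref{thm:MultChernoff} is stated for Bernoullis with fixed success probabilities whereas here we only have $\Prob(Y_j = 1) \leq 1/8$, I would invoke a standard stochastic domination: for each $j$, couple $Y_j$ with an independent $Y_j'$ of Bernoulli distribution with parameter $1/8$ so that $Y_j \leq Y_j'$ pointwise. Then
\[ \Prob\!\Bigl(\sum_{j=1}^{2k} Y_j \geq k/2\Bigr) \;\leq\; \Prob\!\Bigl(\sum_{j=1}^{2k} Y_j' \geq k/2\Bigr), \]
and the sum on the right has expectation exactly $\mu = 2k \cdot \tfrac{1}{8} = k/4$. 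Choosing $\delta = 1$ makes the Chernoff threshold $(1+\delta)\mu$ equal to $k/2$, so Theorem~\ref{thm:MultChernoff} gives $\exp\!\bigl(-\delta^2 \mu/(2+\delta)\bigr) = \exp(-(k/4)/3) = \exp(-k/12)$, which is exactly the required bound.

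There is no real obstacle here; the only point requiring care is the coupling in the second step, which is standard. I note that the hypothesis $\Prob(X_i = 1) \leq 1/8$ is precisely what gives the constant $\mu = k/4$ together with the slack to take $\delta = 1$: any $\Prob(X_i = 1) \leq p$ with $p < 1/2$ would still make the walk drift leftwards on average so that an analogous estimate goes through, but the specific value $1/8$ is chosen to match the exponent $k/12$ stated in the lemma.
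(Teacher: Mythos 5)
Your proposal is correct and takes essentially the same route as the paper: both relax $\{Z > 2k\}$ to the final-sum event $k + \sum_{j=1}^{2k} X_j \geq 0$, dominate the variables by independent ones with success probability exactly $\tfrac{1}{8}$, and apply Theorem~\ref{thm:MultChernoff} with $\mu = k/4$ and $\delta = 1$ to get $\exp(-k/12)$. The only cosmetic difference is that the paper performs the domination on the $\{-1,1\}$-valued $X_i$ before passing to the sum, while you pass to indicators first; this changes nothing of substance.
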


We will also use random variables with exponential distributions. Recall that a real-valued random variable $X$ has exponential distribution with parameter $\lambda>0$, denoted $X\sim \ExpDist(\lambda)$, if its distribution has density given $g(t)=\lambda \exp(-\lambda t)\cdot [t\geq 0]$. It is well-known that exponentially-distributed random variables are {\em{memoryless}} in the sense explained by the statement below.

\begin{lemma}\label{lem:memoryless}
 Suppose $X$ is a random variable with an exponential distribution. Then for every pair of nonnegative reals $s\leq t$, we have
 \[\Prob(X\leq s+t~|~X\geq s) = \Prob(X\leq t).\]
\end{lemma}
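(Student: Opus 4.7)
The statement is the textbook memoryless property of the exponential distribution, so the plan is simply to unfold the definition of conditional probability and integrate the density. First I would write
\[ \Prob(X\leq s+t \mid X\geq s) = \frac{\Prob(s\leq X\leq s+t)}{\Prob(X\geq s)}, \]
using that $\Prob(X\geq s)>0$ because the density $g$ is strictly positive on $[0,\infty)$.

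Next I would compute the denominator by $\Prob(X\geq s)=\int_s^\infty \lambda e^{-\lambda u}\,\dd u = e^{-\lambda s}$, and the numerator analogously by $\Prob(s\leq X\leq s+t)=\int_s^{s+t}\lambda e^{-\lambda u}\,\dd u = e^{-\lambda s}-e^{-\lambda(s+t)} = e^{-\lambda s}(1-e^{-\lambda t})$. Dividing yields $\Prob(X\leq s+t\mid X\geq s)=1-e^{-\lambda t}$.

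Finally I would observe that the same integral with $s=0$ gives $\Prob(X\leq t)=\int_0^t\lambda e^{-\lambda u}\,\dd u = 1-e^{-\lambda t}$, matching the expression above. The hypothesis $s\leq t$ is not actually used in the calculation; it is harmless. There is essentially no obstacle here: the whole argument is a one-line application of the product rule $e^{-\lambda(s+t)}=e^{-\lambda s}e^{-\lambda t}$, which is exactly what makes the exponential distribution memoryless.
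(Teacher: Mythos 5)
Your computation is correct and is exactly the standard textbook argument; the paper itself states this lemma as well-known and gives no proof, so there is nothing to compare beyond noting that your direct evaluation of $\Prob(s\leq X\leq s+t)/\Prob(X\geq s)=1-e^{-\lambda t}=\Prob(X\leq t)$ is the intended justification. Your side remark that the hypothesis $s\leq t$ is never needed is also accurate.
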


In one of the more involved proofs, we will use memorylessness in a quite convoluted fashion.
In order not to clutter the arguments later, we abstract the probabilistic part as the lemma below. 
While this statement may be next to trivial for readers more experienced with the probability theory in general 
or exponential clocks in particular, for the sake of rigour we provide a proof in Appendix~\ref{app:boring}.

\begin{lemma}\label{lem:memory-application}
Let $n$ be a positive integer, $X_1,X_2,\ldots,X_{2n}$ be independent random variables, each with distributon $\ExpDist(1)$, let $f_1,f_2,\ldots,f_{2n}$ be measurable functions where $f_i \colon \R^{i-1} \to \R_{\geq 0} \cup \{+\infty\}$ for $1 \leq i \leq 2n$
and $f_i \equiv 0$ for $n < i \leq 2n$, and let $p \geq 0$ be a real.
We say that an index $i$ is \emph{good} if $X_i \geq f_i(X_1,X_2,\ldots,X_{i-1})$ and \emph{excellent} if additionally $X_i > f_i(X_1,X_2,\ldots,X_{i-1}) + p$.
Let $N \in \{n,n+1,\ldots,2n\}$ be the number of good indices (note that every $i>n$ is good almost surely) and let $(I_i)_{i=1}^{N}$ be random variables being consecutive good indices.
For $1 \leq i \leq N$, let $Y_i = X_{I_i} - f_{I_i}(X_1, \ldots, X_{I_i-1})$. 
Then, $Y_1, Y_2, \ldots, Y_n$ are independent random variables, each with distribution $\ExpDist(1)$. 
Consequently, if for $1 \leq i \leq n$ we denote by $A_i$ the event ``$I_i$ is excellent'', then the events $(A_i)_{i=1}^n$ are independent and each happens with probability~$e^{-p}$.
\end{lemma}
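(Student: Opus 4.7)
The plan is to establish the first conclusion — that $Y_1, \ldots, Y_n$ are independent $\ExpDist(1)$ variables — via a sequential scan of the indices $1, 2, \ldots, 2n$ that leans on Lemma~\ref{lem:memoryless} at every step. The intuition: when one reveals $X_j$ conditionally on $\mathcal{F}_{j-1} \coloneqq \sigma(X_1, \ldots, X_{j-1})$, the threshold $f_j = f_j(X_1, \ldots, X_{j-1})$ is a deterministic constant, and on the ``good'' event $\{X_j \geq f_j\}$ memorylessness makes $X_j - f_j$ a fresh $\ExpDist(1)$ sample, independent of $\mathcal{F}_{j-1}$ and hence of all previously-recorded $Y$-values. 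The ``Consequently'' statement will then be a one-line corollary, since the event $A_i$ = ``$I_i$ is excellent'' coincides with $\{Y_i > p\}$, and an $\ExpDist(1)$-variable exceeds $p$ with probability $e^{-p}$ independently of the other $Y_j$'s.

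I would formalize the scan through the following strengthened claim, proved by downward induction on $j \in \{1, \ldots, 2n+1\}$: for every $k \in \{1, \ldots, n+1\}$, conditionally on \emph{any} history $X_1, \ldots, X_{j-1}$, the residuals $Y_k, Y_{k+1}, \ldots, Y_n$ produced by continuing the scan from index $j$ form an $\ExpDist(1)^{\otimes (n-k+1)}$ sample. The base case is vacuous, as the assumption $f_i \equiv 0$ for $i > n$ ensures that by the time $j$ reaches $2n+1$ every slot has been filled (all indices past $n$ are good almost surely). The inductive step splits on $X_j$: in the ``good'' branch $\{X_j \geq f_j\}$ one sets $Y_k = X_j - f_j$, invokes Lemma~\ref{lem:memoryless} to conclude $Y_k \sim \ExpDist(1)$ independent of $\mathcal{F}_{j-1}$, and then applies the inductive hypothesis to $(j+1, k+1)$; in the ``bad'' branch $\{X_j < f_j\}$ one directly applies the inductive hypothesis to $(j+1, k)$. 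Summing the two conditional contributions weighted respectively by $\Prob(X_j \geq f_j \mid \mathcal{F}_{j-1}) = e^{-f_j}$ and $\Prob(X_j < f_j \mid \mathcal{F}_{j-1}) = 1 - e^{-f_j}$ collapses to the target product measure $\prod_{i=k}^{n} \mu(B_i)$ with $\mu = \ExpDist(1)$. Applying the claim at $(j,k) = (1,1)$ delivers the first conclusion.

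The main subtlety I would flag is the temptation to prove a compact ``per-pattern'' statement: that conditional on the full good/bad pattern $S = \{j \in [1,n] : A_j = 1\}$, the residuals $(X_j - f_j)_{j \in S}$ are iid $\ExpDist(1)$. This simpler claim is actually \emph{false} — already for $n = 2$ with $f_2(x_1) = x_1$ a direct calculation produces a counterexample — because the residuals on a fixed pattern are coupled to the very shape of the pattern through the dependence of later $f_k$'s on earlier $X$-values. The lemma becomes true only because the extra variables $X_{n+1}, \ldots, X_{2n}$ guarantee $N \geq n$ and force an implicit averaging over all patterns that produce the same first $n$ good indices; the sequential induction above performs exactly this averaging by handling the good and bad branches uniformly at each step, so the coupling is dissolved index-by-index rather than pattern-by-pattern.
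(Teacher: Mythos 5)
Your proposal is correct, and it proves the statement by a backward induction that is, at bottom, the same computation as the paper's, just packaged differently. The paper fixes measurable sets $S_1,\ldots,S_n$, expands $\Prob(Y_1\in S_1\wedge\ldots\wedge Y_n\in S_n)$ as a sum over all possible values of the good indices $I_1<\ldots<I_n$, writes each term as a nested integral, and collapses everything with the telescoping identity \eqref{eq:belkot:step}, which is proved by repeatedly using $e^{-f_i}+\int_0^{f_i}e^{-x}\,\dd x=1$ from the last index backwards. Your inductive step --- good branch contributes $e^{-f_j}\mu(B_k)\prod_{i>k}\mu(B_i)$, bad branch contributes $(1-e^{-f_j})\prod_{i\geq k}\mu(B_i)$, and the two sum to $\prod_{i\geq k}\mu(B_i)$ --- is exactly that identity in conditional-probability clothing, so your recursion is the paper's computation unrolled one step at a time. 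What your organization buys is brevity and transparency: no sum over patterns and no page-long nested integrals, and the ``Consequently'' part falls out as you say since $A_i=\{Y_i>p\}$. What it costs is a little extra care that the paper's fully explicit integrals avoid: you condition on the null event $X_1=x_1,\ldots,X_{j-1}=x_{j-1}$, which should be phrased via the (product-form) conditional law of $(X_j,\ldots,X_{2n})$ given $\mathcal F_{j-1}$, and your inductive claim should be restricted to pairs $(j,k)$ for which the continuation surely yields $n-k+1$ good indices (for $j>n+1$ this forces $k\geq j-n$; the reachable pairs satisfy this, and your base case remark implicitly uses it), with the convention $e^{-\infty}=0$ handling $f_j=+\infty$. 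These are presentational fixes, not gaps. Your cautionary remark is also accurate and worth keeping: conditional on the good/bad pattern the residuals need not be $\ExpDist(1)$ (your $f_2(x_1)=x_1$ example gives $Y_1\sim\ExpDist(2)$ on the all-good pattern), which is precisely why both your proof and the paper's average over the branching rather than fixing the pattern.
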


\subsection{A single level}

We first define a simple partitioning procedure to create one level of a hierarchical decomposition. The properties of the procedure are asserted in the following lemma.

\begin{lemma}\label{lem:single-level}
 There is a randomized polynomial-time algorithm that given a connected $n$-vertex edge-weighted graph $G$ with diameter $D$ and a real parameter $r>0$, outputs a partition $\partition$ of the vertex set of $G$ into {\em{clusters}} satisfying the following properties:
 \begin{enumerate}[label=(P\arabic*),ref=(P\arabic*),leftmargin=*,nosep]
  \item\label{p:conn} For each $C\in \partition$, $G[C]$ is connected.
  \item\label{p:radii} For every $t\geq 0$, the probability that in $\partition$ there is a cluster inducing a subgraph of diameter larger than $2r(t + 1 + \ln n)$ is at most $e^{-t}$.
  \item\label{p:quo} For every $t\geq 0$, the probability that the diameter of $G/\partition$ is
larger than $120\cdot \frac{D}{r}\cdot (t+1+\ln n)$ is at most $e^{-t}$.
  \item\label{p:geo} For every path $Q$ in $G$, the expected number of edges on $Q$ whose endpoints belong to different clusters is at most $\frac{\len(Q)}{D}\cdot 4\ell$, where $\ell$ is the least integer such that the stretch of the distance metric in $G$ is strictly smaller than $2^\ell$.
\end{enumerate}
\end{lemma}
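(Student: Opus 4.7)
The plan is to adapt the classical padded decomposition of Bartal (as reworked by Miller--Peng--Xu): sample, independently for each $v\in V(G)$, a radius $\beta_v\sim \ExpDist(1/r)$, and a uniformly random total order $\pi$ of $V(G)$; then assign every vertex $u$ to the cluster of the first vertex $v$ in $\pi$-order satisfying $\dist_G(v,u)\le \beta_v$. The output $\partition$ is the partition induced by this assignment, and the whole procedure clearly runs in polynomial time.

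Property~\ref{p:conn} follows by the standard ball-growing argument: if $u$ is assigned to $v$ and $w$ lies on a shortest $v$--$u$ path in $G$, then $\dist_G(v,w)\le \dist_G(v,u)\le \beta_v$, while any earlier candidate $v'$ in $\pi$ that captures $w$ would also capture $u$ by the triangle inequality, contradicting the choice of $v$. Property~\ref{p:radii} reduces to the exponential tail bound: $\Prob(\beta_v > r(t+1+\ln n))=e^{-(t+1+\ln n)}$, so by a union bound the probability that any cluster has diameter exceeding $2r(t+1+\ln n)$ is at most $n\cdot e^{-(t+1+\ln n)}\le e^{-t}$. For Property~\ref{p:quo}, I would take any shortest path $P$ in $G$ between representatives of two clusters, chop it into $O(D/r)$ consecutive sub-arcs of length at most $r$, and argue via Theorem~\ref{thm:MultChernoff} that each sub-arc meets only $O(\ln n + t)$ clusters with probability $1-e^{-\Omega(t)}$; a union bound over sub-arcs then yields the claimed diameter bound on $G/\partition$.

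Property~\ref{p:geo} is the main technical obstacle. For a fixed edge $e=(u,v)$ on a path $Q$, I would enumerate the potential ``first captors'' $w$ in increasing order of $\dist_G(w,\{u,v\})$, and use memorylessness of the exponential (Lemma~\ref{lem:memoryless}) to bound, conditional on $w$ being the first captor, the probability that $\beta_w$ falls into the interval $(\min\{\dist_G(w,u),\dist_G(w,v)\},\max\{\dist_G(w,u),\dist_G(w,v)\}]$ which witnesses a cut of $e$, by at most $\len(e)/r$. The challenge is to upgrade the naive aggregate bound $O(\len(e)\log n/r)$ into the $r$-independent target $4\ell\cdot \len(e)/D$. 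My plan is to apply Lemma~\ref{lem:memory-application} in a multi-scale fashion: group vertices $w$ into $O(\ell)$ geometric distance shells around $e$ by the value $\dist_G(w,\{u,v\})$ (with boundaries at powers of two between the minimum inter-vertex distance and the diameter $D$), decouple the ``first captor of $e$'' event across shells using Lemma~\ref{lem:memory-application}, and sum per-shell contributions. The factor $\ell$ arises from the number of shells, while the factor $1/D$ emerges because the contributions telescope in terms of the diameter $D$. I expect this multi-scale memorylessness packaging, together with the bookkeeping needed to reconcile it with the already-established bounds in~\ref{p:radii} and~\ref{p:quo}, to be the most delicate part of the proof.
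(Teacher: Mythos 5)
Your construction is a reasonable starting point, but two of your arguments have genuine gaps. First, property~\ref{p:conn} fails for the assignment rule you chose. If $u$ is assigned to the first $v$ in $\pi$-order with $\dist_G(v,u)\le \beta_v$, and $w$ lies on a shortest $v$--$u$ path, an earlier center $v'$ may well satisfy $\dist_G(v',w)\le \beta_{v'}$ while $\dist_G(v',u)>\beta_{v'}$: the triangle inequality only gives $\dist_G(v',u)\le \beta_{v'}+\dist_G(w,u)$, which is the wrong direction. So $v'$ can steal $w$ without capturing $u$, and the cluster of $v$ need not induce a connected subgraph; this also voids your proof of~\ref{p:radii}, since the diameter of a disconnected $G[C]$ is not controlled by $2\beta_v$. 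The standard fix is the shifted-start rule (assign $u$ to $\arg\min_v(\dist_G(v,u)-\beta_v)$), for which the connectivity argument does go through; the paper instead carves balls sequentially in the graph induced by the still-free vertices, with radii $r(1+X_v)$, $X_v\sim\ExpDist(1)$, which has the same effect. Relatedly, for~\ref{p:quo} your ``Chernoff over sub-arcs'' step is asserted rather than proved: the events that different clusters reach a given length-$r$ sub-arc are not independent indicators, and bounding the number of clusters meeting a short piece of a shortest path is exactly the delicate part. The paper spends most of its proof on this, via a coloring of $Q$ into segments, a potential function, and the hitting-process and memorylessness lemmas (Lemma~\ref{lem:hit-process} and Lemma~\ref{lem:memory-application}); some argument of this competition/memorylessness type is unavoidable.

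Second, and most importantly, your plan for~\ref{p:geo} is aimed at a bound that cannot be achieved. You are trying to prove an expected cut count of order $\ell\cdot\len(Q)/D$; but when $r\ll D$, any partition whose clusters typically have diameter $O(r\log n)$ must, with constant probability, cut a diameter-realizing path into $\Omega\bigl(D/(r\log n)\bigr)$ pieces, which exceeds $4\ell$ once $r$ is small enough. (The $\len(Q)/D$ in the statement is a slip: the paper's own proof establishes, and its application in Lemma~\ref{lem:hierarchical} uses, the bound $\frac{\len(Q)}{r}\cdot 4\ell$.) Consequently your ``multi-scale shells around $e$, telescoping in $D$'' step has no correct instantiation, and you give no mechanism for it. The paper's argument is structured differently: it colors the edges of $Q$ by the earlier-conceived endpoint cluster, charges each cut edge to its maximal monochromatic segment, proves via memorylessness that for every edge $e$ and every $d>0$ the probability that $e$ lies in a \emph{positive} segment of length at most $d$ is at most $d/r$, and then sums over the $\ell$ geometric scales of \emph{segment lengths along $Q$} (not distance shells around $e$), yielding $\frac{\len(Q)}{r}\cdot 4\ell$. (With the shifted-start rule one can in fact get a per-edge cut probability $O(\len(e)/r)$ with no $\ell$ factor, but that is a different argument from the one you sketch, and it still gives a $1/r$, not a $1/D$, dependence.) Finally, note that the lemma carries explicit constants ($2r(t+1+\ln n)$, $120\cdot\frac{D}{r}(t+1+\ln n)$, failure probability $e^{-t}$), so asymptotic statements alone will not close the proof.
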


The remainder of this section is devoted to proving Lemma~\ref{lem:single-level}. For tie-breaking, we assume that there is a fixed in advance total order $\preceq$ on the vertices of $G$.  By scaling, we may assume that $D=2^\ell$ and that the distance between every pair of distinct vertices of $G$ is strictly larger than $1$. Further, we may assume that $r<D$, for otherwise we may return a partition consisting of one part: the whole~$V(G)$.

\paragraph{The procedure.} At the beginnning, the procedure initiates $\partition$ to be $\emptyset$, and it will iteratively add clusters (subsets of $V(G)$) to $\partition$ one-by-one until $\partition$
becomes a partition of $V(G)$. At every step, a vertex is called \emph{free} if it does not belong to any element of $\partition$.

The procedure iteratively, as long as there exists a free vertex, picks a $\preceq$-minimal  free vertex $v$,
  samples a radius $r_v \coloneqq r(1+X_v)$, where $X_v\sim \ExpDist(1)$ is an independent random variable with exponential 
  distribution, and creates a new element $C_v$ in $\partition$
  consisting of all vertices within distance at most $r_v$ from $v$ in the graph $G-\bigcup \partition$ 
  (i.e., subgraph of $G$ induced by the free vertices). 
  Vertex $v$ is called the \emph{center} of $C_v$, set $C_v$ is the \emph{cluster} of $v$,
  and $r_v$ is the \emph{radius} of $v$.

\paragraph{Running time.}
The procedure clearly takes (randomized) polynomial time and produces a partition $\partition$ of $V(G)$ into 
clusters. Note that every cluster $C_v$ is connected, contains $v$, and is of radius at most $r_v$.

\paragraph{Properties.} We now verify properties \ref{p:conn}--\ref{p:geo}.
Let $k$ bet the number of steps taken by the procedure, and let $v_1,\ldots,v_k$ the centers of the consecutive clusters constructed by the procedure.
For brevity, we denote $X_i\coloneqq X_{v_i}$, $C_i \coloneqq C_{v_i}$, and $r_i \coloneqq r_{v_i}$, for $i\in \{1,\ldots,k\}$. Furthermore, let
$G_i = G - \bigcup_{j < i} C_j$; that is, $G_i$ is the subgraph of $G$ induced by the vertices free at the moment of creation of $C_i$.

Property~\ref{p:conn} follows directly from the construction. Since the diameter of $G[C_i]$ is at most $2r_i$, property~\ref{p:radii} follows directly from the following statement.

\begin{claim}\label{lem:part:radii-bound}
For every $t \geq 0$, the probability that there exists $i\in \{1,\ldots,k\}$
such that $r_i > r(t + 1 + \ln n)$ is at most $e^{-t}$. 
\end{claim}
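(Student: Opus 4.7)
The plan is to reduce the claim to a simple union bound over the independent exponential random variables attached to the vertices of $G$.

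First, I would rephrase the event in terms of the raw exponentials. Since $r_i = r(1+X_i)$ with $X_i\sim \ExpDist(1)$, the inequality $r_i > r(t+1+\ln n)$ is equivalent to $X_i > t + \ln n$. For a single vertex $v$, a standard computation with the exponential distribution yields
\[
\Prob(X_v > t + \ln n) \;=\; \exp\bigl(-(t+\ln n)\bigr) \;=\; \frac{e^{-t}}{n}.
\]

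Next I would handle the fact that the indices $i$ of the actual centers are themselves random. The clean way to do this is to bound the event over \emph{all} vertices, not just the centers. Recall that the random variables $\{X_v : v\in V(G)\}$ are fixed in advance independently, and the procedure only reads the values $X_v$ of the centers it selects. Thus the event ``there exists $i\in\{1,\ldots,k\}$ with $r_i > r(t+1+\ln n)$'' is contained in the event ``there exists $v\in V(G)$ with $X_v > t+\ln n$''. By the union bound over the $n$ vertices of $G$,
\[
\Prob\bigl(\exists\, v\in V(G)\colon X_v > t+\ln n\bigr) \;\le\; n\cdot\frac{e^{-t}}{n} \;=\; e^{-t},
\]
which gives the claimed bound.

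There is essentially no obstacle here: the only subtle point is the one addressed above, namely that although the number $k$ of centers and the identities of the centers $v_1,\ldots,v_k$ depend on the random radii, the collection $\{X_v\}_{v\in V(G)}$ is a fixed family of $n$ independent $\ExpDist(1)$ variables, so a direct union bound over vertices suffices and there is no need to reason about the (random) order or count of clusters.
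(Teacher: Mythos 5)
Your proof is correct and follows essentially the same route as the paper: the exponential tail bound $\Prob(X > t+\ln n) = e^{-t}/n$ combined with a union bound over at most $n$ variables. Your device of pre-sampling one $\ExpDist(1)$ variable per vertex is just a clean formalization of the paper's union bound ``over at most $n$ clusters,'' so there is no substantive difference.
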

\begin{proof}
For every $i\in \{1,\ldots,k\}$, we have
\[ \Prob\left(r_i > r(t + 1+\ln n)\right) = \Prob(X_i > t + \ln n) = e^{-(t+\ln n)} = e^{-t}/n.\]
The lemma follows by the union bound, as there are at most $n$ clusters.
\end{proof}

In the next claim we verify the key property~\ref{p:quo}. For brevity, let $H\coloneqq G/\partition$ be the quotient graph. Recall that $H$ is unweighted.

\begin{claim}\label{lem:part:diam}
  For every $t\geq 0$, the probability that the diameter of $H$ is
larger than $120\cdot\frac{D}{r}\cdot (t+1+\ln n)$ is at most $e^{-t}$.
\end{claim}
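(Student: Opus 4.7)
The plan is to bound, for any fixed ordered pair of vertices $u,v\in V(G)$ with $d_G(u,v)\leq D$, the quantity $d_H(\pi(u),\pi(v))$ with probability at least $1-e^{-t}/n^2$; a union bound over the $O(n^2)$ such pairs then implies \ref{p:quo}. Fix such a pair and any shortest $uv$-path $P=u_0u_1\ldots u_m$ of length $L\leq D$. Call a maximal contiguous run $u_a,\ldots,u_b$ of $P$-vertices lying in a single cluster a \emph{phase}; then $d_H(\pi(u),\pi(v))$ is at most (number of phases)$-1$, so it suffices to show that the number of phases exceeds $120(D/r)(t+1+\ln n)$ with probability at most $e^{-t}/n^2$.

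The heart of the argument shows that each phase consumes arc length at least $r\cdot Y_j$ of $P$, where $Y_1,Y_2,\ldots$ are mutually independent $\ExpDist(1)$ random variables. If phase $j$ begins at vertex $u_{s_j}$ in cluster $C_{i_j}$, then $u_{s_j}\in C_{i_j}$ forces $X_{i_j}\geq d_{G_{i_j}}(v_{i_j},u_{s_j})/r-1$, and by memorylessness (\Cref{lem:memoryless}) the residual $X_{i_j}-\max(0,d_{G_{i_j}}(v_{i_j},u_{s_j})/r-1)$ has distribution $\ExpDist(1)$. Every vertex of the phase is free at step $i_j$, so the corresponding subpath lies in $G_{i_j}$; combined with the triangle inequality $d_{G_{i_j}}(v_{i_j},u_s)\leq d_{G_{i_j}}(v_{i_j},u_{s_j})+(\text{arc length traversed})$, this forces the phase to persist until arc length $r$ times the residual has been traversed along $P$.

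To promote these conditional guarantees to mutual independence, I would apply \Cref{lem:memory-application} to the algorithm's exponentials $X_1,\ldots,X_n$ (padded to $2n$ by unused dummy variables), with threshold functions $f_i(X_1,\ldots,X_{i-1}):=\max(0,d_{G_i}(v_i,u_i^\star)/r-1)$, where $u_i^\star$ is the leftmost vertex of $V(P)$ still unclaimed at step $i$ (set $f_i\equiv +\infty$ if no such vertex exists, and $f_i\equiv 0$ for $i>n$). The good indices in the lemma's sense are then exactly the algorithm steps at which $C_i$ captures the next unclaimed vertex of $P$, i.e., the steps that open a new phase of the walk; the lemma asserts that the residuals $Y_j$ at these good indices are iid $\ExpDist(1)$. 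Consequently, if $K$ is the number of phases, then $r(Y_1+\cdots+Y_K)\leq L\leq D$, and so $Y_1+\cdots+Y_K\leq D/r$.

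The conclusion then follows from a standard Chernoff-type bound for sums of iid $\ExpDist(1)$ variables: with $K:=120(D/r)(t+1+\ln n)$, we have $D/r\leq K/12$, and $\Prob(Y_1+\cdots+Y_K\leq K/12)\leq e^{-K/12}\leq e^{-t}/n^2$. I expect the most delicate step to be verifying that the ``good index'' bookkeeping based on $u_i^\star$ indeed produces one $Y_j$ per phase --- in particular, correctly handling phases that end because the next vertex of $P$ was already claimed by a previously-created cluster (where the exit is not forced by the current cluster's radius) --- so that the residuals extracted by \Cref{lem:memory-application} align exactly with the phase lengths we want to lower bound.
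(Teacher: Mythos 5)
Your high-level plan --- fix a shortest $u$--$v$ path, extract fresh $\ExpDist(1)$ residuals via \Cref{lem:memory-application}, and finish with a lower-tail concentration bound --- has the same flavor as the paper's argument, but the central accounting step, namely ``each phase consumes arc length at least $r\cdot Y_j$ with $Y_1,Y_2,\ldots$ i.i.d.\ $\ExpDist(1)$,'' is not just delicate: it fails, in two distinct ways. First, your good indices are only the steps at which a cluster captures the \emph{leftmost} unclaimed vertex of $P$, but a phase is opened whenever any cluster grabs any currently unclaimed vertex: a later cluster can claim vertices in the middle of the unclaimed region, and a single cluster can intersect $P$ in many maximal runs, each of which is a separate phase. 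Such phases receive no residual $Y_j$ at all, so the number of phases $K$ is not bounded by the number of good indices and the inequality $Y_1+\cdots+Y_K\le D/r$ is unjustified. Second, even for a phase opened at a good index $i$, the triangle-inequality argument only shows that the run extends rightward until either the radius $r_i$ is exhausted \emph{or} the path hits a vertex already claimed by an earlier cluster; in the latter case the phase can be arbitrarily short no matter how large the residual is, and the old-cluster phase that follows comes with no fresh exponential either (its radius was fixed long before). So the per-phase length guarantee collapses exactly in the situation you flag at the end, and with it the Chernoff step.

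The paper's proof is structured precisely to get around these two problems, and this is where it genuinely differs from your sketch. It colors the path so that each cluster's color class is an \emph{interval} (when a cluster is processed, all vertices between the first and last still-uncolored vertex it meets are colored, possibly recoloring earlier colors), which (i) lets one bound $\dist_H$ by the number of surviving colors rather than by the number of phases, and (ii) guarantees that each step fragments the uncolored part of the path into at most one additional segment. It then runs a potential argument combining the number of uncolored segments with their total length scaled by $pr$, for the small threshold $p=\tfrac{1}{12(t+1+\ln n)}$: a ``good'' step (residual exceeding $p$, which by \Cref{lem:memory-application} happens independently with probability $e^{-p}$ close to $1$) decreases the potential, every other coloring step increases it by at most $1$, and the hitting-time bound of \Cref{lem:hit-process} then caps the number of coloring steps by $120\cdot\frac{D}{r}\cdot(t+1+\ln n)$. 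To salvage your approach you would need analogues of both devices --- a way to charge mid-segment and truncated phases, and a bound on how much a single step can fragment the path --- at which point you essentially reconstruct the paper's argument.
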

\begin{proof}
  Consider any two vertices $u_1$ and $u_2$. Let $Q$ be any shortest path from $u_1$ to $u_2$ in $G$ (with respect to the edge weights).
  By definition we have $|Q| \leq D$.
  In the argumentation below, we consider $Q$ to be oriented from $u_1$ to $u_2$.

Let $Z = \{v \in V(H)~|~C_v \cap V(Q) \neq \emptyset\}$ be the set of those centers whose clusters intersect $Q$. We color the vertices of $Q$ using palette $Z$ as follows.
Initially, all vertices are uncolored. We consider the vertices $v \in Z$ in the same order as the clustering procedure.
Upon considering vertex $v$, we find the first and the last vertex of $C_v\cap V(Q)$ that are uncolored so far, say $q_v^1$ and $q_v^2$, and we color with color $v$ \emph{all} vertices lying between $q_v^1$ and $q_v^2$ on $Q$. Note that this may result in recoloring some vertices that were colored in previous iterations.
If all vertices of $C_v\cap V(Q)$ were colored before, we do not color any vertices in this iteration, hence there will be no vertex of color $v$.

Note that 
at every moment, all vertices colored with any color $v\in Z$ form an interval on $Q$.
Furthermore, if when considering a center $v$ we recolor at least one vertex colored $v'$ to $v$, then the interval between $q_v^1$ and $q_v^2$ on $Q$ contains {\em{all}} vertices colored previously with $v'$, hence \emph{all} vertices colored $v'$ get recolored to $v$.
Finally, note that at the end of the procedure, every vertex is assigned some color.

Let $Z'' \subseteq Z$ be the set consisting of all vertices $v \in Z$ such that when $v$ was considered during the coloring procedure, at least one vertex of $Q$ was colored with color $v$. Further,
 let $Z' \subseteq Z''$ be the set consisting of all those vertices $v \in Z$ for which at least one vertex of $Q$ is colored with color $v$ at the end of the coloring procedure.
 Let $\wh{u}_1$ and $\wh{u}_2$ be cluster centers such that $u_1\in C_{\wh{u}_1}$ and $u_2\in C_{\wh{u}_2}$. 
Observe that in $H$ there is a path from $\wh{u}_1$ to $\wh{u}_2$ that uses only vertices of $Z'$, as for every $v \in Z'$, both $q_v^1$ and $q_v^2$ are contracted onto $v$ in $H$.
Hence, 
$$\dist_H(\wh{u}_1,\wh{u}_2)\leq |Z'|-1.$$
So to give an upper bound on the probability that $\dist_H(\wh{u}_1,\wh{u}_2)$ is large, it suffices to bound the probability that $Z'$ is large.

Consider the moment at the beginning of the $i$-th step of the procedure, that is, when clusters $C_j$ for $j < i$ were already
constructed. A \emph{segment} is a maximal subpath of $Q$ whose all vertices are uncolored; let $\mathcal{S}_i$
be the family of segments at the beginning of the $i$th step.
Call the cluster $C_i$ \emph{tenoning} if $C_i$ contains all vertices of at least one segment. 

Let
\[\lambda\coloneqq t+1+\ln n.\]
For any center $v_i$, let $R_i$ be the distance in $G_i$ from $v_i$ to the nearest vertex that is uncolored in~$G_i$.
We have $v_i \in Z''$ if and only if $r_i \geq R_i$. 
Let $p\coloneqq \frac{1}{12\lambda}$; note that thus, $p\leq \frac{1}{12}\leq \frac{1}{8}$.
We say that the vertex $v_i$ is \emph{good} if $r_i > R_i + pr$.
The crucial observation is the following: if $v_i$ is good, then it is either tenoning, 
or there is a subpath $Q'$ of $Q$ of length at least $pr$ such that all vertices of $Q'$ were uncolored before the $i$th step, and all of them, except possibly for the endpoints, receive color $v_i$ in the $i$th step. In the former case we have
\[|\mathcal{S}_{i+1}| < |\mathcal{S}_i|,\]
and in the latter case we have 
\[ \sum_{S \in \mathcal{S}_{i}} \len(S) - \sum_{S \in \mathcal{S}_{i+1}} \len(S) \geq pr. \]

This motivates the following measurement.
The \emph{potential} at the beginning of step $i$ is defined as:
\[ \pi_i \coloneqq |\mathcal{S}_i| + 2\cdot  \frac{ \sum_{S \in \mathcal{S}_i} \len(S)}{pr}. \]
Note that 
\[ \pi_1 \leq 1 + \left\lceil\frac{2D}{pr}\right\rceil \eqqcolon k. \]
Clearly, $|\mathcal{S}_{i+1}| \leq |\mathcal{S}_i| + 1$ and $\bigcup \mathcal{S}_{i+1} \subseteq \bigcup \mathcal{S}_i$, for all relevant $i$.
Hence, $\pi_{i+1} \leq \pi_i + 1$. Furthermore, if $v_i$ is good, then $\pi_{i+1} \leq \pi_i - 1$.
Note that $\pi_i > 0$ for $1 \leq i \leq |Z''|$.

Let $I_1,I_2,\ldots,I_{|Z''|}$ be the indices of the consecutive elements of $Z''$ (note that these are random variables,
    and even the length of this sequence is a random variable). 
For $j \geq 1$, let $Y_j \coloneqq (r_{I_j} - R_{I_j})/r$.
For $j \geq 1$, let $A_j$ be the event ``$v_{I_j}$ is good'', which is the same as ``$Y_j > p$''.

Now, an important observation is that Lemma~\ref{lem:memory-application} applies:
variables $Y_j$ are independent, each has distribution $\ExpDist(1)$, and events $(A_j)_{j \geq 1}$ are independent
and each individually happens with probability $e^{-p}$.
Indeed, to formally apply Lemma~\ref{lem:memory-application} we artifically extend the sequence $X_1,X_2,X_3,\ldots = X_{v_1},X_{v_2},X_{v_3},\ldots$ of independent
random variables with distribution $\ExpDist(1)$ to length $2n$
and set $f_i \coloneqq R_i / r$ whenever $i$ is still an index of a center and $f_i \equiv 0$ otherwise.

Then, as $1 - e^{-p} \leq p \leq \frac{1}{8}$ and $r < D$, by Lemma~\ref{lem:hit-process} we infer that
\[ \Prob(|Z''|>2k) \leq \exp\left(-\frac{k}{12} \right)
 \leq \exp\left(-\frac{D}{6pr}\right) \leq \exp\left(-2\lambda\right)= \exp\left(-2(t+1+\ln n)\right).\]
On the other hand, using $r < D$,
  \[ 2k = 2+2\left\lceil\frac{2D}{pr}\right\rceil \leq \frac{D}{r} \cdot 10 \cdot 12\lambda = 120\cdot \frac{D}{r}\cdot (t+1+\ln n).\]
Denote $w\coloneqq 120\cdot \frac{D}{r}\cdot (t+1+\ln n)$. This means that
\begin{align*}
 \Prob(\dist_H(\wh{u}_1,\wh{u}_2)>w)& \leq \Prob(|Z'|>w)\leq \Prob(|Z''|>w)\\
 &\leq \Prob(|Z''|>2k)  \leq \exp\left(-2(t+1+\ln n) \right).
\end{align*}
By the union bound over all choices of $u_1$ and $u_2$, we obtain
\[ \Prob\left( \mathrm{diam}(H) > w \right) \leq 
\exp\left( - 2(t+1+\ln n)+2\ln n \right)\leq e^{-t}. \]
This finishes the proof.
\end{proof}

Finally, we show property~\ref{p:geo}.

\begin{claim}\label{lem:part:cut}
Let $Q$ be a path in $G$.
Then the expected number of edges of $Q$ whose endpoints belong to different clusters in $\partition$
is bounded by $\frac{\len(Q)}{r} \cdot 4\ell$.
\end{claim}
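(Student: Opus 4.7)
The plan is to bound, for each individual edge $e = uv$ of $Q$, the probability that $u$ and $v$ end up in different clusters of $\partition$, and then sum this bound over $e \in Q$ by linearity of expectation. In fact the argument yields the stronger per-edge estimate $\Prob[e \text{ cut}] \leq \len(e)/r$, from which the claim follows immediately.

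Fix $e = uv$ and write $d = \len(e)$. Let $T$ be the first step of the procedure at which the newly constructed cluster captures at least one of $u, v$; since $\partition$ eventually covers $V(G)$, the step $T$ is almost surely well-defined. The key observation is that $e$ is cut if and only if the cluster created at step $T$ captures exactly one of $u, v$: if it captures both, they share a cluster, and if only one, the other endpoint must be captured later in a necessarily different cluster. Write $v_T$ for the center chosen at step $T$. Since both $u$ and $v$ are free at the beginning of step $T$, the edge $uv$ lies in $G_T$, so $d_{G_T}(u, v) \leq d$. Defining $\rho^- = \min(d_{G_T}(v_T, u), d_{G_T}(v_T, v))$ and $\rho^+ = \max(d_{G_T}(v_T, u), d_{G_T}(v_T, v))$, the triangle inequality through the edge $uv$ yields $\rho^+ - \rho^- \leq d$. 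The event ``$v_T$'s cluster captures exactly one of $u, v$'' is $\{r_{v_T} \in [\rho^-, \rho^+)\}$, while ``it captures at least one'' is $\{r_{v_T} \geq \rho^-\}$.

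By the principle of deferred decisions, the value $X_{v_T}$ is sampled fresh at step $T$ and is independent of everything determined by the prior execution, in particular of $v_T$, $\rho^-$, $\rho^+$. The event $\{T = t\}$ amounts to: the prior execution is consistent with no earlier capture of $u$ or $v$, together with the condition $r_{v_T} \geq \rho^-$. Conditioning on both the state at step $T$ and $\{T = t\}$, only $r_{v_T} \geq \rho^-$ constrains $r_{v_T}$; so by memorylessness of the exponential distribution (Lemma~\ref{lem:memoryless}),
\[\Prob[r_{v_T} < \rho^+ \mid r_{v_T} \geq \rho^-,\, \text{state}] \leq 1 - e^{-(\rho^+ - \rho^-)/r} \leq \frac{\rho^+ - \rho^-}{r} \leq \frac{d}{r}.\]
The borderline case $\rho^- < r$, where the conditioning is vacuous, is handled by a direct computation using $\rho^+ - r \leq \rho^+ - \rho^- \leq d$. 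Averaging over the state at step $T$ and over $T$ itself yields $\Prob[e \text{ cut}] \leq d/r$.

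Summing over edges of $Q$ then gives
\[\Exp\bigl[\,|\{e \in Q : e \text{ cut}\}|\,\bigr] = \sum_{e \in Q} \Prob[e \text{ cut}] \leq \sum_{e \in Q} \frac{\len(e)}{r} = \frac{\len(Q)}{r} \leq \frac{\len(Q)}{r} \cdot 4\ell,\]
as required. The main obstacle I expect is the careful combination of deferred decisions with memorylessness: one must recognize that, conditional on the state at step $T$, the event ``$T = t$'' becomes just $r_{v_T} \geq \rho^-$ (since all other components of the condition live in the past), so that memorylessness collapses the conditioning into the clean factor $d/r$. Without this observation, a naive analysis summing cut probabilities over ``potential cutters'' binned by distance from $e$ would incur an extra logarithmic factor proportional to the number of shells, which is presumably the source of the factor $4\ell$ in the statement.
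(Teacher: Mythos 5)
Your proof is correct, and it takes a genuinely different --- in fact stronger --- route than the paper's. The paper never bounds the per-edge cut probability directly: it colors the edges of $Q$ by the earlier-conceived cluster of their endpoints, charges each cut edge to the monochromatic segment containing it, shows (via the overshoot/memorylessness machinery of Lemma~\ref{lem:memory-application}) that a fixed edge lies in a \emph{positive segment of length at most $d$} with probability at most $d/r$ for every $d$, and then counts positive segments by bucketing segment lengths into the $\ell$ dyadic scales between $1$ and $2^\ell$; that union over scales is exactly where the factor $4\ell$ comes from. You instead bound $\Prob[e\text{ cut}]\le \len(e)/r$ directly: conditioning on the history up to the first step $T$ at which a cluster reaches an endpoint of $e$ (which determines $v_T$, $G_T$, $\rho^\pm$, and makes $\{T=t\}$ reduce to $r_{v_T}\ge\rho^-$), memorylessness gives conditional cut probability at most $(\rho^+-\rho^-)/r\le\len(e)/r$, and summing the disjoint events $\{T=t\}$ and then over edges yields $\len(Q)/r$, with no $\ell$ factor at all, which trivially implies the claim. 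Your key step is the same probabilistic fact the paper packages in Lemma~\ref{lem:memory-application}; indeed, in the paper's notation the cut event is contained in $\{R_I\le r_I\le R_I+\len(e)\}$, so the reasoning behind~\eqref{eq:part:cut} applies with $d=\len(e)$ --- what you gain by comparing the radius to the \emph{other endpoint of $e$} rather than to a far charged edge of its segment is that the segment-counting and its dyadic loss disappear. The paper's heavier segment machinery mirrors what is genuinely needed for the quotient-diameter bound (Claim~\ref{lem:part:diam}), and the authors explicitly do not optimize polylogarithmic factors, which explains the lossier argument. For full rigour, phrase your ``deferred decisions'' step via the filtration generated by the radii of the first $t-1$ clusters (on the event that neither endpoint was captured earlier, $v_t$, $G_t$, $\rho^\pm$ are measurable and $X_{v_t}$ is independent), or simply invoke Lemma~\ref{lem:memory-application} as the paper does; your separate treatment of the boundary case $\rho^-<r$ is the right one.
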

\begin{proof}
We color the  edges of $Q$ with colors being cluster centers as follows:
for an edge $xy$ on $Q$, if $x \in C_u$ and $y \in C_v$, then the edge $xy$ is colored with the center among $\{u,v\}$ whose cluster was conceived earlier (so if $u=v$, the edge $xy$ is colored $v$). 
A \emph{segment} is a maximal subpath of edges of $Q$ of the same color.
An edge of $Q$ with endpoints belonging to different clusters is \emph{charged} to the segment it is contained in. Note that 
a segment can be charged by $0$, $1$, or $2$ edges; a segment is \emph{positive} if it is charged by at least one edge.

We now make a crucial observation: For every edge $e$ on $Q$ and every real $d > 0$,
\begin{equation}\label{eq:part:cut}
\Prob(e\textrm{ lies in a positive segment of length }\leq d) \leq \frac{d}{r}.
\end{equation}
Indeed, for an index $i$, let $x_i$ be the endpoint of $e$ that is closer to $v_i$ in $G_i$ (or $x_i = \bot$ if both endpoints of $e$ are already in clusters)
and let $R_i$ be the distance between $v_i$ and $x_i$ in $G_i$ ($R_i = +\infty$ if $x_i = \bot$).
Note that $R_i$ is a measurable function of values $(X_j)_{j<i}$.
Let $I$ be the index such that the color of $e$ is $v_I$; equivalently, $I$ is the earliest index such that $r_I\geq R_I$.  (Note that $I$ is a random variable.) 
Next, let $S$ be the segment of $e$, and suppose for a moment that $S$ is positive.
Let $yz$ be an edge of $S$ that is charged to $S$, with $y \in C_I$ and $z \notin C_I$. Note that both $y$ and $z$ are free at the moment when $C_I$ is constructed. Since $z\notin C_I$, we have $\dist_{G_I}(v_I,z)>r_I$, which together with $\dist_{G_I}(v_I,x_I)= R_I$ implies that
$$\len(S)\geq \dist_{G_I}(x_I,z)\geq r_I-R_I.$$
We conclude that the event ``$e$ lies in a positive segment of length $\leq d$'' is contained in the event $R_I\leq r_I\leq R_I+d$. We may now use Lemma~\ref{lem:memory-application} to conclude that this event has probability at most $1-e^{-d/r}\leq d/r$; this establishes~\eqref{eq:part:cut}.

For an edge $e$ of $Q$, let $S(e)$ be the segment containing $e$. 
We bound the number of positive segments as follows.
\begin{align*}
\#(\textrm{positive segments}) &\leq \sum_{e \in E(Q)} \frac{\len(e) \cdot [S(e)\textrm{ is positive}]}{\len(S(e))} \\
    &\leq \sum_{e \in E(Q)} \sum_{i=1}^{\ell} \len(e) \cdot [S(e) \textrm{ is positive}] \cdot [2^{i-1} \leq \len(S(e)) \leq 2^i] \cdot 2^{-i+1} \\
    &\leq 2 \sum_{i=1}^{\ell} \sum_{e \in E(Q)} \len(e) \cdot [S(e) \textrm{ is positive}] \cdot [\len(S(e)) \leq 2^i] \cdot 2^{-i}.
\end{align*}
Hence, by~\eqref{eq:part:cut}, the expected number of positive segments is bounded by
\[ 2 \sum_{i=1}^{\ell} \sum_{e \in E(Q)} \len(e) \cdot \frac{2^i}{r} \cdot 2^{-i} = \frac{2 \ell}{r} \sum_{e \in E(Q)} \len(e) = \frac{\len(Q)}{r} \cdot 2\ell. \]
This finishes the proof, as the total number of edges of $Q$ whose endpoints belong to different clusters is at most twice the number of positive segments. 
\end{proof}

As all the required properties are argued, the proof of Lemma~\ref{lem:single-level} is complete.

\subsection{Hierarchical decomposition}\label{sec:decomp}

We will use the decomposition procedure described in the previous section recursively,
to construct a hierarchical decomposition of a given graph into smaller and smaller induced subgraphs of the input graph. To formulate the properties of the hierarchical decompositions, we need a few definitions. 

A {\em{clustering chain}} of a connected graph $G$ (not necessarily edge-weighted) is a sequence
$\partition_0,\partition_1,\ldots,\partition_k$ of partitions of $V(G)$ such that
\begin{itemize}[nosep]
 \item $\partition_k=\{V(G)\}$;
 \item $\partition_0$ is the discrete partition with every vertex in a separate part; and
 \item for all $i,j\in \{0,1,\ldots,k\}$ with $i<j$, $\partition_i$ {\em{refines}} $\partition_j$ in the following sense: every part of $\partition_i$ is entirely contained in a single part of $\partition_j$;
 \item for all $i\in \{0,1,\ldots,k\}$ and each $C\in \partition_i$, the graph $G[C]$ is connected.
\end{itemize}
For $i\in \{0,1,\ldots,k-1\}$ and a part $C\in \partition_{i+1}$, we let $\partition_i[C]$ be the set of parts of $\partition_i$ that are entirely contained in $C$; note that $\partition_i[C]$ is a partition of $C$.
For an edge $e$ of $G$, the {\em{level}} of $e$ is the largest index $i\in \{1,\ldots,k\}$ such that the endpoints of $e$ belong to different parts of $\partition_i$.

With these definition in place, we can formulate a statement that summarizes the properties of the hierarchical decomposition.

\begin{lemma}\label{lem:hierarchical}\label{lem:hierarchical:stmt}
 There is a randomized polynomial-time algorithm that given a parameter $\delta>0$ and a connected edge-weighted $n$-vertex graph $G$ where the distance between any pair of vertices is in the interval $(1,2^\ell]$, either reports failure or computes a clustering chain $\partition_0,\partition_1,\ldots,\partition_\ell$ of $G$ with the following properties:
 \begin{enumerate}[label=(Q\arabic*),ref=(Q\arabic*),leftmargin=*]
  \item\label{p:hiercl}\label{p:stmt:hiercl} For every $i\in \{0,1,\ldots,\ell\}$ and every cluster $C\in \partition_i$, the graph $G[C]$ has diameter at most $2^i$.
  \item\label{p:hierdiam}\label{p:stmt:hierdiam} For every $i\in \{0,\ldots,\ell-1\}$ and every cluster $C\in \partition_{i+1}$, the quotient graph $G[C]/\partition_i[C]$ has diameter at most $480(\ln(2\ell n^2/\delta)+1)^2$.
  \item\label{p:level}\label{p:stmt:level} For every path $Q$ in $G$ and $i\in \{0,\ldots,\ell-1\}$, the expected number of edges of $Q$ of level $i$ is upper bounded by $2^{-i}\cdot 8\ell(\ln(2\ell n^2/\delta)+1)\cdot \len(Q)$.
 \end{enumerate}
 The algorithm reports failure with probability at most $\delta$.
\end{lemma}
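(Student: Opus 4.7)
The plan is to iteratively invoke Lemma~\ref{lem:single-level} in a top-down fashion. I initialize $\partition_\ell \coloneqq \{V(G)\}$, and for $i=\ell-1,\ell-2,\ldots,0$, for each cluster $C\in\partition_{i+1}$ separately, I run the procedure of Lemma~\ref{lem:single-level} on the connected subgraph $G[C]$ with radius parameter $r_i \coloneqq 2^{i-1}/\lambda$, where $\lambda \coloneqq \ln(2\ell n^2/\delta)+1$; the union of the resulting sub-partitions over all $C\in \partition_{i+1}$ forms $\partition_i$. That $\partition_i$ refines $\partition_{i+1}$ is immediate, each cluster of $\partition_i$ induces a connected subgraph by~\ref{p:conn}, and at the bottom level the diameter-at-most-$1$ guarantee combined with the assumption that distinct vertices are at distance strictly greater than $1$ forces every cluster of $\partition_0$ to be a singleton, so the result is indeed a clustering chain.

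To establish (Q1) and (Q2), I invoke properties~\ref{p:radii} and~\ref{p:quo} with the auxiliary parameter $t$ chosen so that $t+1+\ln n = \lambda$, that is, $t = \ln(2\ell n/\delta)$. Property~\ref{p:radii} then guarantees that every produced cluster has diameter at most $2r_i\lambda = 2^i$, matching~(Q1). Combined with the inductive diameter bound $\mathrm{diam}(G[C])\leq 2^{i+1}$ inherited from~(Q1) at the previous level, property~\ref{p:quo} gives that $G[C]/\partition_i[C]$ has diameter at most $120\cdot\tfrac{2^{i+1}}{r_i}\cdot\lambda = 480\lambda^2$, matching~(Q2). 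Each bad event (violating the diameter bound of~\ref{p:radii} or the quotient-diameter bound of~\ref{p:quo}) has probability at most $e^{-t}=\delta/(2\ell n)$ per invocation; a union bound over the at most $\ell n$ invocations ($\leq n$ clusters per level across $\ell$ levels) and the two possible bad events per invocation yields overall failure probability at most $\delta$, and the algorithm reports failure whenever any such event is observed.

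For (Q3), fix a path $Q$ in $G$ and a level $i\in\{0,\ldots,\ell-1\}$. An edge of $Q$ has level $i$ precisely when, for the unique cluster $C\in\partition_{i+1}$ containing both its endpoints, it is cut by the single-level invocation run on $G[C]$. The trace of $Q$ inside such $C$ decomposes into subpaths of $G[C]$ of total length $\len(Q\cap C)$. Since distances in $G[C]$ lie in $(1,2^{i+1}]$, the stretch exponent of $G[C]$ appearing in property~\ref{p:geo} is at most $i+1\leq\ell$, so~\ref{p:geo} bounds the expected number of cut edges on these subpaths by $\tfrac{\len(Q\cap C)}{r_i}\cdot 4\ell$. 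Summing over $C\in\partition_{i+1}$ and taking the outer expectation over the random $\partition_{i+1}$ yields $\tfrac{\len(Q)}{r_i}\cdot 4\ell = 2^{-i}\cdot 8\ell\lambda\cdot\len(Q)$, which is exactly~(Q3). The unconditional nature of this expectation is unproblematic: nothing in the argument requires conditioning on non-failure.

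The step requiring the most care will be calibrating $r_i$ so that three targets simultaneously fit: small enough for~\ref{p:radii} to drive cluster diameters down to $2^i$, large enough for~\ref{p:quo} to keep the quotient diameter polylogarithmic in $n$, $\ell$, and $1/\delta$, and balanced so that the expected cut bound from~\ref{p:geo} produces the $2^{-i}$ decay required in (Q3). A secondary subtlety is the base case $i=0$, where~\ref{p:radii} only yields cluster diameter at most $1$; the reduction to singletons then relies entirely on the hypothesis that pairwise distances strictly exceed $1$.
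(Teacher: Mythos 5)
Your proposal is correct and follows essentially the same route as the paper: the same top-down construction applying Lemma~\ref{lem:single-level} with $r_i = 2^{i-1}/(\ln(2\ell n^2/\delta)+1)$, the same choice $t=\ln(2\ell n/\delta)$ with a union bound over at most $2\ell n$ bad events to get failure probability $\delta$, and the same linearity-of-expectation argument over the subpaths inside clusters of $\partition_{i+1}$ for~\ref{p:stmt:level}. Your explicit remarks on the singleton base case and on the local stretch exponent being at most $\ell$ match what the paper uses implicitly.
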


The remainder of this section is devoted to the proof of Lemma~\ref{lem:hierarchical}.

The algorithm constructing the chain $\partition_0,\partition_1,\ldots,\partition_\ell$ is very simple. First, set $\partition_\ell\coloneqq V(G)$. Then, iteratively for every $i=\ell-1,\ldots,1$, consider the graph $G[C]$ for every part $C\in \partition_{i+1}$, and apply the algorithm of Lemma~\ref{lem:single-level} to $G[C]$ with parameter 
\[r_i\coloneqq \frac{2^{i-1}}{\ln(2\ell n^2/\delta)+1}.\] 
Thus we obtain a partition $\partition_i[C]$ of $C$, and we set $\partition_i\coloneqq \bigcup_{C\in \partition_{i+1}} \partition_i[C]$.

Call the resulting sequence of partitions $\Ll\coloneqq (\partition_0,\partition_1,\ldots,\partition_\ell)$ {\em{good}} if the following conditions hold:
\begin{itemize}[nosep]
 \item For each $i\in \{0,\ldots,\ell\}$ and $C\in \partition_i$, the graph $G[C]$ has diameter at most $2^i$.
 \item For each $i\in \{0,\ldots,\ell-1\}$ and each cluster $C\in \partition_{i+1}$, the quotient graph $G[C]/\partition_i[C]$ has diameter at most $480(\ln(2\ell n^2/\delta)+1)^2$.
\end{itemize}
Note that if $\Ll$ is good, then it is a clustering chain. Indeed, then for every cluster $C\in \partition_0$, $G[C]$ has diameter at most $1$, so since distances in $G$ are always strictly larger than $1$, $C$ cannot have more than one vertex. All other properties of a clustering chain follow directly from the construction.

We now verify that $\Ll$ is good with high probability.

\begin{claim}
We have
\[\Prob(\Ll\textrm{ is not good})\leq \delta.\]
\end{claim}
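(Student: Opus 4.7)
The plan is a union bound over failure events of each invocation of Lemma~\ref{lem:single-level} made by the algorithm. The algorithm invokes the lemma once per pair $(i,C)$ with $i$ ranging over the $\ell-1$ levels below $\ell$ and $C \in \partition_{i+1}$; since each $\partition_{i+1}$ has at most $n$ parts, the total number of invocations is at most $\ell n$. At each invocation I would instantiate both (P2) and (P3) with the common parameter $t = \ln(2\ell n/\delta)$, so that each individually fails with probability at most $e^{-t} = \delta/(2\ell n)$.

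Plugging in $r_i = 2^{i-1}/(\ln(2\ell n^2/\delta)+1)$ and using $n' = |C| \leq n$, the cluster-radius threshold of (P2) simplifies to $2r_i(t+1+\ln n') \leq 2^i$ --- the slack $\ln n - \ln n'$ absorbs the $t$ term --- which is exactly condition~1 in the definition of ``good''. Similarly, the quotient-diameter threshold of (P3) becomes $120 D/r_i \cdot (t+1+\ln n') \leq 480(\ln(2\ell n^2/\delta)+1)^2$, provided that the input diameter satisfies $D \leq 2^{i+1}$, which matches condition~2. I would present these two calculations as the ``quantitative core'' of the argument, but they are routine once the parameters are lined up.

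The delicate point is that the hypothesis $D \leq 2^{i+1}$ needed to invoke (P3) at level $i$ is itself a consequence of (P2) succeeding at level $i+1$, so the level-wise dependencies must be handled cleanly. The trick I would use is to weaken the ``(P3)-bad'' event at each invocation to the unconditional form ``the quotient exceeds $120 D/r_i (t+1+\ln n')$ for the actual input diameter $D$'', whose probability is $\leq \delta/(2\ell n)$ by (P3) without any reference to higher levels. One then verifies by a straightforward top-down induction that if none of the resulting $\leq 2\ell n$ bad events occurs, then starting from $\partition_\ell = \{V(G)\}$ (whose diameter is $\leq 2^\ell$ by assumption) one has $D \leq 2^{i+1}$ at every invocation, so the (P3) threshold indeed collapses to $480(\ln(2\ell n^2/\delta)+1)^2$ as required. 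A final union bound yields $\Prob(\Ll \textrm{ not good}) \leq 2\ell n \cdot \delta/(2\ell n) = \delta$.

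The main obstacle I anticipate is precisely this decoupling of the per-invocation failure from the input-diameter hypothesis: a naive phrasing would condition (P3) at level $i$ on the success of (P2) at level $i+1$, producing an awkward nested argument. Replacing the ``condition~2 is violated'' event by its looser unconditional relaxation (at no asymptotic cost) sidesteps the circular conditioning and lets the union bound go through cleanly.
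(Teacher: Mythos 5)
Your proposal is correct and follows essentially the same route as the paper: instantiate properties (P2) and (P3) with $t=\ln(2\ell n/\delta)$ in each of the at most $\ell n$ invocations, observe that the diameter hypothesis $D\leq 2^{i+1}$ needed at level $i$ is supplied by (P2) not failing at level $i+1$, and union-bound over the at most $2\ell n$ failure events, each of probability $\delta/(2\ell n)$. The decoupling of the (P3) event from the diameter hypothesis that you highlight is exactly how the paper handles the level-wise dependence, so there is nothing to add.
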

\begin{proof}
 Observe that $\Ll$ is good provided the in each application of the algorithm of Lemma~\ref{lem:single-level}, the events described is properties~\ref{p:radii} and~\ref{p:quo} do not happen, where in both properties we set \[t\coloneqq \ln(2\ell n/\delta).\]
 Indeed, when we apply the algorithm to $G[C]$ for some cluster $C\in \partition_{i+1}$ ($i\in \{0,1,\ldots,\ell-1\}$), the two events described in properties~\ref{p:radii} and~\ref{p:quo} can be rephrased as follows:
 \begin{itemize}[nosep]
  \item Property~\ref{p:radii}: There is a cluster $C'\in \partition_i[C]$ such that $G[C']$ has diameter larger than
  \[2r_i(t+1+\ln n)=2^i.\]
  \item Property~\ref{p:quo}: The quotient graph $G[C]/\partition_i[C]$ has diameter larger than
  \[120\cdot \frac{D}{r_i}\cdot (t+1+\ln n)=120 D\cdot \frac{(\ln(2\ell n^2/\delta)+1)^2}{2^{i-1}},\]
  where $D$ is the diameter of $G[C]$. Assuming that $D\leq 2^{i+1}$ (which is implied by the event from property~\ref{p:radii} not holding in the run of the algorithm that constructed $C$), the event above contains the event that $G[C]/\partition_i[C]$ has diameter larger than $480 (\ln(2\ell n^2/\delta)+1)^2$.
 \end{itemize}
 By Lemma~\ref{lem:single-level}, each of the above events happens with probability at most $e^{-t}=\frac{\delta}{2\ell n}$. Since there are at most $2\ell n$ events in question, by the union bound we conclude that $\partition_0,\partition_1,\ldots,\partition_\ell$ is not good with probability at most $\delta$.
\end{proof}

Therefore, the algorithm can simply report failure if $\Ll$ is not good. So from now on we may assume that $\Ll$ is good, and therefore a clustering chain. Properties~\ref{p:hiercl} and~\ref{p:hierdiam} follow directly from the goodness of~$\Ll$, so to finish the proof, it remains to show that $\Ll$ also satisfies property~\ref{p:level}. We do this in the next~claim.

\begin{claim}\label{lem:part:cut2}
For every path $Q$ in $G$ and $i\in \{0,\ldots,\ell-1\}$, the expected number of edges of $Q$ of level $i$ is upper bounded by $2^{-i}\cdot 8\ell(\ln(2\ell n^2/\delta)+1)\cdot \len(Q)$.
\end{claim}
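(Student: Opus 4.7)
The plan is to reduce directly to property~\ref{p:geo} of Lemma~\ref{lem:single-level}, applied one level at a time. By definition, an edge $e$ of $Q$ has level exactly $i$ precisely when its endpoints lie in the same cluster of $\partition_{i+1}$ but in different parts of $\partition_i$; equivalently, $e$ is separated when $G[C]$ is partitioned into $\partition_i[C]$ via the call to Lemma~\ref{lem:single-level} with parameter $r_i$, where $C \in \partition_{i+1}$ is the cluster containing both endpoints of $e$. It thus suffices to bound, for each $C \in \partition_{i+1}$, the expected number of edges of $Q$ separated by that call, and to sum over $C$.

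For each $C \in \partition_{i+1}$, the edges of $Q$ with both endpoints in $C$ form a disjoint union of subpaths $Q_C^1, Q_C^2, \ldots$ of $Q$, each of which is a path in $G[C]$. I would then apply property~\ref{p:geo} of Lemma~\ref{lem:single-level} to $G[C]$ and each $Q_C^j$ separately: the expected number of edges of $Q_C^j$ whose endpoints land in different parts of $\partition_i[C]$ is at most $\tfrac{\len(Q_C^j)}{r_i}\cdot 4\ell_C$, where $\ell_C$ is the least integer such that the stretch of the metric induced by $G[C]$ is strictly less than $2^{\ell_C}$. Here I would observe that $\ell_C \leq \ell$: the distances in $G[C]$ are at least the corresponding distances in $G$ (hence strictly greater than $1$), while the diameter of $G[C]$ is at most $2^{i+1} \leq 2^\ell$ by the goodness of $\Ll$, so the stretch of $G[C]$ is strictly less than $2^\ell$.

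Summing over $j$ and over $C \in \partition_{i+1}$, and using the sure inequality $\sum_{C,j}\len(Q_C^j) \leq \len(Q)$ (each edge of $Q$ contributes to at most one $Q_C^j$, namely when both its endpoints lie in the same cluster of $\partition_{i+1}$), linearity of expectation yields an upper bound of $\tfrac{4\ell\cdot \len(Q)}{r_i}$ on the expected number of level-$i$ edges. Plugging in $r_i = 2^{i-1}/(\ln(2\ell n^2/\delta)+1)$ gives exactly the stated bound $2^{-i}\cdot 8\ell(\ln(2\ell n^2/\delta)+1)\cdot \len(Q)$. The argument is essentially bookkeeping on top of property~\ref{p:geo}, so I do not anticipate any substantive obstacle; the only delicate point is that the randomness used to refine $\partition_{i+1}$ into $\partition_i$ is independent of the randomness at higher levels, which is what allows property~\ref{p:geo} to be invoked conditional on $\partition_{i+1}$ and then averaged out.
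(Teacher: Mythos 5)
Your proof is correct and follows essentially the same route as the paper's: split $Q$ into the maximal subpaths lying inside single clusters of $\partition_{i+1}$, apply property~\ref{p:geo} of Lemma~\ref{lem:single-level} with parameter $r_i$ to each such subpath, and sum via linearity of expectation. Your extra care in checking that the stretch parameter of $G[C]$ satisfies $\ell_C\leq\ell$ and in conditioning on $\partition_{i+1}$ before averaging makes explicit two points the paper leaves implicit, but it is the same argument.
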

\begin{proof}
Removing edges of levels larger than $i$ breaks $Q$ into subpaths $Q_1,Q_2,\ldots,Q_k$, where each path $Q_j$ is entirely contained in a single cluster of $\partition_{i+1}$. By property~\ref{p:geo} of Lemma~\ref{lem:single-level}, the expected number of edges of level $i$ on each path $Q_j$, $j\in \{1,\ldots,k\}$ is upper bounded by 
\[\frac{\len(Q_j)}{r_i}\cdot 4\ell=2^{-i}\cdot 8\ell(\ln(2\ell n^2/\delta)+1)\cdot \len(Q_j).\]
By the linearity of expectation, the expected total number of edges of level $i$ on $Q$ is upper bounded by
\[2^{-i}\cdot 8\ell(\ln(2\ell n^2/\delta+1)\cdot \sum_{j=1}^k\len(Q_j)\leq 2^{-i}\cdot 8\ell(\ln(2\ell n^2/\delta)+1)\cdot \len(Q).\qedhere\]
\end{proof}

This completes the proof of Lemma~\ref{lem:hierarchical}.

\section{Treewidth tools}\label{sec:tw}

In this section we prepare tools related to treewidth.
We first give a combinatorial lemma about apex-minor-free graphs that can be regarded as a variation on Lemma~\ref{lem:tw:balls}. Let us start with a definition.

For a connected unweighted graph $G$ and integers $a,b,c\geq 0$, an \emph{$(a,b,c)$-contraction sequence} consists~of
\begin{itemize}[nosep]
\item a sequence of graphs $G_0,G_1,\ldots,G_b$, and
\item for every $1 \leq i \leq b$, a collection of pairwise disjoint subgraphs $H_i^1,\ldots,H_i^{a_i}$
of $G_{i-1}$,
\end{itemize}
such that the following conditions hold:
\begin{itemize}[nosep]
\item $G_0 = G$ and $G_b$ is a one-vertex graph.
\item Each $H_i^j$ is of radius at most $c$.
\item $G_i$ is obtained from $G_{i-1}$ by contracting each of the subgraphs $H_i^j$, $1 \leq j \leq a_i$, to a single vertex.
\item It holds that $\sum_{i=1}^b a_i \leq a$.
\end{itemize}
With this definition in place, we can state and prove the promised lemma.

\begin{lemma}\label{lem:tw}
For every apex graph $K$ there exists a constant $\alpha_K$ such that
if a $K$-minor-free graph $G$ admits an $(a,b,c)$-contraction sequence for some integers $a,b,c \geq 1$, 
   then the treewidth of $G$ is bounded by $\alpha_K bc \sqrt{a}$. 
\end{lemma}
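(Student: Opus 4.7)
The plan is to reduce the lemma to Lemma~\ref{lem:tw:balls}: I will exhibit a subset $Z \subseteq V(G)$ with $|Z| \leq a$ such that every vertex of $G$ is within graph-distance $\Oh(bc)$ from $Z$. Since $G$ is $K$-minor-free with $K$ apex, Lemma~\ref{lem:tw:balls} then gives $\mathrm{tw}(G) \leq \alpha_K \cdot \Oh(bc) \cdot \sqrt{a}$, matching the claimed bound after absorbing a constant factor into $\alpha_K$.

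To construct $Z$, I will pick, for each contraction operation $(i,j)$ in the sequence, a single representative vertex $z_i^j \in V(G)$ by following a ``chain of centers'' down to a genuine vertex of $G$. Concretely, fix a center $v_i^j \in V(H_i^j) \subseteq V(G_{i-1})$ realising the radius bound on $H_i^j$. The super-vertex $v_i^j$ is either a singleton (a vertex of $V(G)$ that has not yet been touched by any contraction) or the result of an earlier contraction $(i',j')$ with $i' < i$; in the first case set $z_i^j := v_i^j$ and in the second set $z_i^j := z_{i'}^{j'}$, unrolling recursively. Since each iteration of the sequence adds at most $a_i$ such representatives, $|Z| \leq \sum_{i=1}^b a_i \leq a$.

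For the dominating property, I will trace an arbitrary $w \in V(G)$ through the contractions. Because $G_b$ is a one-vertex graph, $w$'s blob is subsumed into a larger one at levels $i_1 < i_2 < \cdots < i_t \leq b$, with $t \leq b$. At each such level $i_k$, $w$'s current blob lies inside some $H_{i_k}^{j_k}$ at $G_{i_k-1}$-distance at most $c$ from its center $v_{i_k}^{j_k}$, whose chain-lift is exactly $z_{i_k}^{j_k} \in Z$. The goal is to show by induction on $k$ that $\dist_G(w, z_{i_k}^{j_k}) \leq k\,c$, yielding $\Oh(bc)$ overall.

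The main obstacle is precisely the translation of distances from the contracted graph $G_{i-1}$ back to $G$. A single edge of $G_{i-1}$ may correspond to a traversal of a whole blob in $G$, and because blob diameters themselves grow recursively through the contraction sequence, a naive per-level lifting compounds multiplicatively and produces a useless $\Oh(c^b)$ bound. The fix is to route all such ``in-blob'' detours through representatives that are already in $Z$: whenever a lifted path enters a blob $B$ produced by an earlier contraction, the representative of $B$ is already part of $Z$, and we charge the cost to the inductive bound on distance-to-representative for that earlier blob rather than re-traversing $B$. Arranged as a telescoping induction along the chain of centers, this argument collapses the would-be exponential accumulation into an additive sum in which each level contributes at most~$c$, giving the desired linear-in-$b$ bound.

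Once the bound $\dist_G(w,Z) \leq \Oh(bc)$ is established for all $w \in V(G)$, Lemma~\ref{lem:tw:balls} applied to the $K$-minor-free graph $G$ with the set $Z$ of size at most $a$ and radius parameter $d = \Oh(bc)$ yields $\mathrm{tw}(G) \leq \alpha_K \cdot \Oh(bc) \cdot \sqrt{a}$, which, after adjusting the constant~$\alpha_K$, is the claimed bound $\alpha_K bc\sqrt{a}$.
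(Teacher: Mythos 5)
Your high-level plan coincides with the paper's: take $Z$ to be the (chain-lifted) centers, note $|Z|\le a$, show every vertex of $G$ is within distance $\Oh(bc)$ of $Z$, and finish with Lemma~\ref{lem:tw:balls}. The gap is in the domination argument. Your inductive statement, $\dist_G(w,z_{i_k}^{j_k})\le k\,c$ where $z_{i_k}^{j_k}$ is the representative of the \emph{specific} blob absorbing $w$'s blob at its $k$-th absorption, is false. Counterexample with $c=1$: let $G$ be a long path $p_0\ldots p_m$ with pendant vertices $w$ and $x$ attached to its two ends; contract adjacent pairs along the path for roughly $\log_2 m$ rounds until the whole path is one blob $B$, then contract $\{w,B,x\}$. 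The center of this last contraction must be $B$ (radius $1$ forces it), so at $w$'s first absorption ($k=1$) the designated target is a representative $z$ lying somewhere inside the original path; attaching $w$ to the end of the path far from $z$ gives $\dist_G(w,z)\ge m/2$, while your claim promises $\le c=1$ (it even exceeds the final budget $bc=\Oh(\log m)$). The nearest $Z$-vertex to $w$ is simply not the representative along $w$'s own chain of centers; it is some center created in an earlier round that happens to be nearby.

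Your ``fix'' points in the right direction --- when the lifted path hits an earlier blob, stop at a $Z$-vertex associated with that blob instead of traversing it --- but as written it is incompatible with the induction you set up: once you settle for a different $Z$-vertex you are no longer bounding $\dist_G(w,z_{i_k}^{j_k})$, and the quantity you want to charge to (``distance-to-representative for that earlier blob'') concerns a different vertex (the entry point into that blob), which your hypothesis, indexed by $w$'s absorption count $k$, says nothing about. To repair it you must change both the target and the induction parameter: bound the distance to the \emph{set} $Z$, and induct over contraction rounds. That reformulation is exactly the content of the paper's proof: identifying each contracted blob with its center (a member of $Z$), every $V(G_i)$ is a subset of $V(G)$, and one proves the per-level inequality \eqref{eq:tw:radius}, namely $\dist_{G_{i-1}}(v,Z\cap V(G_{i-1}))\le \dist_{G_i}(v,Z\cap V(G_i))+c$; this works because a shortest path in $G_i$ to the \emph{nearest} $Z$-vertex cannot pass through any blob vertex internally (all blob vertices are in $Z$), so only its last edge needs repairing, at additive cost $c$ per level, yielding $\dist_G(v,Z)\le bc$ after $b$ levels. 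An equivalent correct version of your walk-in-$G$ argument would be: ``every vertex absorbed by the end of round $i$ is within $G$-distance $ic$ of $Z$,'' proved by induction on $i$, stopping the walk at the first earlier blob it meets and applying the hypothesis to the entry vertex of that blob. Until the induction is restated in one of these forms, the compounding-of-blob-diameters problem you correctly identified remains unresolved.
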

\begin{proof}
We let $\alpha_K$ be the constant provided by Lemma~\ref{lem:tw:balls}.

For $1 \leq i \leq b$ and $1 \leq j \leq a_i$, let $v_i^j$ be a vertex of $H_i^j$ such that all vertices of $H_i^j$
are within distance~$c$ from $v_i^j$ in $H_i^j$. 
When $H_i^j$ is contracted onto a single vertex in the construction of $G_i$ from~$G_{i-1}$, we imagine that $H_i^j$ is contracted onto $v_i^j$ and we keep the name
$v_i^j$ for the obtained vertex. Thus, we have $V(G_i) = \{v_i^j~\colon~1 \leq i \leq a_i\} \cup (V(G_{i-1}) \setminus \bigcup_{j=1}^{a_i} V(H_i^j))$. In particular, $V(G_i) \subseteq V(G)$ for every~$0 \leq i \leq b$. 

Let $Z = \{v_i^j~\colon~1 \leq i \leq b, 1 \leq j \leq a_i\}$;
thus $Z$ is a subset of $V(G)$ of size at most $a$.
The crucial observation is as follows: for every $1 \leq i \leq b$ and $v \in V(G_i)$, we
have
\begin{equation}\label{eq:tw:radius}
\mathrm{dist}_{G_i}(v, Z \cap V(G_i)) \geq \mathrm{dist}_{G_{i-1}}(v, Z \cap V(G_{i-1})) - c.
\end{equation}
Indeed, consider a shortest path $Q$ connecting $v$ with a vertex of $Z \cap V(G_i)$ in $G_i$.
All vertices and edges of $Q$, except for possibly the last edge and the endpoint $z \in Z \cap V(G_i)$, exist in $G_{i-1}$. 
Thus, $Q$ either is entirely contained in $G_{i-1}$, or it can be extended to a path from $v$ to $Z \cap V(G_{i-1})$ in $G_{i-1}$ 
by appending a path of length at most $c$ within the graph $H_i^j$ to which $z$ belongs. This proves~\eqref{eq:tw:radius}.

As $G_b$ consists of a single vertex, from~\eqref{eq:tw:radius} we obtain
that every vertex of $G_0=G$ is within distance at most $bc$ from a vertex of $Z$. As $|Z| \leq a$, 
the lemma follows from Lemma~\ref{lem:tw:balls}.
\end{proof}

Next, we use Lemma~\ref{lem:tw} to prove a statement that is the cornerstone of our approach. Intuitively it says that in a clustering chain of an edge-weighted apex-minor-free graph, one can find many disjoint balanced separators consisting of clusters. Again, we need a few definitions.

For a (possibly edge-weighted) graph $G$, a {\em{cut}} is a family of pairwise disjoint vertex subsets of $G$, each inducing a connected subgraph of $G$.
For a cut $\Aa$, let $F(\Aa)$ be the set of those edges $e$ of $G$ for which there exists $A \in \Aa$
that contains exactly one endpoint of $e$.
Note that since every part $A \in \Aa$ induces a connected subgraph of $G$, $G[A]$ is a connected component of $G-F(\Aa)$. We say that a cut $\Aa$ is \emph{balanced} if the vertex set of every component of $G-F(\Aa)$ is contained in $\Aa$
or consists of at most $|V(G)|/2$ vertices. A pair of cuts $\Aa,\Aa'$ is {\em{non-conflicting}} if every set in $\Aa\cap \Aa'$ consists of one vertex.

Let $\Ll=(\partition_0,\partition_1,\ldots,\partition_k)$ be a clustering chain of a connected graph $G$. We say that a cut $\Aa$ {\em{respects}}~$\Ll$ if $\Aa\subseteq \bigcup_{i=0}^k \partition_i$; in other words, every part $A\in \Aa$ is a cluster belonging to one of partitions $\partition_0,\partition_1,\ldots,\partition_k$. A {\em{cut packing}} in $\Ll$ is a family of pairwise non-conflicting cuts that respect $\Ll$.

\begin{lemma}\label{lem:emb:crucial}
For every apex graph $K$ there exists a constant $\gamma_K>0$ such that the following holds.
Let $G$ be a $K$-minor-free graph and let $\Ll=(\partition_0,\partition_1,\ldots,\partition_k)$ be a clustering chain of $G$ such that for every $i\in \{0,1,\ldots,k-1\}$ and $C\in \partition_{i+1}$, the quotient graph $G[C]/\partition_i[C]$ has radius at most $c$. Suppose $\Ff$ is a cut packing in $\Ll$ such that every cut in $\Ff$ has size at most $\tau$ and
\[ |\Ff|\leq \gamma_K\cdot \frac{\tau}{k^2c^2}.\]
Then there exists a balanced cut $\Aa$ respecting $\Ll$ and of size at most $\tau$ such that $\Ff\cup \{\Aa\}$ is also a cut packing in $\Ll$. Moreover, given $G$, $\Ll$, and $\Ff$, such a balanced cut $\Aa$ can be found in polynomial time.
\end{lemma}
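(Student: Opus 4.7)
The plan is as follows. Let $\mathcal{B}=\bigcup_{\Aa'\in\Ff}\Aa'$ be the set of all clusters appearing in some cut of $\Ff$, and set $\mathcal{B}'=\{C\in\mathcal{B}\colon |C|\geq 2\}$; since each cut in $\Ff$ has at most $\tau$ elements, $|\mathcal{B}'|\leq |\mathcal{B}|\leq \tau|\Ff|$. For $\Ff\cup\{\Aa\}$ to be a cut packing it suffices that every multi-vertex cluster of $\Aa$ avoid $\mathcal{B}'$, since the non-conflicting condition is automatic for singletons. The strategy is to build an auxiliary quotient graph $G^*$ of $G$ whose vertices correspond precisely to the ``allowed'' clusters, bound its treewidth via Lemma~\ref{lem:tw} using a contraction sequence guided by $\Ll$ and $\mathcal{B}'$, and then read $\Aa$ off a low-width tree decomposition of $G^*$ as a weighted balanced separator.

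If $V(G)\notin\mathcal{B}'$, then $\Aa=\{V(G)\}$ is a trivial balanced cut of size $1$ meeting all requirements, so assume $V(G)\in\mathcal{B}'$. For every $v\in V(G)$, consider its ancestor chain $\{v\}=C_0(v)\subsetneq C_1(v)\subsetneq\cdots\subsetneq C_k(v)=V(G)$ in the cluster tree of $\Ll$, and set $C^*(v)=C_{h(v)}(v)$ where $h(v)$ is the largest $j$ with $C_j(v)\notin\mathcal{B}'$; this is well defined because singletons are not in $\mathcal{B}'$. Let $\mathcal{P}^*=\{C^*(v)\colon v\in V(G)\}$; one checks that $\mathcal{P}^*$ is a partition of $V(G)$ whose parts are clusters of $\Ll$, each being either a singleton or a multi-vertex cluster outside $\mathcal{B}'$. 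Define $G^*=G/\mathcal{P}^*$; being a minor of $G$, it is $K$-minor-free.

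To bound $\mathrm{tw}(G^*)$, let $T^+$ consist of those clusters $C$ such that $C$ and all its strict ancestors lie in $\mathcal{B}'$; then $T^+$ forms a subtree of the cluster tree rooted at $V(G)$, $T^+\subseteq\mathcal{B}'$, and the parts of $\mathcal{P}^*$ are exactly the children of $T^+$-clusters that do not themselves lie in $T^+$. I will exhibit a $(|T^+|,k,c)$-contraction sequence for $G^*$: for $i=1,2,\ldots,k$ and every $C\in\partition_i\cap T^+$, contract in the current graph the subgraph spanned by the vertices lying inside $C$. A direct induction on $i$ shows that these vertices are in bijection with $\partition_{i-1}[C]$ (a sub-cluster $C'\in \partition_{i-1}[C]\cap T^+$ was merged at step $i-1$, while $C'\in\partition_{i-1}[C]\setminus T^+$ is a $\mathcal{P}^*$-cluster untouched so far), so the contracted subgraph is isomorphic to $G[C]/\partition_{i-1}[C]$, of radius at most $c$ by hypothesis. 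The last step $i=k$ collapses everything via the unique cluster $V(G)\in\partition_k\cap T^+$. Applying Lemma~\ref{lem:tw} yields $\mathrm{tw}(G^*)\leq \alpha_K\cdot kc\cdot\sqrt{|T^+|}\leq \alpha_K kc\sqrt{\tau|\Ff|}$, which is at most $\tau/2$ once one sets $\gamma_K=1/(4\alpha_K^2)$.

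Finally, weight each vertex of $G^*$ by the cardinality of the corresponding $\mathcal{P}^*$-cluster (total weight $|V(G)|$) and use the standard balanced-separator argument on a tree decomposition of $G^*$ to produce $\Aa\subseteq V(G^*)=\mathcal{P}^*$ of size at most $\mathrm{tw}(G^*)+1\leq\tau$ with every component of $G^*-\Aa$ having weight at most $|V(G)|/2$. Every component of $G-F(\Aa)$ is then either some $A\in\Aa$ or corresponds to a component of $G^*-\Aa$ whose $V(G)$-cardinality equals its $G^*$-weight, so $\Aa$ is balanced; respecting $\Ll$ and non-conflict with $\Ff$ follow immediately from $\Aa\subseteq\mathcal{P}^*$. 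The main technical obstacle is the inductive bookkeeping in the contraction sequence: one must verify carefully that, because $\mathcal{P}^*$ was defined to ``expand'' $\mathcal{B}'$ exactly down to the maximal usable clusters below it, the subgraph contracted at step $i$ inside each $C\in\partition_i\cap T^+$ really is the hypothesized small-radius quotient $G[C]/\partition_{i-1}[C]$, and that all such subgraphs across the $k$ steps are pairwise disjoint so that Lemma~\ref{lem:tw} applies with the claimed parameters.
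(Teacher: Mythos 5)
Your proposal is correct and follows essentially the same route as the paper's proof: your $\mathcal{P}^*$ is exactly the paper's partition into maximal ``free'' clusters (a cluster being free when it is a singleton or is not used by any cut of $\Ff$), your $T^+$-guided contraction sequence is the paper's $(\tau|\Ff|,k,c)$-contraction sequence obtained by contracting the used clusters level by level, and the conclusion is drawn identically via Lemma~\ref{lem:tw} together with a weighted balanced bag of a tree decomposition of the quotient graph. The only point to tighten is the algorithmic claim: an optimal tree decomposition of $G^*$ cannot be assumed to be computable in polynomial time, so (as the paper does) you should invoke a polynomial-time constant-factor treewidth approximation for $K$-minor-free graphs and absorb that approximation constant into $\gamma_K$; this changes nothing else in your argument.
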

\begin{proof}
Let $\wh{\partition}\coloneqq \bigcup_{i=0}^k \partition_i$; the elements of $\wh{\partition}$ will be called {\em{clusters}}. Call a cluster $D\in \wh{\partition}$ {\em{free}} if $D$ consists of one vertex or is not contained in any member of $\Ff$. If $D$ is not free, it is called {\em{used}}.
A free cluster $D$ is \emph{maximal} if there is no cluster $D'\supseteq D$ that is free. Let $\Dd$ be the family consisting of all maximal free clusters. Note that $\Dd$ is a partition of $V(G)$. Let $H=G/\Dd$. In other words, $H$ is obtained from $G$ by contracting every maximal free cluster into a single vertex.

Let $\xi\coloneqq |\Ff|$. The crucial observation is as the following: $H$ admits a $(\tau\xi,k,c)$-contraction sequence. Indeed, we can proceed through levels $i=1,2,\ldots,k$, and upon considering level $i$, contract all used clusters contained in $\partition_i$ that have not been contracted so far. Since there are at most $\tau\xi$ used clusters in total, and at every point we contract a graph of radius at most $c$ (due to the assumption on the quotient graphs appearing in the clustering chain $\Ll$), this is a $(\tau\xi,k,c)$-contraction sequence.

By Lemma~\ref{lem:tw}, the treewidth of $H$ is bounded by $\alpha_K\cdot \sqrt{\tau\xi} \cdot kc$ for a constant
$\alpha_K>0$ depending only on $K$. Using the algorithm of Feige et al.~\cite[Theorem~6.4]{FeigeHL08}, in polynomial time we can compute a tree decomposition $(T,\bag)$ of $G$ where every bag has size at most $\beta_K\cdot \sqrt{\tau\xi}\cdot kc$, where $\beta_K>0$ is again a constant depending only on $K$.

We now use the following standard statement about tree decompositions, see e.g.~\cite[Lemma 7.19]{platypus}.

\begin{claim}\label{cl:balbag} Suppose $(T,\bag)$ is a tree decomposition of a graph $G$ and $\omega\colon V(G)\to \R_{\geq 0}$ is a nonnegative weight function on the vertices of $G$. Then there exists a node $x$ of $T$ such that for every connected component $J$ of $G-\bag(x)$, we have $\omega(V(J))\leq \omega(V(G))/2$.
\end{claim}

Setting $\omega(D)\coloneqq |D|$ for each $D\in \Dd$, we find that there exists a node $x$ of $T$ such that every connected component of $H-\bag(x)$ consists of clusters of total size at most $|V(G)|/2$. Note that such a node $x$ can be found in polynomial time by examining the nodes of $T$ one by one. Denote $\Aa\coloneqq \bag(x)$.

As $\Aa\subseteq V(H)=\Dd\subseteq \wh{\partition}$, $\Aa$ is a cut that respects $\Ll$.
Since $\Aa$ consists only of free clusters, it is also non-conflicting with every member of $\Ff$. Therefore, $\Ff\cup \{\Aa\}$ is a cut packing in $\Ll$. Further, the property of $\Aa$ described in the previous paragraph ensures that $\Aa$ is balanced. Finally, observe that
\[|\Aa|\leq \beta_K\cdot \sqrt{\tau\xi}\cdot kc\leq \frac{\beta_K}{\sqrt{\gamma_K}}\cdot \tau.\]
Hence, if we set $\gamma_K\coloneqq \beta_K^2$, then we have $|\Aa|\leq \tau$ and $\Aa$ satisfies all the required properties.
\end{proof}

\newcommand{\embed}{\mathtt{embed}}
\newcommand{\findsplit}{\mathtt{split}}

\newcommand{\totell}{{\hat{\ell}}}

\section{Embedding} \label{sec:embed}

In this section we prove Theorem~\ref{thm:main} for the case of apex-minor-free graphs.
Let then $G$ be the input edge-weighted graph that is $K$-minor-free, where $K$ is a fixed apex graph. We may assume that $G$ is connected, for otherwise we treat every connected component of $G$ separately.
We may also assume that for every $uv \in E(G)$, we have $\len(uv) = \dist_G(u,v)$.  
Let $n$ be the vertex count of $G$
and let $\totell$ be the least integer such that the ratio of the sum of the weights of all edges of $G$ to the smallest
edge weight of $G$ is strictly less than $2^\totell$.
Note that $\totell = \Oh_K(\ell + \ln n)$ where $\ell$ is the logarithm of the stretch of the metric of~$G$.
By rescaling, we assume that every distance in $G$ is larger than one and the total length of all edges of $G$
is at most $2^\totell$.
We have that any simple path in $G$ has length at most $2^\totell$, hence, in particular,
  any connected subgraph of $G$ has diameter at most $2^\totell$.

Let us also fix the accuracy parameter $\eps>0$ given on input.

\paragraph*{Procedure $\embed$.}
We first describe a recursive embedding procedure $\embed$. The input to the procedure is a connected induced subgraph $G'$ of $G$. The output is a metric embedding $(H',\eta')$ of $G'$ together with an elimination forest $F'$ of $H'$. Procedure $\embed(G')$ proceeds as follows:
\begin{itemize}[nosep]
 \item If $G'$ has one vertex, return the trivial embedding: $H'=G'$ and $\eta'$ is the identity on~$V(G')$.
 \item Otherwise, apply a subprocedure $\findsplit(G')$, which we will describe later. This procedure finds a subset of edges $F\subseteq E(G)$ (called \emph{cutedges}) and a subset of vertices $Z \subseteq V(G)$ (called \emph{portals}).
 \item Let $\Dd$ be the set of connected components of the graph $G'-F$. For every component $D\in \Dd$, recurse on $D$ to obtain a metric embedding $(H_D,\eta_D)$ of $D$ and an elimination forest $F_D$ of $H_D$.
 \item Construct an embedding $(H',\eta')$ of $G'$ as follows. First, $H'$ is obtained by taking the disjoint union of graphs $\{H_D\colon D\in \Dd\}$ and adding, for every portal $z\in Z$, a copy $z'$ of $z$; this copy is made adjacent to every vertex $u$ of $G'$ by an edge of length $\dist_G(z,u)$. Next, we set $\eta'\coloneqq \bigcup_{D\in \Dd} \eta_D$. 
 \item Finally, construct an elimination forest $F'$ of $H'$ by first taking the disjoint union of forests $\{F_D\colon D\in \Dd\}$, and then iteratively adding all vertices $\{z'\colon z\in Z\}$ in any order. Each time $z'$ is added to $F'$, we make it the new root and attach all former roots as children of $z'$.
\end{itemize}
A straightforward induction shows that procedure $\embed$ indeed outputs a metric embedding of the graph provided on input.
Hence, it remains to specify the procedure $\findsplit$. This will be done so that the following two conditions are satisfied:
\begin{itemize}[nosep]
 \item The depth of the recursion tree of $\embed$ is bounded by $\totell \cdot \lceil \log_2 n \rceil = \Oh((\ell + \ln n) \ln n)$.
 \item In every recursion call, $\findsplit$ returns a portal set $Z$ of size polynomial in $\totell$, $1/\eps$, and~$\ln n$.
\end{itemize}
Thus, the depth of the eventually constructed elimination forest of $G$ will be polynomial in $\totell$, $1/\eps$, and $\ln n$, which will certify the treedepth of the target graph in the obtained metric embedding of $G$.

\paragraph*{Procedure $\findsplit$.} We now describe the procedure $\findsplit$. Recall that the input to this procedure is a connected subgraph $G'$ of $G$, where we may assume that $G'$ has at least two vertices. 
It follows that~$D$, the diameter of $G'$, is strictly larger than $1$. Let then $\ell'\in \{1,\ldots,\totell\}$ be such that $2^{\ell'-1}<D\leq 2^{\ell'}$. We call $\ell'$ the {\em{level}} of the subgraph $G'$.
Let
\begin{align*}
& \delta\coloneqq \frac{\eps}{c\totell n\ln^2 n},& \qquad & \xi\coloneqq 64\totell^3\lceil\log_2 n\rceil\left(\ln(2\totell n^2/\delta)+1\right)\cdot (1/\eps), \\ & \sigma\coloneqq 480(\ln(2\totell n^2/\delta)+1)^2, & \qquad & \tau\coloneqq (\xi+1)\cdot \gamma_K \totell^2 \sigma^2, 
\end{align*}
where $c>0$ is a constant that will be determined later and $\gamma_K$ is the constant provided by Lemma~\ref{lem:emb:crucial}.
Note that
\[\tau \in \Oh_K\left(\totell^5/\eps\cdot \ln n\cdot (\ln n+\ln \totell+\ln (1/\eps))^5\right).\]

We apply the algorithm of Lemma~\ref{lem:hierarchical:stmt} to $G'$ with parameter $\delta$, thus obtaining a clustering chain $\Ll=(\partition_0,\partition_1,\ldots,\partition_{\ell'})$ of~$G'$. It may happen that this algorithm reports failure; this happens with probability at most $\delta$. In this case, we terminate the whole effort, and instead of trying to compute a metric embedding of $G$ ourselves, we apply the algorithm of Fakcharoenphol, Rao, and Talwar~\cite{FRT04} that finds an embedding of $G$ into a tree (thus, a graph of treedepth $\Oh(\ln n)$) with expected multiplicative distortion $\Oh(\ln n)$.
From now on, we suppose the algorithm of Lemma~\ref{lem:hierarchical:stmt} succeeded in finding the clustering chain $\Ll$.

Note that by Lemma~\ref{lem:hierarchical:stmt}, property~\ref{p:stmt:hierdiam}, for every $i\in \{0,\ldots,\ell'-1\}$ and cluster $C\in \partition_{i+1}$, the diameter of the quotient graph $G[C]/\partition_i[C]$ is bounded by $\sigma$. So
starting with $\Ff=\emptyset$, we may apply the algorithm of Lemma~\ref{lem:emb:crucial} repeatedly to find a cut packing $\Ff$ in $\Ll$ consisting of balanced cuts of size at most $\tau$ so that
\[|\Ff|\geq \frac{\tau}{\gamma_K \totell^2 \sigma^2}=\xi+1.\]
Once $\Ff$ is constructed, we discard from $\Ff$ the cut $\{V(G')\}$, if present; let $\Ff'$ be the obtained packing of size at least $\xi$. Then we pick a cut $\Aa\in \Ff'$ uniformly at random and define the output of $\findsplit(G')$ as~follows:
\begin{itemize}[nosep]
 \item as cutedges we set $F\coloneqq F(\Aa)$; and
 \item as portals we set a set $Z$ consisting of one vertex from each cluster $A\in \Aa$, selected arbitrarily.
\end{itemize}
This concludes the description of procedure $\findsplit$.

\paragraph*{Properties.} We first observe that in procedure $\embed$, at every recursion call we observe progress in significantly reducing either the diameter or the vertex count.

\begin{claim}\label{cl:progress}
 Suppose a call $\embed(G')$ invokes a subcall of $\embed(G'')$, for some induced subgraph $G''$ of~$G'$. Then $|V(G'')|\leq |V(G')|/2$ or the level of $G''$ is strictly smaller than the level of $G'$. 
\end{claim}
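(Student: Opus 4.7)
The plan is to split into cases depending on whether the component $D$ giving rise to $G''$ is one of the clusters in the balanced cut $\Aa$ chosen by $\findsplit(G')$. Recall that $\findsplit(G')$ sets $F = F(\Aa)$ for a balanced cut $\Aa$ drawn uniformly from $\Ff'$, and that $G''$ is some component $D$ of $G' - F$. By the very definition of balancedness, either $V(D) \in \Aa$ or $|V(D)| \leq |V(G')|/2$. In the second case the first alternative of the claim already holds, so I may assume $V(D) \in \Aa$ (and, incidentally, $|V(D)| > |V(G')|/2 \geq 1$, so $|V(D)| \geq 2$).

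Next I would argue that $V(D) \neq V(G')$. Since $\Aa$ is a family of pairwise disjoint vertex subsets of $V(G')$, the only cut respecting $\Ll$ that contains $V(G')$ at all is $\{V(G')\}$ itself. But $\{V(G')\}$ was explicitly discarded from $\Ff$ to form $\Ff'$, so $\Aa \neq \{V(G')\}$, which forces $V(G') \notin \Aa$ and in particular $V(D) \subsetneq V(G')$.

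Now, since $\Aa$ respects the clustering chain $\Ll = (\partition_0, \ldots, \partition_{\ell'})$, the set $V(D)$ is a cluster of $\partition_i$ for some $i \in \{0, 1, \ldots, \ell'\}$. Because $\partition_{\ell'} = \{V(G')\}$ and $V(D) \neq V(G')$, we must have $i \leq \ell' - 1$. Property~\ref{p:stmt:hiercl} of Lemma~\ref{lem:hierarchical:stmt} then bounds the diameter of $G'[V(D)]$ by $2^i \leq 2^{\ell'-1}$, so by the definition of level, the level of $G'' = G'[V(D)]$ is at most $\ell' - 1 < \ell'$, giving the second alternative.

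I do not foresee any genuine obstacle: the proof is essentially bookkeeping, carefully unpacking the definitions of balanced cut, of a cut respecting the chain, and of the level. The only design choice being exploited is the explicit removal of the trivial cut $\{V(G')\}$ from $\Ff$ --- without that step, the case $V(D) = V(G')$ could arise, a balanced cut would certify nothing, and the recursion could fail to make progress.
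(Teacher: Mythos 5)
Your proof is correct and follows essentially the same route as the paper's: split on balancedness of $\Aa$, use the explicit removal of the trivial cut $\{V(G')\}$ to place every member of $\Aa$ in $\partition_0\cup\dots\cup\partition_{\ell'-1}$, and invoke property~\ref{p:stmt:hiercl} to bound the diameter by $2^{\ell'-1}$ and hence drop the level. Your spelled-out argument that $V(G')\notin\Aa$ (via pairwise disjointness of a cut's members) is just a slightly more explicit rendering of the paper's remark that $\partition_{\ell'}$ was excluded from sampling.
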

\begin{proof}
 Let $\ell'$ be the level of $G'$.
 Recall that procedure $\embed(G')$ calls $\findsplit(G')$ to find suitable cutedges $F$ and portals $Z$ to recurse. Let $\Aa$ be the cut used by $\findsplit(G')$ to find define $F$ and $Z$. Recall that $\Aa$ is balanced, which means that every connected component of $G'-F(\Aa)$ that is not a member of $\Aa$ has at most $|V(G')|/2$ vertices. On the other hand, if $\Ll=(\partition_0,\partition_1,\ldots,\partition_{\ell'})$ is the hierarchical clustering considered in procedure $\findsplit(G')$, then $\Aa\subseteq \partition_0\cup \partition_1\cup \ldots\cup \partition_{\ell'-1}$, because the cut $\partition_{\ell'}=\{V(G')\}$ was explicitly excluded from sampling. Since every cluster within partitions $\partition_0,\partition_1,\ldots,\partition_{\ell'-1}$ induces a subgraph of has diameter at most $2^{\ell'-1}$ (Lemma~\ref{lem:hierarchical:stmt}, property~\ref{p:stmt:hiercl}), it follows that for each $A\in \Aa$, $G[A]$ has level strictly smaller than $\ell'$. The claim follows.
\end{proof}

An immediate consequence of Claim~\ref{cl:progress} is the following.

\begin{claim}\label{cl:rec-depth}
 The depth of the recursion tree of $\embed(G)$ is bounded by $\totell\lceil \log_2 n\rceil$.
\end{claim}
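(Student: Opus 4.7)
The plan is to use Claim~\ref{cl:progress} as a black box and simply track two monotone quantities along any root-to-leaf path in the recursion tree of $\embed(G)$: the level $\ell'(G')$ of the current subgraph, and its vertex count $|V(G')|$. Claim~\ref{cl:progress} says that whenever $\embed(G')$ invokes a subcall $\embed(G'')$, either (a) the level strictly decreases, or (b) the vertex count at least halves. I want to bound the length of any such path.

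First I would observe that along any root-to-leaf path, the level is an integer taking values in $\{1,\ldots,\totell\}$, so case (a) can occur at most $\totell-1$ times. Then I would note that the vertex count is non-increasing along the path (since each $G''$ is an induced subgraph of $G'$), and moreover every (b)-step reduces the vertex count by a factor of at least~$2$; thus (b)-steps can occur at most $\lceil \log_2 n \rceil$ times in total along the path, since after that many halvings starting from~$n$ we would have reached a single vertex and the recursion would have stopped. Combining the two bounds, the total length of the path is at most $(\totell-1)+\lceil\log_2 n\rceil$, which is certainly at most $\totell \lceil\log_2 n\rceil$ whenever the recursion is nontrivial (i.e.\ $\totell,\lceil\log_2 n\rceil\geq 1$).

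There is no real obstacle here: Claim~\ref{cl:progress} already does the combinatorial work, and the remaining step is purely a counting argument on a two-dimensional monovariant. The one thing to be mindful of is that the recursion terminates at one-vertex subgraphs, for which the level is undefined; this is handled by simply charging the terminal leaf to the parent step and by using that vertex counts below $1$ never occur. A slicker presentation is the ``phase'' argument: group the steps along the path into at most $\totell$ maximal blocks of consecutive (b)-steps separated by (a)-steps, each block of length at most $\lceil\log_2 n\rceil$, which also yields the stated bound $\totell\lceil\log_2 n\rceil$. This phase-based formulation is what most directly matches the shape of the claimed bound.
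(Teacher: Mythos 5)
Your argument has a genuine gap: it treats the level as a monotone (non-increasing) quantity along a root-to-leaf path, and this is not justified. Claim~\ref{cl:progress} only gives a disjunction: when $\embed(G')$ calls $\embed(G'')$, \emph{either} the level drops \emph{or} the vertex count halves — and in the halving case the level of $G''$ may well \emph{increase}. Indeed, the components of $G'-F(\Aa)$ that are not clusters of $\Aa$ are induced subgraphs in which distances (and hence the diameter) can be larger than in $G'$, since shortest paths of $G'$ may pass through the deleted cut edges; the only a priori bound is that every connected subgraph of $G$ has diameter at most $2^{\totell}$, i.e.\ level at most $\totell$. Consequently, your claim that case (a) occurs at most $\totell-1$ times \emph{in total} along the path does not follow, and neither does the additive bound $(\totell-1)+\lceil\log_2 n\rceil$; the same unjustified monotonicity underlies your closing ``phase'' argument, which needs at most $\totell$ blocks of consecutive (b)-steps and hence again a global bound on the number of (a)-steps. (A sanity check: an additive bound here would improve the recursion depth, and hence the treedepth bound of Theorem~\ref{thm:main}, by roughly a factor $\min(\totell,\log n)$ over what the paper states, which should have raised suspicion.)

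What is true, and what the paper uses, is the following weaker pair of facts: the vertex count never increases, so there are at most $\lceil\log_2 n\rceil$ halving steps in total (and none after the $\lceil\log_2 n\rceil$-th, since the recursion bottoms out at single vertices); and along a run of \emph{consecutive} level-decreasing steps the level strictly decreases within $\{1,\ldots,\totell\}$, so such a run has length at most $\totell$ (in fact $\totell-1$). Interleaving these — at most $\lceil\log_2 n\rceil$ ``blue'' halving edges, each preceded by a run of at most $\totell-1$ ``red'' level-decreasing edges — yields the multiplicative bound $\totell\lceil\log_2 n\rceil$, which is exactly the paper's red/blue coloring argument. To repair your proof, replace the global count of (a)-steps by this bound on maximal runs of (a)-steps between consecutive (b)-steps.
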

\begin{proof}
 Consider any root-to-leaf path $P$ in the recursion tree of $\embed(G)$. By Claim~\ref{cl:progress}, we may color every edge of $P$, say with parent call $\embed(G')$ and child call $\embed(G'')$, blue if $|V(G'')|\leq |V(G')|/2$ and red if the level of $G''$ is strictly smaller than the level of $G'$. Clearly, $P$ cannot contain more than $\totell$ red edges in a row, and there are no more edges on $P$ after the $\lceil\log_2 n\rceil$th blue edge. The claim~follows.
\end{proof}

Observe that every call of procedure $\findsplit$ returns a portal set $Z$ of size at most $\tau$. Hence, from Claim~\ref{cl:rec-depth} it follows that if $\embed(G)$ returns a metric embedding $(H,\eta)$ and an elimination forest $F$ of~$H$, then the depth of $F$ is bounded by
\[1+\tau\totell\lceil\log_2 n\rceil=\Oh_K\left((\ell + \ln n)^6/\eps\cdot \ln^2 n\cdot (\ln n+\ln \ell+\ln (1/\eps))^5\right).\]
Consequently, the above expression is also an upper bound on the treedepth of $H$, as desired.

\medskip

It remains to bound the expected distortion.

Consider first the corner case when one of the calls of procedure $\findsplit$ failed. Then we invoked the algorithm of~\cite{FRT04}, which computed a metric embedding into a tree with expected multiplicative distortion $\Oh(\ln n)$. Every invocation of procedure $\findsplit$ fails with probability at most $\delta$, and by Claim~\ref{cl:rec-depth}, the total number of calls to $\findsplit$ is bounded by $n\totell\lceil\log_2 n\rceil$. Therefore, the total contribution to the expected multiplicative distortion from this case is
\[\delta\cdot n\totell\lceil\log_2 n\rceil\cdot \Oh(\ln n),\]
which is upper bounded by $\eps/2$ assuming we choose $c$ large enough.

From now on we assume that all calls to $\findsplit$ succeeded and our algorithm has indeed constructed a metric embedding $(H,\eta)$ of $G$.
Fix a pair of vertices $u_1,u_2\in V(G)$ and let $Q$ be a shortest path connecting $u_1$ and $u_2$ in $G$.
Note that within the recursion tree of $\embed(G)$,
there exists a unique call $\embed(G')$ such that the whole path $Q$ is contained in $G'$, but this cannot be said about any call $\embed(G'')$ invoked within $\embed(G')$.
This means that $Q$ contains at least one of the cutedges belonging to $F(\Aa)$, where $\Aa$ is the cut considered in the call $\findsplit(G')$. Equivalently, there exists a cluster $A\in \Aa$ such that $Q$ contains an edge with exactly one endpoint in $A$. 
Let $z$ be the vertex of $A$ chosen to the portal set $Z$ by $\findsplit(G')$, and let $z'\in V(H)$ be the copy of $z$ created within $\embed(G')$.
By triangle inequality, we have
\[\dist_H(\eta(u_1),\eta(u_2))\leq \dist_H(\eta(u_1),z')+\dist_H(\eta(u_2),z')=\dist_G(u_1,z)+\dist_G(u_2,z).\]
Since $Q$ intersects $A$, say at vertex $q$, we have
\[\dist_G(u_1,z)\leq \dist_G(u_1,q)+\dist_G(q,z)\leq \dist_G(u_1,q)+D,
\]
where $D$ is the diameter of $G[A]$. Similarly $\dist_G(u_2,z)\leq \dist_G(u_2,q)+D$, implying that
\begin{align*}\dist_H(\eta(u_1),\eta(u_2))& \leq \dist_G(u_1,z)+\dist_G(u_2,z) \leq \dist_G(u_1,q)+\dist_G(u_2,q)+2D\\
& =\len(Q)+2D = \dist_G(u_1,u_2)+2D.
 \end{align*}
 
For an index $h\in \{0,1,\ldots,\totell\lceil \log_2 n\rceil-1\}$, we introduce a random variable $X_h$ defined as follows:
\begin{itemize}
 \item If at depth $h$ of the recursion tree of $\embed(G)$ there is no call $\embed(G')$ such that $Q$ is entirely contained in $G'$, then $X_h=0$.
 \item Otherwise, consider the unique call $\embed(G')$ at depth $h$ such that $Q$ is entirely contained in $G'$. Then we let
 \begin{equation}\label{eq:wydra}
X_h\coloneqq \sum_{A\in \Aa} 2\mathrm{diam}(G[A])\cdot |E(Q)\cap F(A)|,  
 \end{equation}
 where $\Aa$ is the cut considered in procedure $\findsplit(G')$, $F(A)$ is the set of edges with exactly one endpoint in $A$, and $\mathrm{diam}(G[A])$ is the diameter of $G[A]$.
\end{itemize}
The reasoning of the previous paragraph together with Claim~\ref{cl:rec-depth} proves that
\[\dist_H(\eta(u_1),\eta(u_2))\leq \dist_G(u_1,u_2)+\sum_{h=0}^{\totell\lceil \log_2 n\rceil-1} X_h.\]

Consider then a fixed $h\in \{0,1,\ldots,\totell\lceil \log_2 n\rceil-1\}$ such that there is a call $\embed(G')$ at depth $h$ such that $Q$ is entirely contained in $G'$. Recall that within $\findsplit(G')$ we constructed a clustering chain $\Ll=(\partition_0,\partition_1,\ldots,\partition_{\ell'})$ and a cut packing $\Ff'$, and we sampled a balanced cut $\Aa$ from $\Ff'$. Since members of $\Ff'$ are pairwise non-conflicting, for every fixed cluster $C\in \bigcup_{i=1}^{\ell'-1} \partition_i$ of positive diameter, the probability that $C$ belongs to $\Aa$ is upper bounded by $\frac{1}{|\Ff'|}\leq \frac{1}{\xi}$. Moreover, if $C\in \partition_i$, then the diameter of $G[A]$ is upper bounded by $2^i$ (Lemma~\ref{lem:hierarchical:stmt}, property~\ref{p:stmt:hiercl}).
By property~\ref{p:stmt:level} of Lemma~\ref{lem:hierarchical:stmt}, we infer the following:
\begin{align*}
\Exp X_h &\leq \Exp\left(\sum_{i=1}^{\ell'-1} \ \ \sum_{C \in \partition_i} |E(Q) \cap F(C)| \cdot [C \in \Aa] \cdot 2^{i+1}\right) \\
    &\leq  \frac{1}{\xi}\cdot \sum_{i=1}^{\ell'-1} 2^{i+2}\cdot \Exp\left(\sum_{e\in E(Q)} [e\textrm{ is of level }i\textrm{ in }\Ll]\right)\\
    &\leq  \frac{1}{\xi}\cdot \sum_{i=1}^{\ell'-1} 2^{i+2}\cdot 2^{-i}\cdot 8\totell(\ln(2\totell n^2/\delta)+1)\cdot \len(Q)\\
    &\leq \frac{1}{\xi}\cdot 32\ell^2(\ln(2\totell n^2/\delta)+1)\cdot \len(Q)\\
    &= \frac{\eps}{2\totell\lceil\log_2 n\rceil}\cdot \len(Q).
\end{align*}

By~\eqref{eq:wydra}, we conclude that
\[\Exp\ \dist_H(\eta(u_1),\eta(u_2))\leq (1+\eps/2)\dist_G(u_1,u_2).\]
Taking into account the additional summand of $(\eps/2)\cdot \dist_G(u_1,u_2)$ from the case when one of the calls to $\findsplit$ reports a failure, we conclude that the expected multiplicative distortion of the embedding is $(1+\eps)$, as desired. This concludes the proof of Theorem~\ref{thm:main}.

\section{Extension to proper minor-closed graph classes}\label{sec:minor}

\newcommand{\torso}{\mathsf{torso}}
\newcommand{\adh}{\mathsf{adh}}
\newcommand{\apices}{\mathsf{apices}}
\newcommand{\Capices}{c_{\apices}}

In this section we extend the main result to $K$-minor-free graphs for a fixed graph $K$
that is not necessarily an apex graph. 
The arguments of Section~\ref{sec:tw} crucially rely on the graph class excluding an apex graph. 
Here, we desing a workaround using the structure theorem of Robertson and Seymour~\cite{gm16}.

\paragraph{Tools from the graph minors theory.}
We need a bit more notation on tree decompositions. Let $(T,\bag)$ be a tree decomposition of a graph $G$.
For an edge $st \in E(T)$, the \emph{adhesion} of $st$, denoted $\adh(st)$, equals $\bag(s) \cap \bag(t)$.
For a vertex $t \in V(T)$, the \emph{torso} of $t$, denoted $\torso(t)$, is the graph obtained from
$G[\bag(t)]$ by turning the adhesion $\adh(st)$ into a clique, for every $s \in N_T(t)$. 
If $G$ is an edge-weighted graph, every new edge $uv$ in $\adh(st)$ is assigned length equal to $\dist_G(u,v)$.

We need the following corollary of the structure theorem for graphs excluding a fixed minor. 
Unfortunately, we did not find a similar form in the literature; we discuss how to obtain this corollary from
existing results in Appendix~\ref{app:minor}.
\begin{theorem}\label{thm:structure}
For every graph $K_0$ there exists a constant $\Capices$, an apex graph $K$, and a polynomial-time algorithm
that, given a $K_0$-minor-free graph $G$, computes a tree decomposition $(T,\bag)$ of $G$
and for every $t \in V(T)$ a set $\apices(t) \subseteq \bag(t)$ of size at most $\Capices$
so that for every $t \in V(T)$, the graph $\torso(t)-\apices(t)$ does not contain $K$ as a minor.
\end{theorem}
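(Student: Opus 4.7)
The proof strategy is to combine a polynomial-time constructive version of the Robertson--Seymour graph minor structure theorem with Eppstein's characterization of apex-minor-free classes via bounded local treewidth. First, I would invoke one of the available polynomial-time constructive refinements of the structure theorem (e.g., Grohe--Kawarabayashi--Marx--Wollan, Kawarabayashi--Wollan, or more recent reformulations) applied to the input $K_0$-minor-free graph $G$. This yields in polynomial time a tree decomposition $(T,\bag)$ of $G$ such that, for every $t \in V(T)$, the torso $\torso(t)$ is $h$-almost-embeddable into a surface $\Sigma_t$ of Euler genus at most $g$, where $h$ and $g$ are constants depending only on $K_0$. Recall that $h$-almost-embeddability means $\torso(t)$ decomposes into a subgraph drawn in $\Sigma_t$, at most $h$ vortices of depth at most $h$ glued to faces of this drawing, and at most $h$ apex vertices attached arbitrarily; the adhesion cliques added when forming $\torso(t)$ are absorbed into this structure by the theorem.

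Second, for each $t \in V(T)$ I would set $\apices(t)$ to be exactly the set of apex vertices from the almost-embeddable structure of $\torso(t)$, and take $\Capices \coloneqq h$. The graph $\torso(t) - \apices(t)$ is then an apex-free $h$-almost-embeddable graph in a surface of Euler genus at most $g$ with at most $h$ vortices of depth at most $h$.

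Third, it remains to argue that such apex-free almost-embeddable graphs all exclude a common fixed apex graph $K$ as a minor, where $K$ depends only on $h$ and $g$, and hence only on $K_0$. For this I would take the minor-closure $\mathcal{G}_{h,g}$ of the class of apex-free $h$-almost-embeddable graphs of Euler genus at most $g$; this class is well-known to have \emph{bounded local treewidth} (the treewidth of any ball of radius $r$ is bounded by a function of $h, g, r$), as shown by the extension of Eppstein's result by Demaine and Hajiaghayi. By Eppstein's characterization, any minor-closed class of bounded local treewidth excludes some fixed apex graph as a minor; let $K$ be such an apex graph for $\mathcal{G}_{h,g}$. Then $\torso(t) - \apices(t) \in \mathcal{G}_{h,g}$ is $K$-minor-free for every $t$, as required.

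The main obstacle is invoking the structure theorem in a sufficiently explicit, polynomial-time constructive form: the existence statement is due to Robertson--Seymour, but obtaining the tree decomposition together with the explicit apex sets in each torso requires citing one of the non-trivial constructive versions from the literature. A secondary point that must be checked is that the adhesion cliques inserted into torsos are consistent with the almost-embedding (so that removing $\apices(t)$ really leaves an apex-free almost-embeddable graph); this follows from the standard formulations of the theorem but deserves careful bookkeeping. Both points are discussed in detail in \Cref{app:minor}.
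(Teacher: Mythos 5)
Your first two steps coincide with the paper's proof: invoke an algorithmic version of the Robertson--Seymour structure theorem so that every torso is $h$-almost-embeddable in a surface of bounded genus, and designate the at most $h$ apex vertices of each torso as $\apices(t)$. The gap is in your third step. You reduce everything to the claim that the \emph{minor-closure} of the class of apex-free $h$-almost-embeddable graphs of genus at most $g$ has bounded local treewidth, and you dismiss this as well-known. It is not, and it cannot be obtained from the results you cite: what Grohe and Demaine--Hajiaghayi give is either (a) that apex-free almost-embeddable graphs \emph{themselves} have (linear) local treewidth, or (b) that \emph{apex-minor-free} classes have (linear) local treewidth. Statement (b) is unusable here, since apex-minor-freeness of the torsos is exactly what you are trying to establish, and (a) does not transfer to the minor-closure, because bounded local treewidth is not a minor-monotone property of individual graphs: contractions can shrink distances arbitrarily (for instance, a large grid together with one extra vertex joined to every grid vertex by a long subdivided path has linear local treewidth, yet contains a huge apexed grid as a minor). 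So the assertion you wave through is essentially equivalent to the statement being proved, namely that apex-free almost-embeddable graphs exclude some fixed apex graph as a minor. Note that where the paper does use the Eppstein-style argument (Lemma~\ref{lem:add-simplicial}), the relevant class is first shown to be minor-closed by construction; the analogous closure property is exactly what is missing in your write-up, and it is not among the two obstacles you flag.

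The paper closes precisely this gap with Lemma~\ref{lem:near-embed-apex-minor}, proved directly: if the apexed $t\times t$ grid $\widehat{W}_t$ were a minor of a graph nearly-embedded in a surface $\Sigma$ with boundedly many vortices of bounded depth and no apices, then by a lemma of Diestel, Kawarabayashi, M\"uller, and Wollan one can find a large subgrid whose branch sets avoid the vortices; contracting each vortex to a single vertex then exhibits $\widehat{W}_{t'}$ as a minor of a graph genuinely embedded in $\Sigma$, contradicting the fact that the Euler genus of apexed grids grows to infinity. To salvage your route you would instead have to prove (or locate a precise citation for) the closure statement that every minor of an apex-free $h$-almost-embeddable graph of genus at most $g$ is again apex-free $h'$-almost-embeddable for some $h'=h'(h,g)$; combined with Grohe's local treewidth bound for such graphs and the easy direction of Eppstein's characterization, this would complete the argument, but that closure statement requires a nontrivial analysis of how vortices behave under contractions and is not a one-line appeal to the literature.
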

Note that in the decomposition $(T,\bag)$ returned by Theorem~\ref{thm:structure}, every adhesion
needs to be of size strictly smaller than $\Capices + |V(K)|$: an adhesion is turned into a clique in the torso
while the clique number of $\torso(t)-\apices(t)$ is strictly less than $|V(K)|$ as this graph is $K$-minor-free.

We also need the following observation (proved formally in Appendix~\ref{app:minor}).
\begin{lemma}\label{lem:add-simplicial}
For every apex graph $K$ there exists an apex graph $K'$ such that if $G$
is $K$-minor-free and $G'$ is created from $G$ by adding, for every nonempty clique $A$ in $G$,
a new vertex $v_A$ with neighborhood $N_{G'}(v_A) = A$, then $G'$ is $K'$-minor-free.
\end{lemma}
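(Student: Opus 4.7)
The plan is to take $K'$ to be the graph obtained from $K$ by appending $h \coloneqq |V(K)|$ pendant vertices to every vertex of $K$. First I will check that $K'$ is apex: if $u^{\ast}$ is an apex of $K$ (so $K - u^{\ast}$ is planar), then $K' - u^{\ast}$ is obtained from the planar graph $K - u^{\ast}$ by attaching pendants to each of its remaining vertices and by adding a few isolated vertices (the former pendants of $u^{\ast}$), both of which preserve planarity.

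The core of the argument is the contrapositive: assuming that $G'$ admits a $K'$-minor witnessed by branch sets $(B_w)_{w \in V(K')}$, I will extract a $K$-minor of $G$. The first move is to \emph{normalize} the model. Two facts drive the normalization: every added vertex $v_A$ is adjacent in $G'$ to exactly the clique $A \subseteq V(G)$, and no two added vertices are adjacent in $G'$. Consequently, if $v_A$ lies in a branch set $B_w$ of size at least two, then $v_A$ must have a $G$-neighbor inside $B_w$ (otherwise $G'[B_w]$ would be disconnected), and such a neighbor lies in $A \cap B_w$. I will then argue that removing $v_A$ from $B_w$ destroys neither connectivity (the neighbors of $v_A$ inside $B_w$ already form a clique, being contained in $A$) nor any inter-branch-set adjacency previously mediated by $v_A$ (such an adjacency was realized by an edge $v_A\,y$ with $y \in A \cap B_{w'}$, and any vertex of $A \cap B_w$ is adjacent in $G$ to $y$ since $A$ is a clique of $G$). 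Iterating, each branch set ends up being either entirely contained in $V(G)$, or equal to a singleton $\{v_A\}$ for some added vertex.

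Let $W \subseteq V(K')$ consist of those indices whose branch set is such a singleton $\{v_{A_w}\}$. Distinct $K'$-neighbors of $w \in W$ must supply distinct representatives inside $A_w$, so $\deg_{K'}(w) \le |A_w| \le h - 1$; the last inequality holds because $G$, being $K$-minor-free, cannot contain $K_h$ as a subgraph. By construction, every vertex of $V(K) \subseteq V(K')$ has $K'$-degree at least $h$, so $W$ avoids $V(K)$ and therefore consists only of pendants of $K'$. Each such pendant $w$ satisfies $|N_{K'}(w)| = 1$, so ``cliquifying'' $N_{K'}(w)$ introduces no new edges, and the branch sets indexed by $V(K') \setminus W$ realize in $G$ a minor of $K' - W$. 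As $K' - W$ still contains $K$ as a subgraph, this produces the desired $K$-minor in $G$, contradicting the $K$-minor-freeness of $G$.

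The principal difficulty is the normalization step, where one has to verify carefully that iteratively removing added vertices from branch sets preserves both connectivity and every inter-branch-set adjacency; this rests crucially on the clique structure of each set $A$, which lets the removed $v_A$ be locally rerouted through a direct $G$-edge. The remaining degree-counting is engineered precisely so that the copy of $V(K)$ inside $K'$ is shielded from ever being forced into $W$ --- which is why padding each vertex of $K$ with exactly $h$ pendants suffices.
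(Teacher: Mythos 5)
Your proof is correct, but it proceeds quite differently from the paper. The paper does not construct $K'$ explicitly: it observes that the class of graphs obtained from $K$-minor-free graphs by adding vertices over cliques is contraction-closed, takes its subgraph closure to get a minor-closed class, shows this class inherits locally bounded treewidth from the $K$-minor-free class, and then invokes Eppstein's theorem (minor-closed classes have locally bounded treewidth iff they exclude an apex graph) to conclude that \emph{some} apex graph $K'$ is excluded. You instead give a concrete $K'$ --- the graph $K$ with $h=|V(K)|$ pendants attached to each vertex --- and argue by surgery on a hypothetical $K'$-model in $G'$: the normalization step (removing each added vertex $v_A$ from any branch set of size at least two, using that $A$ is a clique of $G$ to preserve connectivity and inter-branch adjacencies) is sound, the degree count $\deg_{K'}(w)\le|A_w|\le h-1$ correctly forces every surviving singleton added-vertex branch set to sit at a pendant of $K'$, and deleting those pendants leaves a model of a supergraph of $K$ inside $G$, a contradiction; checking that $K'$ is apex is immediate. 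Your route is more elementary and self-contained (no appeal to Eppstein or to locally bounded treewidth) and yields an explicit, quantitatively small $K'$, at the price of a longer hands-on minor-model argument; the paper's route is shorter modulo the cited machinery and needs no normalization, but is non-constructive about $K'$. One cosmetic remark: the aside about ``cliquifying'' $N_{K'}(w)$ is unnecessary --- since you only delete the pendant vertices of $W$, it suffices to note that $K'-W$ contains $K$ as a subgraph, which you do.
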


\paragraph{Initial steps.}
Armed with Theorem~\ref{thm:structure}, we now describe modifications to the $\findsplit$ procedure.
Recall that the input is an induced subgraph $G'$ of the input graph $G$
that is connected and has at least two vertices. 
We denote by $D$ the diameter of $G'$ and by $\ell' \in \{1,\ldots,\totell\}$ the least integer such that $D \leq 2^{\ell'}$.
Recall also that in $G$, we assume that for every $uv \in E(G)$, the length of $uv$ equals $\dist_G(u,v)$.

Instead of applying Lemma~\ref{lem:hierarchical:stmt} directly to $G'$, we proceed as follows.
We apply the algorithm of Theorem~\ref{thm:structure} to $G'$, obtaining a tree decomposition $(T,\bag)$
and subsets $\apices(t)$ for all $t \in V(T)$. 
We find a node $t \in V(T)$ such that every connected component of $G'-\bag(t)$ has at most $|V(G')|/2$ vertices (see Claim~\ref{cl:balbag}).
We denote $G^1 \coloneqq G' - \apices(t)$ and $G^2 \coloneqq \torso(t) - \apices(t)$.
We designate $\apices(t)$ as portals and delete them from $G'$; thus, we will focus on $G^1$ and $G^2$ in what follows.

\paragraph{Modifying the clustering step.}
We apply the clustering step (Lemma~\ref{lem:hierarchical:stmt}) to $G^1$ with the following modification:
in the tie-breaking order $\preceq$, we put all vertices of $V(G^2)$ before the remaining vertices of $V(G^1) \setminus V(G^2)$. 
In other words, when the algorithm chooses the next center of a cluster, we prefer a center lying in $\bag(t) \setminus \apices(t)$.
We obtain a clustering chain $\Ll^1 = (\partition^1_0,\partition^1_1,\ldots,\partition^1_{\ell'})$.

Recall that in $\torso(t)$, each new edge put within an adhesion has length equal to the distance
of the endpoints in $G'$. Hence, both in $G^1$ and $G^2$ we have the property that
the length of an edge equals the distance between its endpoints. 
This implies that, for every cluster $C$ in $\Ll$, if $C$ intersects $V(G^2)$, then $G^2[C \cap V(G^2)]$ is a connected
subgraph of $G^2$ of diameter not larger than the diameter of $G^1[C]$. 
Hence, we can obtain from $\Ll^1$ a clustering chain $\Ll^2 = (\partition^2_0,\ldots,\partition^2_{\ell'})$ of $G^2$
by restricting every cluster $C$ to $C \cap V(G^2)$ and deleting clusters disjoint with $V(G^2)$. 
Observe that, because adhesions are turned into cliques in the torso, for every $1 \leq i < \ell'$
and $C \in \partition^1_{i+1}$ with $C \cap V(G^2) \neq \emptyset$, 
the diameter of $G^2[C \cap V(G^2)]/\partition^2_i[C \cap V(G^2)]$ is not larger
than the diameter of $G^1[C]/\partition^1_i[C]$.

Furthemore, we make the following crucial observation: the obtained distribution of clustering chain $\Ll^2$ of $G^2$
is the same as if we just applied the algorithm of Lemma~\ref{lem:hierarchical:stmt} directly to $G^2$ without any 
modifications (and with the same order $\preceq$, but restricted to the vertices of $G^2$). 

\paragraph{Obtaining a cut packing.}
Having obtained $\Ll^1$ and $\Ll^2$, we proceed as follows.
Let $G^3$ be an unweighted graph obtained from $G^2$ by adding, for every $s \in N_T(t)$, a vertex $v_s$ adjacent to all vertices of $\adh(st)$.
Theorem~\ref{thm:structure} implies that $G^2$ is $K$-minor-free for an apex graph $K$ that depends only on $K_0$.
Now, Lemma~\ref{lem:add-simplicial} implies that $G^3$ is $K'$-minor-free for another apex graph $K'$ that depends only on $K$.
We extend $\Ll^3$ to a clustering chain of $G^3$ by adding at every level singleton clusters $(\{v_s\})_{s \in N_T(t)}$,
except for the top level where we now have $V(G^3)$ as the only cluster.
Note that the diameter of every quotient graph discussed in property~\ref{p:hierdiam} increased by at most $2$. This increase is neglible and will be be promptly ignored in the argumentation to follow.

We use the graph $G^3$ and the clustering chain $\Ll^3$ in the repeated application of Lemma~\ref{lem:emb:crucial} to
obtain a large cut packing $\Ff$.
We do one modification to the process: in the definition of a balanced separator, we put weight $1$ on every vertex
of $V(G^2)$ but, for every $s \in N_T(t)$, the weight of $v_s$ is defined as $|V_s|$, where
$V_s$ is the set of vertices
of $V(G^1) \setminus V(G^2)$ that belong to any bag of  the connected component of $T-\{st\}$ containing $s$.
Hence, the total weight of all vertices of $G^3$ is $|V(G^1)|$ and 
every $\Aa \in \Ff$ has the property that every connected component of $G^3-F(\Aa)$ is either an element of $\Aa$
or has weight at most $|V(G^1)|/2$.

To obtain $\Ff'$ from $\Ff$, we not only discard the cut $\{V(G^3)\}$ if present, but also,
for every $\Aa \in \Ff$, whenever $\Aa$ contains a cluster $\{v_s\}$ for some $s \in N_T(t)$,
we replace it within $\Aa$ with single-vertex clusters $\{\{v\}~\colon~v \in \adh(st)\}$. 
Since we replace $\{v_s\}$ with single-vertex clusters, the members of the cut packing $\Ff'$ are still pairwise non-conflicting.
Since every adhesion is of size less than $\Capices + |V(K)|$, we increase the size of every element of $\Ff'$ by at most a constant
multiplicative factor.
Since the weight of every $v_s$ is at most $|V(G^1)|/2$, every cut in $\Ff'$ is still balanced. 
We obtain that every cut $\Aa \in \Ff'$ contains only clusters of $\Ll^2$.

\paragraph{Cutting.}
The cut packing $\Ff'$ in clustering chain $\Ll^2$ projects to a cut packing $\Ff^1$
in clustering chain $\Ll^1$ as follows: for every $\Aa \in \Ff'$ and $C\in \Aa$, we put into $\Aa^1$ the unique inclusion-wise minimal cluster $C'$ appearing in $\Ll^1$ such that $C'\cap V(G^2)=C$, and $\Ff^1$ consists of all cuts $\Aa^1$ constructed in this manner. Note that thus, if $C$ is a single-vertex cluster, then $C'=C$.

We make two observations. First, it follows directly from the construction above and the fact that the members of $\Ff'$ are pairwise non-conflicting that the members of
$\Ff^1$ are also non-conflicting.
Second, thanks to the weights used in~$G^3$, the cuts in $\Ff^1$ are balanced in the usual way in the graph $G^1$:
for every $\Aa^1 \in \Ff^1$, every component of $G^1-F(\Aa^1)$ is either a member of $\Aa^1$ or contains at most $|V(G^1)|/2$ vertices.
To see this, note that for every component $D^1$ of $G^1-F(\Aa^1)$ that is not member of $\Aa^1$,
 if $\Aa$ is the member of $\Ff'$ corresponding to~$\Aa^1$, then there is a component $D'$ of $G^3-F(\Aa)$
 that is not a member of $\Aa$,
 $D' \cap V(G^2) = D^1 \cap V(G^2)$, and for every $s \in N_T(t)$ we have 
 that $D^1 \cap V_s \neq \emptyset$ implies $v_s \in D'$. 
 Then, $|D^1|$ is bounded by the weight of $D'$, which in turn is bounded by $|V(G^1)|/2$
 as every cut in $\Ff'$ is balanced.

 Hence, we can proceed with $G^1$ and $\Ff^1$ as in the original algorithm: choose $\Aa^1 \in \Ff^1$ at random
 and cut along it.
 
 \medskip
 
 Since we obtained a family of pairwise non-conflicting  balanced cuts in $G^1$ in the end, the analysis of the depth 
 of the recursion and the expected distortion is the same. 
 The depth of the obtained elimination forest of the host graph grew by a constant factor:
 we add a constant number of vertices of $\apices(t)$ in the beginning and then,
 in the process of constructing $\Ff'$ from $\Ff$, we may have increased the size of cuts by a constant factor.
 This finishes the proof of the extension to $K$-minor-free graphs, for an arbitrary fixed graph $K$.

\newcommand{\OPT}{\mathsf{OPT}}

\section{Algorithmic applications}\label{sec:alg-ap}

In this section, we present some algorithmic applications of our embedding in \Cref{thm:main}, specifically, in solving the CVRP and  capacitated clustering/facility location problems. 

The algorithms for all of these problems follow the same framework. Suppose that we are given an instance $\Pi_G$ of a problem $\Pi$ to solve on an input planar graph $G$, and $\OPT_G$ be the optimal solution of $\Pi_G$. We use  $|\OPT_G|$ to denote the value of $\OPT_G$.   Our algorithm for solving $\Pi_G$ has three steps.

\begin{itemize}
	\item \textbf{Step 1.~} Normalize the edge weights of $G$ so that $w(e) \in [1,n^{\Oh(1)}]$ for every edge $e\in E(G)$. Then embed $G$ into a graph  $H$ of treewidth $k_H = \polylog(n)$ by \Cref{thm:main} via embedding $\eta$.  
	\item \textbf{Step 2.~} Let $\Pi_H$ be the instance of problem $\Pi$ to solve on graph $H$. We  solve $\Pi_H$ in $H$ to get an optimal solution $\OPT_H$ of $\Pi_H$. 
	
	\item \textbf{Step 3.~} Lift $\OPT_H$ to a solution $\widehat{\OPT}_G$ of the problem $\Pi_G$ in the input graph $G$.  %
\end{itemize}

We note that if we skip the normalization in Step 1, our algorithm will have a factor of $2^{\poly(\ell)}$ in the running time.  We show in all of the problems we consider below, we can do the normalization step, and  the framework gives a $(1+\eps)$ approximation algorithm.

\paragraph{Capacitated vehicle routing problem.~} Here we show the implementation of each step for the CVRP. In Step 1, we first compute a constant approximation solution, denoted by $A$, for the CVRP problem in polynomial time~\cite{HK85}. (Indeed, even any  $\poly(n)$-approximation suffices for our purpose.).  Then we round every edge of weight at most $\eps |A|/(\alpha n^{3})$ to $\eps |A|/(\alpha n^{3})$ for some big constant $\alpha$, and  remove every edge of weight  more than $|A|$ out of the graph. Then scale every edge so that the minimum weight edge is $1$. The maximum weight edge will be $\Oh(n^3/\eps)$.  In Step 2, we use the algorithm for small treewidth graphs by Jayaprakash and Salavatipour~\cite{JS23}:

\begin{theorem}[Jayaprakash and Salavatipour~\cite{JS23}]\label{thm:JS23} Given a parameter $\eps > 0$ and  graph $G$ of treewidth at most $k$, one can find a $(1+\epsilon)$-approximate solution for :
	\begin{itemize}
		\item the CVRP with \emph{unit demands} in time $n^{\Oh(k^2\log^3n/\eps)}$ 
		\item the CVRP with \emph{splittable/unsplittable demands } in time $n^{\Oh(k^2 \log^{2c+3}n/\eps^2)}$ when the capacity of the vehicle is $Q = n^{\Oh(\log^c n)}$ for some constant $c > 0$.
	\end{itemize}
\end{theorem}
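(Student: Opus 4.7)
The plan is to prove Theorem~\ref{thm:JS23} via a dynamic-programming scheme over a rooted tree decomposition of width at most $k$, combined with a structural argument bounding the complexity of the intersection of a near-optimal solution with any bag.

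First, I would compute a tree decomposition $(T, \bag)$ of $G$ of width $k$ and root it at an arbitrary node. Using a nice decomposition, the algorithm processes each bag $B_t$ bottom up. The DP state at $B_t$ encodes how a candidate solution restricted to the subtree rooted at $t$ interacts with $B_t$: a collection of partial tour fragments, each described by the subset of $B_t$ it currently visits and its cyclic order along the fragment, together with a rounded accumulated demand for each fragment. Standard introduce/forget/join updates propagate these states to the root, and in the end we read off the best assignment of fragments to complete tours starting and ending at the depot.

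The central structural step is to argue that one may restrict attention to $(1+\eps)$-approximate solutions in which every bag is crossed by only $\beta = \Oh(k \log n / \eps)$ distinct tour fragments. This follows from a rerouting/exchange argument of the flavor used by Becker et al.\ and Das--Mathieu for vehicle routing: if more than $\beta$ fragments pass through $B_t$, one can splice them together at a single representative vertex in $B_t$ at additive cost at most $\eps \cdot |\OPT| / \log n$ per bag, and the total overhead telescopes over the $\Oh(\log n)$ levels of the decomposition. Consequently the number of relevant combinatorial patterns per bag is at most $k^{\Oh(\beta)} = n^{\Oh(k^2 \log n / \eps)}$.

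Next I would handle the capacity constraints. Since the capacity $Q$ may be as large as $n^{\Oh(\log^c n)}$, exact demand tracking is infeasible, so I would round each fragment's accumulated demand to the nearest multiple of $\eps Q / \log^{c+1} n$, giving $\Oh(\log^{c+1} n / \eps)$ buckets per fragment. A telescoping argument shows that after composing through the $\Oh(\log n)$ merge operations along the tree, the cumulative rounding error costs only an additional $(1+\eps)$ factor. In the unit-demand case we instead bucket a demand counter up to $Q$ into $\Oh(\log^2 n / \eps)$ geometric scales, avoiding the extra $\log^c n$ overhead. For the unsplittable variant, the DP state additionally records which single fragment is responsible for each client currently in $B_t$, multiplying the state space by another factor that is absorbed into the exponent.

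The hard part, I expect, is the structural bound on the number of fragments crossing a bag: the naive bound is $\Theta(n)$, which is far too large, and obtaining $\Oh(k \log n / \eps)$ requires a rerouting argument that is tailored to CVRP (as opposed to e.g.\ Steiner tree) and must remain valid in the unsplittable setting, where one cannot split a client's demand across the consolidated tours. Once this structural lemma is in place, combining the per-bag state count $n^{\Oh(k^2 \log n / \eps)}$ with the demand-bucket factor and the $\Oh(n)$ bags yields the claimed running times $n^{\Oh(k^2 \log^3 n / \eps)}$ for unit demands and $n^{\Oh(k^2 \log^{2c+3} n / \eps^2)}$ for the splittable/unsplittable cases with $Q = n^{\Oh(\log^c n)}$.
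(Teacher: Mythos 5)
First, note that the paper does not prove this statement at all: Theorem~\ref{thm:JS23} is imported verbatim from Jayaprakash and Salavatipour~\cite{JS23} and used as a black box in Section~\ref{sec:alg-ap}, so there is no internal proof to compare your sketch against; it has to be measured against the actual argument of~\cite{JS23}.

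Measured that way, there is a genuine gap, and it sits exactly at the step you yourself flag as the hard part. Your central structural claim --- that one may restrict attention to $(1+\eps)$-approximate solutions in which each bag is crossed by only $\beta=\Oh(k\log n/\eps)$ tour fragments, obtained by splicing fragments together at a representative vertex of the bag --- is false for CVRP. Splicing fragments merges their loads, and the capacity constraint forbids this: already on a path (treewidth $1$) with the depot on one side of a separator vertex and $n$ unit-demand clients on the other side, with $Q=1$, every feasible solution (indeed every $\Oh(1)$-approximate one) has $\Theta(n)$ tours crossing that bag, so no polylogarithmic bound on the number of crossing fragments can hold for any approximation factor. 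What \cite{JS23} actually establishes is a bound on the number of distinct \emph{types} of crossing segments, not on their number: after a depth reduction making the decomposition $\Oh(\log n)$ deep, each partial tour segment is described by its endpoint pattern in the bag together with its collected demand rounded to a coarse grid (roughly powers of $1+\eps/\log n$), which gives only $\Oh(k^2\,\polylog(n)/\eps)$ types; the dynamic-programming state then records, for every type, the \emph{multiplicity} of segments of that type, an integer up to $n$. It is this vector of counts that produces the quasi-polynomial base-$n$ running time $n^{\Oh(k^2\log^3 n/\eps)}$; indeed, under your fragment-count bound the natural state count would be of the form $(k\cdot B)^{\beta}$ with $B$ the number of demand buckets, not $n^{\Oh(k^2\log n/\eps)}$, so your own bookkeeping already signals the mismatch. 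Your demand-rounding and error-telescoping ingredients are in the right spirit and do appear in~\cite{JS23}, but without replacing the false fragment-count lemma by the type-plus-multiplicity formulation (which also handles the unsplittable case, where merging tours is doubly impossible), the proposal does not yield the theorem.
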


Let $\OPT_H$ be the $(1+\eps)$-approximate solution returned by the algorithm in \Cref{thm:JS23}. Then in Step 3, we simply look at every  two consecutive vertices $\hat{u}$ and $\hat{v}$ on the same tour that correspond to two vertices $u$ and $v$ in $G$, respectively, and replace the portion of the tour between $\hat{u}$ and $\hat{v}$ by the shortest path from $u$ to $v$ in $G$. Let $\widehat{\OPT}_G$ be the obtained tour.   We now show that:

\begin{equation} \label{eq:CVPR-guarantee}
E[|\widehat{\OPT}_G| ]\leq (1+4\eps) |\OPT_G|
\end{equation}
and by scaling $\eps$, we get a $(1+\eps)$ approximation.

First, observe that since each edge belongs to at most $n$ tours, and it could appear at most $n$ times in a tour, the normalization in Step 1 introduces a total error of $|E(G)| n^2 \eps  |A| /(\alpha n^3) = \eps |\OPT_G|$ for a sufficiently large  constant $\alpha$.

We abuse notation by using $\OPT_G$ to denote an optimal solution of the CVPR (with unit/splittable/unsplittable demands) in $G$  \emph{after the normalization step}. Let $\widehat{\OPT}_H$ be the solution of  CVPR   in $H$ obtained by replacing every edge $(x,y)\in \OPT_G$ by a shortest path between $\eta(x)$ and $\eta(y)$ in $H$.   Let $\alpha_{x,y}$ be the number of times edge $(x,y)$ appears in all tours in $\OPT_G$.   We have:
\begin{equation}\label{eq:CVPR-cost}
\begin{split}
\mathbb{E}[|\widehat{\OPT}_H|]  &=  \sum_{(x,y)\in \OPT_G}\alpha_{(x,y) } \mathbb{E}[d_H(\eta(x),\eta(y))] \\
&\leq  (1+\eps)\sum_{(x,y) \in \OPT_G}\alpha_{x,y} d_G(x,y) = (1+\eps)|\OPT_G|
\end{split}
\end{equation}
Since $\OPT_H$ is a $(1+\eps)$-approximate solution of CVPR in $H$, $|\OPT_H| \leq (1+\eps)|\widehat{\OPT}_H|$. Thus, \Cref{eq:CVPR-cost} gives $\mathbb{E}[|\OPT_H|] \leq (1+\eps)^2|\OPT_G| \leq (1+3\eps)|\OPT_G|$ when $\eps \leq 1$. Then \Cref{eq:CVPR-guarantee} follows from that  $|\widehat{\OPT}_G|\leq |\OPT_H|$ and that the normalization step introduce an error of  $\eps |\OPT_G|$.

To complete the proof of \Cref{cor:CVRP}, it remains to analyze the running time. The embedding  step (Step 1) and the lifting step (Step 3) can both be done in polynomial time. The most expensive step is the second step. The running time in \Cref{cor:CVRP} follows directly from \Cref{thm:JS23} as the treewidth $k = \poly(\log n, 1/\eps)$ by \Cref{thm:main}. 

We remark that for the unsplittable/splittable versions of the CVRP, we also obtain QPTASes. However, the capacity of the vehicle  is now restricted to be $Q = n^{\log^{\Oh(1)}(n)}$ due to the restriction in \Cref{thm:JS23}. Removing this restriction is an interesting open problem for future work.

\paragraph{Capacitated $k$-median and capacitated/uncapacitated facility location.~} The algorithms for these problems use the same framework we set up above. For the normalization step, we also use a simple $\Oh(\log n)$ approximation obtained via tree embedding~\cite{FRT04}. For the dynamic programming step, we can solve both problems in time $n^{\Oh(k)}$ where $k$ is the treewidth. One could improve this running time to $(\log n/\eps)^{\Oh(k)}n^{\Oh(1)}$ by discretizing the distances to be a power of $(1+\eps)$ but this does not change the final running time of our algorithm. The final running time of the algorithm is dominated by the running time of the dynamic programming step, which is $2^{\poly(1/\eps,\log(n))}$. The approximation factor analysis is the same as the analysis of CVRP, implying \Cref{cor:FLP} and \Cref{cor:clustering}.

\section*{Acknowledgements}

Over the last few years, we have conducted various discussions on embeddings of planar graphs into low treewidth metrics.
We thank Christian Wulff-Nilsen for participating in some of them in the last few months and asking tricky questions. 
Marcin and Micha\l{} would like to acknowledge discussions on the topic with Fran\c{c}ois Dross,
when he was a postdoc in Warsaw around 2020, when in particular we have understood the limitations of the current lower bound
techniques and of the Talwar's scheme. 

\bibliographystyle{plainurl}
\bibliography{references}

\appendix
\section{Omitted proofs of tools from probability theory}\label{app:boring}

\subsection{Proof of Lemma~\ref{lem:hit-process}}
For every $X_i$, define a random variable $X_i' \in \{1,-1\}$ such that $\Prob(X_i' = 1) = \frac{1}{8}$, $X_i' \geq X_i$ almost surely, and $X_1', X_2',\ldots$ are independent.
Then, for every $i$, $\sum_{j=1}^i X_j \leq \sum_{j=1}^i X_j'$ so, for a variable $Z'$ defined analogously for $X_1',X_2',\ldots$, we have $Z' \geq Z$ almost surely.
Thus, it suffices to prove the lemma for the sequence $X_1',X_2',\ldots$ and the variable $Z'$. 
By slightly abusing the notation, we rename $X_i'$ to $X_i$ and $Z'$ to $Z$, that is, we assume $\Prob(X_i = 1) = \frac{1}{8}$ for every $i \geq 1$.

Let $Y = \sum_{j=1}^{2k} X_j$. Then the condition $k + Y \geq 0$ is equivalent to the following condition: for at least $\frac{k}{2}$ indices $i \in \{1,\ldots,2k\}$ we have $X_i = 1$. Since the expected number of such indices is $\frac{k}{4}$, by Theorem~\ref{thm:MultChernoff} applied for $\delta = 1$, we have
\[ \Prob \left(k + Y \geq 0\right) \leq \exp\left(-\frac{k}{12}\right). \]
As the event ``$k + Y \geq 0$'' contains the event ``$Z>2k$'', the proof is finished.

\subsection{Proof of Lemma~\ref{lem:memory-application}}
Fix measurable sets $S_1,S_2,\ldots,S_n \subseteq \R_{\geq 0}$ and for $1 \leq j \leq n$, let $p_j$ be the probability that a random variable with distribution $\ExpDist(1)$ has its value in $S_j$. 
To prove the lemma, it suffices to show that
\begin{equation}\label{eq:belkot:goal}
\Prob\left(Y_1 \in S_1 \wedge \ldots \wedge Y_n \in S_n\right) = \prod_{j=1}^n p_j.
\end{equation}
We will need the following computation.
Fix $1 \leq a \leq 2n$, $1 \leq j \leq n$, and $x_1,x_2,\ldots,x_{a-1} \in \R_{\geq 0}$.
\begin{align}
&\sum_{b=a+1}^{2n} \int_0^{f_{a+1}(x_1,\ldots,x_{a})} e^{-x_{a+1}} \dd x_{a+1}  \ldots \int_0^{f_{b-1}(x_1,\ldots,x_{b-2})} e^{-x_{b-1}} \dd x_{b-1} \nonumber\\
& \qquad \int_{f_{b}(x_1,\ldots,x_{b-1})}^{+\infty} e^{-x_{b}} [ x_{b} - f_{b}(x_1,\ldots,x_{b-1}) \in S_j ] \dd x_{b} \nonumber\\
& = \sum_{b=a+1}^{2n} \int_0^{f_{a+1}(x_1,\ldots,x_{a})} e^{-x_{a+1}} \dd x_{a+1}  \ldots \int_0^{f_{b-1}(x_1,\ldots,x_{b-2})} e^{-x_{b-1}} \dd x_{b-1} \nonumber\\
& \qquad e^{-f_{b}(x_1,\ldots,x_{b-1})} \int_0^{+\infty} e^{-y} [y \in S_j] \dd y \nonumber\\
& = \sum_{b=a+1}^{2n} \int_0^{f_{a+1}(x_1,\ldots,x_{a})} e^{-x_{a+1}} \dd x_{a+1}  \ldots \int_0^{f_{b-1}(x_1,\ldots,x_{b-2})} e^{-x_{b-1}} \dd x_{b-1} \nonumber\\
& \qquad e^{-f_{b}(x_1,\ldots,x_{b-1})} p_j \nonumber\\ 
& = p_j \Big( e^{-f_{a+1}(x_1,\ldots,x_{a})} + \int_0^{f_{a+1}(x_1,\ldots,x_{a})} e^{-x_{a+1}} \dd x_{a+1} \nonumber\\
&\qquad \quad \Big( e^{-f_{a+2}(x_1,\ldots,x_{a+1})} + \int_0^{f_{a+2}(x_1,\ldots,x_{a+1})} e^{-x_{a+2}} \dd x_{a+2} \nonumber\\ 
&\qquad\qquad\qquad \ldots \nonumber\\
&\qquad \quad \quad \Big( e^{-f_{2n-1}(x_1,\ldots,x_{2n-2})} + \int_0^{f_{2n-1}(x_1,\ldots,x_{2n-2})} e^{-x_{2n-1}} \dd x_{2n-1} \nonumber\\
&\qquad \quad \quad\quad e^{-f_{2n}(x_1,\ldots,x_{2n-1})} \Big)\Big)\ldots\Big) \nonumber\\
& = p_j \Big( e^{-f_{a+1}(x_1,\ldots,x_{a})} + \int_0^{f_{a+1}(x_1,\ldots,x_{a})} e^{-x_{a+1}} \dd x_{a+1} \nonumber\\
&\qquad \quad \Big( e^{-f_{a+2}(x_1,\ldots,x_{a+1})} + \int_0^{f_{a+2}(x_1,\ldots,x_{a+1})} e^{-x_{a+2}} \dd x_{a+2} \nonumber\\ 
&\qquad\qquad\qquad \ldots \nonumber\\
&\qquad \quad \quad \Big( e^{-f_{2n-1}(x_1,\ldots,x_{2n-2})} + \int_0^{f_{2n-1}(x_1,\ldots,x_{2n-2})} e^{-x_{2n-1}} \dd x_{2n-1} \Big)\Big)\ldots\Big) \nonumber\\
& = p_j \Big( e^{-f_{a+1}(x_1,\ldots,x_{a})} + \int_0^{f_{a+1}(x_1,\ldots,x_{a})} e^{-x_{a+1}} \dd x_{a+1} \nonumber\\
&\qquad \quad \Big( e^{-f_{a+2}(x_1,\ldots,x_{a+1})} + \int_0^{f_{a+2}(x_1,\ldots,x_{a+1})} e^{-x_{a+2}} \dd x_{a+2} \nonumber\\ 
&\qquad\qquad\qquad \ldots \nonumber\\
&\qquad \quad \quad \Big( e^{-f_{2n-2}(x_1,\ldots,x_{2n-3})} + \int_0^{f_{2n-2}(x_1,\ldots,x_{2n-3})} e^{-x_{2n-2}} \dd x_{2n-2} \Big)\Big)\ldots\Big) \nonumber\\
& = \ldots = p_j.\label{eq:belkot:step}
\end{align}
Here, we first used that $f_{2n} \equiv 0$ so 
\[  e^{-f_{2n}(x_1,\ldots,x_{2n-1})}  = 1. \]
Then, for $i=2n-1, 2n-2, \ldots$ we used
\[ e^{-f_i(x_1,\ldots,x_{i-1})} + \int_{0}^{f_i(x_1,\ldots,x_{i-1})} e^{-x_i} \dd x_i = e^{-f_i(x_1,\ldots,x_{i-1})} + 1 - e^{-f_i(x_1,\ldots,x_{i-1})} = 1. \]

We expand the left hand side of~\eqref{eq:belkot:goal}, conditioning on the values of $I_1, \ldots, I_n$,
   and then use repeatedly~\eqref{eq:belkot:step} for $a=i_{n-1},i_{n-2},\ldots,i_1,1$.
\begin{align*}
&\Prob\left(\forall_{j=1}^n Y_j \in S_j\right) = \sum_{1 \leq i_1 < i_2 < \ldots i_n \leq 2n} \Prob\left(\forall_{j=1}^n I_j = i_j \wedge X_{i_j} - f_{i_j}(X_1,\ldots) \in S_j \right) \\
    & \qquad = \sum_{i_1=1}^{2n} \int_0^{f_1} e^{-x_1} \dd x_1 \int_{0}^{f_2(x_1)} e^{-x_2} \dd x_2 \ldots \int_0^{f_{i_1-1}(x_1,\ldots,x_{i_1-2})} e^{-x_{i_1-1}} \dd x_{i_1-1} \\
    & \qquad \quad \qquad \int_{f_{i_1}(x_1,\ldots,x_{i_1-1})}^{+\infty} e^{-x_{i_1}} [ x_{i_1} - f_{i_1}(x_1,\ldots,x_{i_1-1}) \in S_1 ] \dd x_{i_1} \\
    &  \qquad \quad \sum_{i_2=i_1+1}^{2n} \int_0^{f_{i_1+1}(x_1,\ldots,x_{i_1})} e^{-x_{i_1+1}} \dd x_{i_1+1}  \ldots \int_0^{f_{i_2-1}(x_1,\ldots,x_{i_2-2})} e^{-x_{i_2-1}} \dd x_{i_2-1} \\
    &  \qquad \quad \qquad \int_{f_{i_2}(x_1,\ldots,x_{i_2-1})}^{+\infty} e^{-x_{i_2}} [ x_{i_2} - f_{i_2}(x_1,\ldots,x_{i_2-1}) \in S_2 ] \dd x_{i_2} \\
    &  \qquad \quad \qquad \qquad \qquad \qquad \ldots \\
    &  \qquad \quad \sum_{i_n=i_{n-1}+1}^{2n} \int_0^{f_{i_{n-1}+1}(x_1,\ldots,x_{i_{n-1}})} e^{-x_{i_{n-1}+1}} \dd x_{i_{n-1}+1}  \ldots \int_0^{f_{i_n-1}(x_1,\ldots,x_{i_n-2})} e^{-x_{i_n-1}} \dd x_{i_n-1} \\
    &  \qquad \quad \qquad \int_{f_{i_n}(x_1,\ldots,x_{i_n-1})}^{+\infty} e^{-x_{i_n}} [ x_{i_n} - f_{i_n}(x_1,\ldots,x_{i_n-1}) \in S_n ] \dd x_{i_n} \\
    & \qquad = p_n \sum_{i_1=1}^{2n} \int_0^{f_1} e^{-x_1} \dd x_1 \int_{0}^{f_2(x_1)} e^{-x_2} \dd x_2 \ldots \int_0^{f_{i_1-1}(x_1,\ldots,x_{i_1-2})} e^{-x_{i_1-1}} \dd x_{i_1-1} \\
    & \qquad \quad \qquad \int_{f_{i_1}(x_1,\ldots,x_{i_1-1})}^{+\infty} e^{-x_{i_1}} [ x_{i_1} - f_{i_1}(x_1,\ldots,x_{i_1-1}) \in S_1 ] \dd x_{i_1} \\
    &  \qquad \quad \sum_{i_2=i_1+1}^{2n} \int_0^{f_{i_1+1}(x_1,\ldots,x_{i_1})} e^{-x_{i_1+1}} \dd x_{i_1+1}  \ldots \int_0^{f_{i_2-1}(x_1,\ldots,x_{i_2-2})} e^{-x_{i_2-1}} \dd x_{i_2-1} \\
    &  \qquad \quad \qquad \int_{f_{i_2}(x_1,\ldots,x_{i_2-1})}^{+\infty} e^{-x_{i_2}} [ x_{i_2} - f_{i_2}(x_1,\ldots,x_{i_2-1}) \in S_2 ] \dd x_{i_2} \\
    &  \qquad \quad \qquad \qquad \qquad \qquad \ldots \\
    &  \qquad \quad \sum_{i_{n-1}=i_{n-2}+1}^{2n} \int_0^{f_{i_{n-2}+1}(x_1,\ldots,x_{i_{n-2}})} e^{-x_{i_{n-2}+1}} \dd x_{i_{n-2}+1}  \ldots \int_0^{f_{i_{n-1}-1}(x_1,\ldots,x_{i_{n-1}-2})} e^{-x_{i_{n-1}-1}} \dd x_{i_{n-1}-1} \\
    &  \qquad \quad \qquad \int_{f_{i_{n-1}}(x_1,\ldots,x_{i_{n-1}-1})}^{+\infty} e^{-x_{i_{n-1}}} [ x_{i_{n-1}} - f_{i_{n-1}}(x_1,\ldots,x_{i_{n-1}-1}) \in S_{n-1} ] \dd x_{i_{n-1}} \\
    & \qquad = \ldots  = \prod_{j=1}^n p_j.
\end{align*}
This finishes the proof.

\section{Omitted proofs on graphs with an excluded minor}\label{app:minor}

\subsection{Proof of Theorem~\ref{thm:structure}}

In this section we sketch how to derive Theorem~\ref{thm:structure} from known results.
We assume that the reader is familiar with the notions of vortices and of near-embeddings and the structure
theorem of Robertson and Seymour~\cite{gm16}.
In what follows, we will use the notation of Diestel et al.~\cite{DiestelKMW12}.

The crux lies in the following observation.

\begin{lemma}\label{lem:near-embed-apex-minor}
For every surface $\Sigma$ and constants $a,b$ there exists an apex graph $K$
such that every graph nearly-embeddable on $\Sigma$ with at most $a$ vortices of depth at most $b$ each
(and with no apices) does not contain the graph $K$ as a minor.
\end{lemma}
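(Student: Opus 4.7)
My plan is to apply Eppstein's theorem, which states that a minor-closed graph class has bounded local treewidth if and only if it excludes some apex graph as a minor. Let $\mathcal{F}$ denote the class of graphs nearly-embeddable on $\Sigma$ with at most $a$ vortices of depth at most $b$ and no apices. The first step is to verify that $\mathcal{F}$ is minor-closed: edge and vertex deletions preserve near-embeddability trivially, and edge contractions can be handled by standard local adjustments to the surface drawing and to the path decompositions of the vortices (the depth bound $b$ is preserved when two consecutive bags of a vortex decomposition are merged after contraction). This is folklore in the graph minors literature.

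The heart of the proof is to establish that $\mathcal{F}$ has bounded local treewidth: there exists a function $f = f_{\Sigma,a,b}$ such that for every $G \in \mathcal{F}$, every vertex $v$, and every integer $r \geq 0$, one has $\mathrm{tw}(G[B_r^G(v)]) \leq f(r)$. The key observation is that a vortex of depth $b$ can shortcut $G$-distances only in a controlled way: if $u, u'$ are boundary vertices of a vortex $V_i$, then a path of length $s$ through $V_i$ from $u$ to $u'$ can advance at most $O(b)$ positions along the vortex's boundary cycle per step, since a single edge inside the vortex has both endpoints in a common bag of its path decomposition, and consecutive bags shift by one position along the boundary while overlapping in at most $b$ vertices. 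Consequently, $B_r^G(v) \cap V(G_0)$ is contained in the $G_0$-ball of radius $O(rb)$ around an appropriate starting point (where $G_0$ is the surface-embedded part of $G$). Applying Eppstein's bounded local treewidth theorem for graphs embedded on $\Sigma$ then yields $\mathrm{tw}(G_0[B_r^G(v) \cap V(G_0)]) \leq O_\Sigma(rb)$. The intersection of each vortex with the ball is a subgraph of pathwidth at most $b$ and can be attached to the surface tree decomposition along its boundary vertices by enlarging the relevant bags, increasing the width by an additive $O(ab)$. Combining these estimates yields $f(r) = O_{\Sigma,a,b}(r)$.

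With bounded local treewidth in hand, the desired apex graph is immediate to produce: take $K$ to be the $n \times n$ grid together with one extra universal vertex, with $n$ chosen large enough that $\mathrm{tw}(K) > f(2)$. Then $K$ is an apex graph (the universal vertex is the apex, and the grid is planar), and since $K$ has diameter $2$, the ball of radius $2$ around any vertex of $K$ is all of $K$; hence $\mathrm{tw}(K[B_2^K(u)]) = \mathrm{tw}(K) > f(2)$. If $K$ were in $\mathcal{F}$, this would contradict the local treewidth bound, so $K \notin \mathcal{F}$. By minor-closedness, no graph in $\mathcal{F}$ contains $K$ as a minor, as required. I expect the main obstacle to be making the local treewidth computation fully rigorous, in particular pinning down the precise relationship between path length inside a vortex and progression along its boundary cycle, and carefully merging the surface and vortex pieces of the tree decomposition without blowing up the width. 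Once these technicalities are handled, the rest of the argument (the invocation of Eppstein's theorem and the choice of $K$) is standard.
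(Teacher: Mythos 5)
Your overall route (show the class $\mathcal{F}$ of apex-free near-embeddings is minor-closed and has bounded local treewidth, then take $K$ to be an apexed grid of treewidth exceeding $f(2)$ and diameter $2$) is logically sound in outline, and it is genuinely different from the paper's argument, which never mentions local treewidth: the paper takes a minor model of the apexed grid $\widehat{W}_t$, invokes a lemma of Diestel et al.\ to extract a large subgrid whose branch sets avoid the vortices, contracts each vortex to a single vertex to obtain a graph embedded in $\Sigma$, and concludes from the unbounded Euler genus of apexed grids that $t$ is bounded in terms of $\Sigma,a,b$. However, the heart of your proof has a genuine gap. The claim that a path of length $s$ through a depth-$b$ vortex ``can advance at most $O(b)$ positions along the boundary cycle per step'' is false: the bound $b$ limits bag sizes, not how many consecutive bags a single interior vertex may occupy. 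A vortex consisting of one hub vertex lying in every bag and adjacent to every boundary vertex has depth $2$, yet any two boundary vertices are at distance $2$ through it, no matter how far apart they are along the boundary. Consequently $B_r^G(v)\cap V(G_0)$ need not be contained in any $G_0$-ball of radius $O(rb)$ (take $v$ to be the hub and $r=2$: the intersection already contains the entire boundary cycle), so the derivation of the local treewidth bound $f(r)=O_{\Sigma,a,b}(r)$ collapses at its first step.

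The statement you are trying to establish --- that near-embeddings without apices have bounded (indeed linear) local treewidth --- is true, but it is itself a nontrivial known theorem (it is essentially the substance of the Eppstein/Grohe/Demaine--Hajiaghayi analysis of apex-minor-free classes, and handling the vortices correctly there takes real work); if you cited such a result, your argument would go through, but your own derivation of it does not. A secondary, repairable issue: your one-line justification of minor-closedness of $\mathcal{F}$ glosses over contractions of edges between two consecutive boundary vertices of a vortex, where merging the two corresponding bags can push the depth up to $2b-1$; you would either need a more careful re-decomposition or to work with a relaxed depth bound (which is fine for the lemma, since the final apex graph may depend on $a,b$ anyway, but it must be said). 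Given these obstacles, the paper's direct minor-model surgery is both shorter and avoids quantitative control of distances inside vortices altogether.
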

\begin{proof}
For an integer $t$, let $W_t$ be the $t \times t$ grid and let $\widehat{W}_t$ be the graph $W_t$ with an additional
vertex added and made adjacent to all vertices of the grid. 
We will prove that for some $t$, depending on $\Sigma$, $a$, and $b$ only, $\widehat{W}_t$ cannot be a minor
of a graph $G$ that is nearly-embedded as in the lemma statement.

Let $G$ be a graph with a near embedding as in the lemma statement and assume there is a minor model
of $\widehat{W}_t$ in $G$ for large $t$. 
We focus on the grid $W_t$ part of $\widehat{W}_t$ for a moment.
It follows from \cite[Lemma~22]{DiestelKMW12} that there is a large subgrid $W_{t'}$ of $W_t$
whose minor model is disjoint with vortices in the following sense: 
for every vertex of the subgrid $W_{t'}$, its branch set in the minor model of $W_{t'}$ is a subgraph
of its branch set in the minor model of $W_t$ and is vertex-disjoint with vortices of the near-embedding.
By possibly enlarging the branch set of the apex vertex of $\widehat{W}_t$, we obtain a minor model of $\widehat{W}_{t'}$
in $G$ such that the branch sets of the grid part are vertex-disjoint with vortices. 

If we now contract every vortex of $G$ into a single vertex, we obtain a graph embedded in $\Sigma$
(in the classic sense). 
The minor model of $\widehat{W}_{t'}$ in $G$ projects to a minor model of $\widehat{W}_{t'}$ in the obtained graph.
However, as the Euler genus of $\widehat{W}_s$ grows with $s$ to infinity, this implies that $t'$
is bounded by a function of~$\Sigma$, which in turn implies that $t$ is bounded as a function
of $\Sigma$, $a$, and $b$. This finishes the proof.
\end{proof}

Theorem~\ref{thm:structure} now follows from the structure theorem of Robertson and Seymour~\cite{gm16}
(with its algorithmic versions of~\cite{DemaineHK05,KawarabayashiW11}) where
in every bag we designate the apices of the near-embedding as the set $\apices(t)$. 

\subsection{Proof of Lemma~\ref{lem:add-simplicial}}

Let $\mathcal{G}$ be the class of $K$-minor-free graphs.
Let $\mathcal{G}'$ be the class of graphs that can be constructed from a member $G \in \mathcal{G}$
by adding, for some cliques in $G$, a new vertex adjacent to the clique. 
We observe that $\mathcal{G}'$ is contraction-closed.
Hence, the class $\mathcal{G}''$ defined as all subgraphs of graphs in $\mathcal{G}'$ is minor-closed.

Recall that a graph class $\mathcal{H}$ has \emph{locally bounded treewidth} if there exists a
function $f$ such that for every $H \in \mathcal{H}$, $v \in V(H)$, and integer $r \geq 0$,
the treewidth of the subgraph induced by the radius-$r$ ball around $v$ in $H$ is bounded by $f(r)$.
Eppstein~\cite{Eppstein00} showed that for minor-closed graph classes, the property
of having locally bounded treewidth is equivalent to excluding some apex graph as a minor.

Hence, $\mathcal{G}$ has locally bounded treewidth, say with a function $f$.
Then, $\mathcal{G}'$ has locally bounded treewidth with the function $f'(r) := f(r) + 1$.
Note that for every $G' \in \mathcal{G}'$, $r \geq 0$, subgraph $G''$ of $G'$,
and a vertex $v \in V(G'')$, the ball of radius $r$ centered at $v$ in $G''$
is a subgraph of the ball of radius $r$ centered at $v$ in $G'$.
Hence, $\mathcal{G}''$ has locally bounded treewidth with the same function $f'$.
Consequently, as $\mathcal{G}''$ is minor-closed, it excludes some apex graph by~\cite{Eppstein00}.
This concludes the proof.

\end{document}